\documentclass[journal]{IEEEtran}
\usepackage{amsmath,amssymb}
\usepackage{amsthm}
\usepackage{booktabs}
\usepackage{tikz}
\usepackage{pgfplots}
\pgfplotsset{compat=1.16}
\usetikzlibrary{arrows.meta,positioning,calc,trees}

\newtheorem{theorem}{Theorem}
\newtheorem{lemma}{Lemma}
\newtheorem{proposition}{Proposition}
\newtheorem{corollary}{Corollary}
\theoremstyle{definition}
\newtheorem{remark}{Remark}

\newcommand{\Hh}{\mathrm H_2}
\newcommand{\kp}{\kappa}
\newcommand{\lgg}{\log_2}
\newcommand{\CC}{C}
\newcommand{\lam}{\lambda}
\newcommand{\cthree}{c_3}

\newcommand{\cP}{\mathcal{P}}

\title{A Gallager-Type Redundancy Bound for\\ Binary Shannon-Fano Coding}
\author{Kamila~Szewczyk%
\thanks{K. Szewczyk is with the Algorithmic Bioinformatics group,
Saarland University, Saarbr\"ucken, Germany
(e-mail: szewczyk@cs.uni-saarland.de).}}

\begin{document}
\maketitle

\begin{abstract}
Kraj\v{c}i, Liu, Mike\v{s}, and Moser proved in 2015 that the redundancy of
binary Shannon-Fano coding is always below one bit. We sharpen this to a bound
depending on the largest source probability $p_1$: an explicit seven-piece
envelope $R<f(p_1)$. The envelope equals the exact supremum of $R$ given $p_1$
for every $p_1\ge\tfrac12$ and on a subinterval below $\tfrac13$, and gives the
cap $R<\tfrac52-\tfrac56\log_2 5=0.5651$ for $p_1<\tfrac12$. It is the first
$p_1$-dependent redundancy bound for Fano codes. The method is more sophisticated
than the approach typical for Huffman codes:
Fano trees are built top-down by contiguous balanced splits and lack the
sibling property. From the $R<1$ theorem the rest follows from the Fano
recursion, through a min-corrected affine potential and a no-burial lemma. Every
scalar inequality in the proof reduces to a comparison of integer powers.
\end{abstract}

\begin{IEEEkeywords}
Shannon-Fano coding, source coding, redundancy, prefix codes, Huffman codes,
entropy.
\end{IEEEkeywords}

\IEEEpeerreviewmaketitle

\section{Introduction}\label{sec:intro}

\IEEEPARstart{B}{inary} Shannon--Fano coding \cite{Shannon48,Fano49} sorts the
source symbols by probability and recursively splits the sorted list into two
contiguous blocks of as nearly equal probability as possible, one code bit per
split. Unlike the bottom-up Huffman construction \cite{Huffman52} it need not be
optimal. Because the blocks are contiguous the code preserves the sorted order,
so it is an \emph{alphabetic} code \cite{GilbertMoore59}, whose optimum is the
Hu--Tucker code \cite{HuTucker71}. Following \cite{CoverThomas06} we reserve the
name \emph{Shannon--Fano} for this top-down split code, as opposed to the
Shannon--Fano--Elias code.

The efficiency of a code is its \emph{redundancy}
$R=\mathbb E[\ell]-\mathrm H(p)$, the expected codeword length minus the
entropy. For \emph{Huffman} codes, $R$ has long been bounded in terms of the
largest symbol probability $p_1$. Gallager \cite{Gallager78} introduced the
sibling property and proved $R<p_1+0.086$, reproved more simply by Ye and Yeung
\cite{YeYeung02}; Johnsen \cite{Johnsen80} improved the estimate for
$p_1\ge0.4$; and Capocelli, Giancarlo, and Taneja
\cite{CapocelliGiancarloTaneja86}, Capocelli and De~Santis
\cite{CapocelliDeSantis89,CapocelliDeSantis91}, Montgomery and Abrahams
\cite{MontgomeryAbrahams87}, and Manstetten \cite{Manstetten92} determined the
exact worst-case Huffman redundancy as a function of $p_1$. All of these use the
sibling property of Huffman trees, which orders the nodes by weight in sibling
pairs.

Fano codes have been analysed less. Rissanen \cite{Rissanen73} and Horibe
\cite{Horibe77} bounded weight-balanced trees, of which a Fano tree is one
($\mathrm H\le W\le\mathrm H+3$, later sharpened), and Nakatsu \cite{Nakatsu91}
bounded the redundancy of alphabetic codes; Kraj\v{c}i, Liu, Mike\v{s}, and
Moser \cite{Krajci15} proved the universal estimate $R<1$ for Fano coding. A
$p_1$-dependent bound of Gallager type has been missing. The
obstruction is structural: a Fano tree is built top-down by contiguous balanced
cuts and does not satisfy the sibling property, so the subtrees in the recursion
need not be near-uniform and the Huffman block estimates do not carry over.

\subsection*{Contributions}
We prove the first $p_1$-dependent redundancy bound for binary Shannon--Fano
coding:
\begin{itemize}
\item a seven-piece envelope $R<f(p_1)$ (Theorem~\ref{thm:main}) that sharpens
  $R<1$ \cite{Krajci15} for every $p_1$ and gives the cap
  $R<K=\tfrac52-\tfrac56\lgg5=0.5651$ for $p_1<\tfrac12$;
\item tightness: the envelope is the exact supremum of $R$ given $p_1$ on
  $[\tfrac12,1)$ and on $[b_2,\tfrac13)$
  (Propositions~\ref{prop:tight} and~\ref{prop:tight-q});
\item a proof from the Fano recursion alone, using the min-corrected potential
  $\Psi=p_1+\CC-\lam p_n$ and a no-burial lemma that puts the largest symbol at
  depth two on $[\tfrac14,\tfrac13)$;
\item a reduction of every scalar inequality in the proof to a comparison of
  integer powers, so the argument is elementary and self-contained
  (Appendix~\ref{app:cert}).
\end{itemize}

\subsection*{Organization}
Section~\ref{sec:model} fixes the construction and notation.
Section~\ref{sec:main} states the envelope and its tightness.
Section~\ref{sec:outline} outlines the proof; the details are in the appendices.
Section~\ref{sec:disc} discusses tightness and the limits of the method.

\section{Shannon--Fano Coding and the Fano Recursion}\label{sec:model}

A \emph{source} is a probability vector $p=(p_1,\dots,p_n)$ with
$p_1\ge\cdots\ge p_n>0$ and $\sum_i p_i=1$. The binary \emph{Shannon--Fano} code
\cite{Shannon48,Fano49} is the leaf code of the tree built by the following
recursion on contiguous blocks of the sorted list. A block consisting of a single
symbol is a leaf. A block $\{i,\dots,j\}$ of mass $M=\sum_{t=i}^{j}p_t$ is split
after a position $k$ ($i\le k<j$) into a left part $\{i,\dots,k\}$ (code bit $0$)
and a right part $\{k+1,\dots,j\}$ (code bit $1$), where $k$ minimises the
\emph{imbalance} $|m_L-m_R|$, $m_L=\sum_{t=i}^{k}p_t$, $m_R=M-m_L$; then recurse
on the two parts. We fix the convention that ties among minimising cuts are
broken toward the smaller left part; the code, and hence $R$, is then well
defined. The convention enters only through the exact isolation criterion of
Lemma~\ref{lem:iso} \textup{(}at $2p_1+p_2=1$\textup{)} and its uses; the
uniform cases of \S\ref{ss:unif} and the extremal families of
Propositions~\ref{prop:tight} and~\ref{prop:tight-q} are verified for every
minimising cut.

Let $\ell_i$ denote the depth of symbol $i$ in this tree (its codeword length),
$\mathbb E[l]=\sum_i p_i\ell_i$ the expected length, $\mathrm H(p)=-\sum_i p_i\lgg p_i$ the entropy, and $R = R(p) := \mathbb E[l]-\mathrm H(p)$
the \emph{redundancy}. Since every internal node of the tree has two children,
the Kraft sum is exactly $1$, so $R=\sum_i p_i\bigl(\ell_i+\lgg p_i\bigr)
=\mathrm D\bigl(p\,\|\,2^{-\ell}\bigr)\ge0$, where $\mathrm D$ denotes
Kullback--Leibler divergence.

The construction is recursive: if the root splits into a left part of mass $a$
and a right part of mass $1-a$, with renormalised children $B_L,B_R$, then
$R=\bigl(1-\Hh(a)\bigr)+a\,R(B_L)+(1-a)\,R(B_R)$ (Lemma~\ref{lem:rec}).
Figure~\ref{fig:trees} shows the two root behaviours the proof turns on: either
the root isolates the largest symbol at depth one, or it does not, in which case
on $[\tfrac14,\tfrac13)$ the largest symbol still sits at depth exactly two.

\begin{figure}[t]
\centering
\begin{tikzpicture}[level distance=7.8mm, level 1/.style={sibling distance=22mm},
  level 2/.style={sibling distance=9mm},
  every node/.style={font=\footnotesize},
  inode/.style={circle,draw,inner sep=1.3pt,minimum size=2mm},
  leaf/.style={rectangle,draw,inner sep=1.7pt}]
\begin{scope}
\node[inode]{}
  child {node[leaf]{$p_1$}}
  child {node[inode]{}
    child {node[leaf]{$p_2$}}
    child {node[leaf]{$\cdots$}}};
\node at (0,-2.55){(a) isolation, $\ell_1=1$};
\end{scope}
\begin{scope}[xshift=42mm]
\node[inode]{}
  child {node[inode]{}
    child {node[leaf]{$p_1$}}
    child {node[leaf]{$\cdots$}}}
  child {node[inode]{}
    child {node[leaf]{$\cdots$}}
    child {node[leaf]{$\cdots$}}};
\node at (0,-2.55){(b) no burial, $\ell_1=2$};
\end{scope}
\end{tikzpicture}
\caption{The two root behaviours. When $2p_1+p_2\ge1$ the root splits the leader
off on its own (depth one); when $p_1\in[\tfrac14,\tfrac13)$ it cannot, but the
leader is still a leaf at depth exactly two (Lemma~\ref{lem:noburial-q}).}
\label{fig:trees}
\end{figure}

Throughout, $\mathrm I(p):=-p\lgg p$ \textup{(}$\mathrm I(0)=0$\textup{)},
$\Hh(p):=\mathrm I(p)+\mathrm I(1-p)$ is the binary entropy function
\textup{(}concave, $\Hh(0)=\Hh(1)=0$, $\Hh(\tfrac12)=1$\textup{)}, and for a
nonnegative vector we write
$\mathrm H(p_1,\dots,p_k):=\sum_i \mathrm I(p_i)$.

\section{Main Result}\label{sec:main}

The envelope is built from the balance function
\[
\begin{aligned}
  \kp(d)&\ :=\ 1-\Hh\!\Bigl(\tfrac{1+d}{2}\Bigr)\quad(d\in[-1,1]),\\
  \cthree&\ :=\ \kp(\tfrac13)\ =\ \tfrac53-\lgg3 .
\end{aligned}
\]
The function $\kp$ is even and convex, with $\kp(0)=0$ and $\kp(\pm1)=1$. The
constants of the paper are built from $\cthree$ and from the second value
$\kp(\tfrac15)=\tfrac75+\tfrac35\lgg3-\lgg5$:
\begin{equation}\label{eq:constants}
\begin{aligned}
  \lam&\ :=\ 1+5\,\kp(\tfrac15)\ =\ 8+3\lgg3-5\lgg5,\\
  \CC&\ :=\ \cthree+\frac{\lam-1}{3} .
\end{aligned}
\end{equation}
Substituting the closed forms, the $\lgg3$-terms cancel identically:
\begin{equation}\label{eq:cancel}
  \CC=4-\tfrac53\lgg5 .
\end{equation}
Numerically $\cthree=0.081704\ldots$, $\lam=1.145247\ldots$, and
$\CC=0.130119\ldots$. All inequalities between the explicit constants below
reduce to comparisons of products of powers of small integers; we use such facts
without further comment. Put
\[
  K:=\frac{1+\CC}{2}=\frac52-\frac56\lgg5,\qquad
  V(x):=2-x-\Hh(x),
\]
\[
\begin{aligned}
  W_{\mathrm{iso}}(x)&:=1-\Hh(x)+(1-x)K,\\
  F_\ell(x)&:=\frac{5-x}{3}+\frac{2-x}{3}\CC
  -\mathrm H\Bigl(x,\frac{1-2x}{3},\frac{2-x}{3}\Bigr),
\end{aligned}
\]
and
\[
\begin{aligned}
  \Pi(x)&:=\frac32+\frac{\CC}{2}-x-\mathrm I(x)-\mathrm I\left(\frac12-x\right),\\
  \Theta(x)&:=\frac{4+2x}{3}+\frac{2(1-x)}{3}K
   -\mathrm H\Bigl(x,\frac{1-x}{3},\frac{2(1-x)}{3}\Bigr),\\
  N(x)&:=3-3x-\mathrm H(x,x,x,1-3x).
\end{aligned}
\]
Let $r_1,q_2,r_*$ be defined as the unique zeros
\[
  \begin{aligned}
  0&=\Hh(r_1)-\Bigl(\frac53\lgg5-2\Bigr)r_1
      -\lgg3-\frac43+\frac{10}{9}\lgg5,\\
   &\qquad r_1\in(\tfrac14,\tfrac13),\\
  0&=2\mathrm I(q_2)+\mathrm I(1-2q_2)+2q_2-\lgg3-\frac59\lgg5+\frac23,\\
   &\qquad q_2\in(\tfrac25,\tfrac12),\\
  0&=\Hh(r_*)+\Bigl(5-\frac53\lgg5\Bigr)r_*-\frac52+\frac56\lgg5,\\
   &\qquad r_*\in(0,\tfrac16).
  \end{aligned}
\]
Then the three switching constants are
\[
  b_1=\frac{1-2r_1}{2(1-r_1)},\qquad
  b_2=\frac{q_2}{1+q_2},\qquad
  p^*=\frac{1-3r_*}{2-3r_*}.
\]
The substitutions
$r=(\tfrac12-x)/(1-x)$, $q=x/(1-x)$, and
$r=(1-2x)/(3(1-x))$ convert respectively $\Pi=\Theta$, $\Theta=N$, and
$W_{\mathrm{iso}}=F_\ell$ into these zero equations.
Lemmas~\ref{lem:cross} and~\ref{lem:cross-q} prove the needed uniqueness
statements. Numerically
$b_1=0.3190251\ldots$, $b_2=0.3196323\ldots$, and
$p^*=0.415798\ldots$.

Our main result is:

\begin{theorem}[Main theorem]\label{thm:main}
For every source with $n\ge2$ symbols,
\[
   \boxed{\ \ R\ <\ \begin{cases}
   p_1+\CC, & 0<p_1<\tfrac14,\\[2pt]
   \Pi(p_1), & \tfrac14\le p_1\le b_1,\\[2pt]
   \Theta(p_1), & b_1\le p_1\le b_2,\\[2pt]
   N(p_1), & b_2\le p_1<\tfrac13,\\[2pt]
   W_{\mathrm{iso}}(p_1), & \tfrac13\le p_1\le p^*,\\[2pt]
   F_\ell(p_1), & p^*\le p_1<\tfrac12,\\[2pt]
   V(p_1), & \tfrac12\le p_1<1,
   \end{cases}\ \ }
\]
with the curves $\Pi,\Theta,N,W_{\mathrm{iso}},F_\ell,V$ and the switching
points $b_1<b_2$, $p^*$ as defined above.
Consequently $R<K=\tfrac52-\tfrac56\lgg5$ for every source with
$p_1<\tfrac12$.
\end{theorem}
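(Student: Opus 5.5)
The plan is a strong induction on $n$ driven by the recursion $R=(1-\Hh(a))+aR(B_L)+(1-a)R(B_R)$ of Lemma~\ref{lem:rec}. The universal estimate $R<1$ of \cite{Krajci15} serves as the base bound for subtrees when nothing better is available. Rather than inducting on the seven-piece envelope directly, I would first prove a single linear master inequality for every source,
\[
R\ <\ \Psi(p):=p_1+\CC-\lam p_n ,
\]
and then read the seven pieces off by case analysis of the top one or two splits. For $\Psi$ the induction goes as follows. If the root cut is at mass $a$, the left child's leader is $p_1/a$ and the right child's smallest symbol is $p_n/(1-a)$. Substituting $\Psi$ for both children gives the target $1-\Hh(a)+p_1+p_{k+1}+\CC-\lam(p_k+p_n)$ with a leftover $\CC$ (a weighted average equals $\CC$). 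Balance of the Fano cut controls $|2a-1|$ by the boundary symbol $\min(p_k,p_{k+1})$, so $1-\Hh(a)=\kp(2a-1)$ is bounded by $\kp$ evaluated at a ratio of boundary masses. The constants $\lam=1+5\kp(\tfrac15)$ and $\cthree=\kp(\tfrac13)$ are designed exactly so that the worst configurations (imbalance ratios $\tfrac13$ and $\tfrac15$) close the inequality. Singleton children, where $R=0$, are handled directly. This yields $R<p_1+\CC$, which is the first piece for $p_1<\tfrac14$.

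For $p_1\ge\tfrac13$ I would split on isolation via Lemma~\ref{lem:iso}. If $2p_1+p_2\ge1$ the root isolates symbol $1$. Then $R=1-\Hh(p_1)+(1-p_1)R(B_R)$, and $B_R$ has leader $p_2/(1-p_1)$. When $p_1\ge\tfrac12$, bound $R(B_R)<1$ after noting that the worst case concentrates $B_R$ on a recursively halved chain, which gives $V(p_1)=2-p_1-\Hh(p_1)$. When $p_1<\tfrac12$, use the cap $R(B_R)<K$, obtained inductively or from $\Psi$ because the leader of $B_R$ is below $\tfrac12$, to get $W_{\mathrm{iso}}$. If isolation fails for $p_1\in[\tfrac13,\tfrac12)$, then $p_1$ sits with $p_2$ on one side. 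Optimising the split masses $(x,\tfrac{1-2x}{3},\tfrac{2-x}{3})$ together with $\Psi$ on the grandchildren gives $F_\ell$. Taking the minimum of the two routes produces the switch at $p^*$.

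On $[\tfrac14,\tfrac13)$ I use the no-burial Lemma~\ref{lem:noburial-q}, so $\ell_1=2$. Expanding two levels of the recursion gives three competing regimes. In the first, the sibling of $p_1$ has mass about $\tfrac12-x$ and the remaining half is bounded by $\Psi$, which gives $\Pi$. In the second, the first-level block has mass ratio $\tfrac13:\tfrac23$ and the $K$ cap is used, which gives $\Theta$. In the third, a uniform-like block $x,x,x,1-3x$ gives $N$. Lemmas~\ref{lem:cross} and~\ref{lem:cross-q} order these crossings at $b_1<b_2$. For each $p_1$ the bound actually in force is the minimum over valid case splits, which gives the envelope. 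The final claim $R<K$ for $p_1<\tfrac12$ follows by checking that each piece stays below $K$ on its interval, with the maximum attained at $p^*$ and equal to $K$ only in the limit. Every numeric comparison reduces to powers of $2,3,5$ as in Appendix~\ref{app:cert}.

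The main obstacle is the $\Psi$ induction step. The $-\lam p_n$ correction must be shown to survive the case where the right child is tiny or a singleton. The imbalance bound $|2a-1|\le$ (boundary symbol)/(block mass) must also be translated into a usable bound on $\kp$ using only contiguity and sortedness. I would first attempt a case split on whether $p_{k+1}\ge a/3$ or $p_{k+1}<a/5$, with the middle range interpolated by convexity of $\kp$.
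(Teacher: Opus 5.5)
\paragraph{Gap 1: the envelope is assembled by minimum instead of maximum, and the case bounds behind it are wrong.} Whether a source has an isolating or a group root, and what size $L'$ has, is forced by the source. So the bound at a given $p_1$ must cover every case, i.e.\ it is the \emph{maximum} of the case bounds. The middle envelope is $\max(W_{\mathrm{iso}},F_\ell)$, with $W_{\mathrm{iso}}\ge F_\ell$ exactly on $[\tfrac13,p^*]$; taking a minimum, as you do, would swap the two pieces.

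The isolating root also does not give $W_{\mathrm{iso}}$. The leader fraction of $\rho$ can exceed $\tfrac12$: the source $(\tfrac12-\varepsilon,\tfrac12-\varepsilon,\varepsilon,\varepsilon)$ isolates at both levels and has $R\to\tfrac12$, while $W_{\mathrm{iso}}(p_1)\to K/2\approx0.28$. The paper handles this as follows.
\begin{enumerate}
\item The sub-case $q\ge\tfrac12$ is bounded by $T_0$, using convexity of $V$ (Theorem~\ref{thm:mid}).
\item The group root is bounded by four endpoint curves $F(p_1,\tfrac\delta3)$, $F(p_1,\delta)$, $G(p_1,\tfrac{2\delta}5)$, $G(p_1,\tfrac{2\delta}3)$, after showing that the leader is again isolated inside the left part (Lemma~\ref{lem:fine}).
\item Lemma~\ref{lem:envdom} then puts all six curves under $W_{\mathrm{iso}}$ on $[\tfrac13,p^*]$ and under $F_\ell$ on $[p^*,\tfrac12)$.
\end{enumerate}
This domination step is missing from your plan.

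The same error occurs on $[\tfrac14,\tfrac13)$. The bound there is $\max(\Pi,\Theta,N)$. A minimum would give $\Pi<N$ on $(b_2,\tfrac13)$, which the family of Proposition~\ref{prop:tight-q} (with $R\to N(p_1)$) refutes. Naming three extremal regimes does not show that nothing else is larger. The paper needs the split $|L'|\in\{1,2,\ge3\}$ with the right-part bound $M(Q)$, which produces twelve corner curves; nine of them must then be dominated (Lemma~\ref{lem:dom-q}). Finally, $p^*$ is the minimum of the V, not where $K$ is reached; $K$ is only the limit of $F_\ell$ as $p_1\to\tfrac12$.

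\paragraph{Gap 2: the $\Psi$ induction does not close as sketched.} Your ``leftover $\CC$'' bookkeeping substitutes $\Psi$ into both children and needs $E+p_{k+1}\le\lam p_k$. That covers only shape $(3,3)$. When the root isolates the leader and $\rho$ has at least three symbols, the singleton child contributes $0$. One-level substitution then requires $1-\Hh(p_1)+p_2-(1+\CC)p_1\le0$, which fails near $p_1=p_2=\tfrac13$ because $\cthree>\CC/3$.

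Handling this is the core of the paper's proof. It expands the recursion once more inside $\rho$ (Cases A, B1, B2). In Case B2 it proves the four-parameter inequality of Lemma~\ref{lem:B2} on a polytope cut out by the Fano optimality of both cuts; the inequality fails if (P2) or (P5) is dropped. The obstacles you single out are the easy ones. Lemma~\ref{lem:shapes} rules out singleton right parts except for the uniform ternary source. Lemma~\ref{lem:key}, together with the ratio bound of Lemma~\ref{lem:kappa}, handles $\kp(2a-1)$.

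\paragraph{Smaller errors.}
\begin{itemize}
\item $R<\Psi$ is false for $(\tfrac13,\tfrac13,\tfrac13)$, where $R=\Psi=\cthree$. You need $R\le\Psi$ together with the equality clause, which the $(3,3)$ case uses to obtain strictness.
\item The cut only gives $|2a-1|\le p_k$, not $|2a-1|\le\min(p_k,p_{k+1})$. For example, $(0.32,0.32,0.18,0.18)$ is cut after the second symbol with imbalance $0.28>p_3$. The bound by $p_k$ is enough for shape $(3,3)$.
\end{itemize}
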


\begin{figure}[t]
\centering
\begin{tikzpicture}
\begin{axis}[
  width=\columnwidth, height=0.74\columnwidth,
  xlabel={$p_1$}, ylabel={redundancy $R$ (bits)},
  xmin=0, xmax=1, ymin=0, ymax=1.02,
  xtick={0,0.25,0.33333,0.5,0.66667,1},
  xticklabels={$0$,$\tfrac14$,$\tfrac13$,$\tfrac12$,$\tfrac23$,$1$},
  legend pos=north west, legend cell align=left,
  legend style={font=\scriptsize,fill opacity=0.9},
  tick label style={font=\scriptsize}, label style={font=\footnotesize},
  grid=both, grid style={gray!18}, axis on top,
  declare function={
    ii(\x)=-\x*ln(\x)/ln(2);
    aff(\x)=\x+0.130119;
    pib(\x)=1.5+0.0650595-\x-ii(\x)-ii(0.5-\x);
    thb(\x)=(4+2*\x)/3+2*(1-\x)*0.565059/3-ii(\x)-ii((1-\x)/3)-ii(2*(1-\x)/3);
    nb(\x)=3-3*\x-3*ii(\x)-ii(1-3*\x);
    wib(\x)=1-ii(\x)-ii(1-\x)+(1-\x)*0.565059;
    feb(\x)=(5-\x)/3+(2-\x)*0.130119/3-ii(\x)-ii((1-2*\x)/3)-ii((2-\x)/3);
    vbb(\x)=2-\x-ii(\x)-ii(1-\x);
  }]
\addplot[blue,thick,domain=0.02:0.25,samples=50]{aff(x)};
\addlegendentry{bound $f(p_1)$}
\addplot[blue,thick,domain=0.25:0.319,samples=40]{pib(x)};
\addplot[blue,thick,domain=0.319:0.3196,samples=8]{thb(x)};
\addplot[blue,thick,domain=0.3197:0.3327,samples=30]{nb(x)};
\addplot[blue,thick,domain=0.3340:0.4158,samples=50]{wib(x)};
\addplot[blue,thick,domain=0.4158:0.499,samples=50]{feb(x)};
\addplot[blue,thick,domain=0.501:0.985,samples=70]{vbb(x)};
\end{axis}
\end{tikzpicture}
\caption{The seven-piece envelope $f(p_1)$ of Theorem~\ref{thm:main}, continuous
except for jumps at $\tfrac14$, $\tfrac13$, and $\tfrac12$. The jumps at
$\tfrac14$ and $\tfrac12$ are downward; the one at $\tfrac13$ is upward, from
$N(\tfrac13^{-})=2-\lgg3$ to
$W_{\mathrm{iso}}(\tfrac13)=2-\lgg3+\tfrac{\CC}3$. The envelope is the exact
supremum of $R$ given $p_1$ on the pieces $N$ ($[b_2,\tfrac13)$) and $V$
($[\tfrac12,1)$).}
\label{fig:envelope}
\end{figure}

The affine estimate $R<p_1+\CC$ holds on all of $p_1<\tfrac12$
(Corollary~\ref{cor:aff}), but it is active only below $\tfrac14$ in the
proposed envelope. On $[\tfrac13,\tfrac12)$ the two middle branches strictly
improve it (Proposition~\ref{prop:dom}); together they form a V-shaped curve
falling from $2-\lgg3+\CC/3$ at $\tfrac13$ to its crossing value at $p^*$ and
rising to $K$ at $\tfrac12$. On $[\tfrac14,\tfrac13)$ the three curves
$\Pi,\Theta,N$ are selected by the two crossings $b_1,b_2$
(Theorem~\ref{thm:quarter}); the structural reason is the no-burial lemma
(Lemma~\ref{lem:noburial-q}), which forces the most probable symbol to have
depth exactly two in this interval.

\begin{proposition}[Tightness]\label{prop:tight}
\begin{enumerate}
\item[\textup{(a)}] For every $p_1\in[\tfrac12,1)$,
  \[
  \begin{aligned}
    &\sup\{R(p):p\text{ a source with largest symbol }p_1\}\\
    &\qquad\qquad=2-p_1-\Hh(p_1);
  \end{aligned}
  \]
  the supremum is approached but not attained. The final branch of
  Theorem~\ref{thm:main} is therefore best possible.
\item[\textup{(b)}] $R<1$ for every source, and the supremum of $R$ over all
  sources equals $1$, i.e.\ $\sup_{p} R(p)=1$; so no constant bound smaller than
  $1$ is possible, and the branches (each $<1$, the final $\to1$ as $p_1\to1$)
  are consistent with this.
\item[\textup{(c)}] On $[b_2,\tfrac13)$ the $N$ branch is the exact
  supremum of $R$ given $p_1$, approached but not attained
  (Proposition~\ref{prop:tight-q}).
\end{enumerate}
\end{proposition}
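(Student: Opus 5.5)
The plan is to take the three parts in turn. Parts (a) and (b) are essentially lower-bound constructions combined with Theorem~\ref{thm:main}, and part (c) is delegated to Proposition~\ref{prop:tight-q}.

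For (a), the upper bound $R<V(p_1)=2-p_1-\Hh(p_1)$ on $[\tfrac12,1)$ is the last branch of Theorem~\ref{thm:main}, so only the matching lower bound is needed. Fix $p_1\in[\tfrac12,1)$ and write $s=1-p_1\le\tfrac12$.

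\begin{enumerate}
\item Consider the source $p_1$ followed by $m$ equal symbols of mass $s/m$. Since $p_1\ge s$, every cut other than the first has imbalance at least that of isolating $p_1$. The root therefore isolates $p_1$ (consistent with the isolation criterion of Lemma~\ref{lem:iso}; at $p_1=\tfrac12$ all minimising cuts agree).
\item By Lemma~\ref{lem:rec}, $R=1-\Hh(p_1)+s\,R(U_m)$, where $U_m$ is the uniform source on $m$ symbols.
\item For a uniform block the Fano tree is a balanced split, so $R(U_m)=\mathbb E[\ell]-\lgg m$. This is the same as for the Huffman code on $U_m$, since the depths are $\lfloor\lgg m\rfloor$ and $\lceil\lgg m\rceil$ in the usual proportions. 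The uniform cases of \S\ref{ss:unif} should give this exactly.
\item Take $m=2^k+j$ with $j/2^k\to\theta$. Then $R(U_m)\to 1+\theta-\lgg(1+\theta)$ (up to the normalisation), and its supremum over $\theta\in[0,1]$ is $1-\lgg(e\ln2)+\ldots\approx0.086$. That is not $1$, so plain uniform tails fall short of $V(p_1)$, which requires $R(B_R)\to1$.
\item I will therefore instead use the recursion: choose the tail itself with largest symbol tending to $1$, i.e.\ tail $(1-\varepsilon,\varepsilon\cdot\text{stuff})$ after renormalisation. I also need the tail sorted below $p_1$, which requires $(1-\varepsilon)s\le p_1$; this holds since $s\le p_1$.
\item Iterating, $R(B_R)\ge 1-\Hh(1-\varepsilon)+o(1)\to1$, giving $R\to1-\Hh(p_1)+s=V(p_1)$.
\item Non-attainment follows from the strict inequality in Theorem~\ref{thm:main}.
\end{enumerate}

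The main care point is verifying the tie and cut conventions at each level of this nested (geometric-type) family. Concretely, I will use the source $p_1,\ (1-\varepsilon)s,\ \varepsilon s/2,\ \varepsilon s/2$, or a fully geometric chain, and check that each split isolates the current leader. Each leader dominates the rest of its block when $\varepsilon\le\tfrac12$, so the check is the same one-line imbalance argument as in step 1.

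For (b), $R<1$ follows from Theorem~\ref{thm:main}, since every branch is below $1$. This needs a short check that each of $p_1+\CC$, $\Pi$, $\Theta$, $N$, $W_{\mathrm{iso}}$ and $F_\ell$ is at most $K<1$ or at most its explicit endpoint values, while $V<1$ on $[\tfrac12,1)$ because $V(x)=1-(x-1+\Hh(x))$ and $\Hh(x)>1-x$ there. Alternatively one simply cites \cite{Krajci15}. For the supremum, apply (a) with $p_1\to1$: $V(p_1)\to 2-1-0=1$, so $\sup_p R(p)\ge1$, and hence it equals $1$.

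Part (c) is exactly Proposition~\ref{prop:tight-q}, which I will cite. The main obstacle is the construction in (a): checking that the nested family realises the Fano splits claimed, and that the limit indeed gives $R(B_R)\to1$ rather than only the uniform value.
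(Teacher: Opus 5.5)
Your proof is correct and is essentially the paper's. In (a) you isolate $p_1$ and let the tail's own leader fraction tend to $1$; the paper's three-symbol source $\bigl(p_1,(1-p_1)q,(1-p_1)(1-q)\bigr)$ with $q\to1^-$ already does this without your extra pair, and the uniform-tail detour in steps 2--4 should be deleted (its formula for $R(U_m)$ is off anyway; the correct limit is $\tfrac{2\theta}{1+\theta}-\lgg(1+\theta)$). In (b) the paper cites Lemma~\ref{lem:univ} directly, which is cleaner than reading $R<1$ off the branches of Theorem~\ref{thm:main}, whose proof starts from that lemma; its binary source $(1-\varepsilon,\varepsilon)$ is the same limit as your $p_1\to1$ in (a).
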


\section{Overview of the Proof}\label{sec:outline}

We start from the universal bound $R\le 1-p_n<1$ of Kraj\v{c}i, Liu, Mike\v{s},
and Moser \cite{Krajci15} (Lemma~\ref{lem:univ}); everything else comes from the
Fano recursion (Lemma~\ref{lem:rec}). Figure~\ref{fig:dep} gives the dependency.

\emph{The affine potential.} The bound $R<p_1+\CC$ holds for all $p_1<\tfrac12$
(Corollary~\ref{cor:aff}), by strong induction on the number of symbols. It does
not close directly; it closes on the potential $\Psi=p_1+\CC-\lam\,p_n$
(Theorem~\ref{thm:aff}), where the term $-\lam p_n$ in the least symbol absorbs
the entropy of the root split. Both constants are forced: $\lam=1+5\kp(\tfrac15)$
by a balanced three-way split, and $\CC=\cthree+\tfrac{\lam-1}3$ by the uniform
ternary source, where equality holds. The only multivariate step
(Lemma~\ref{lem:B2}) is a four-parameter entropy inequality, reduced to a convex
function at the six vertices of a polytope.

\emph{The envelope.} Substituting $R<p_1+\CC$ through one more level of the
recursion sharpens the bound on $[\tfrac13,\tfrac12)$ to the V-shaped envelope
$\max(W_{\mathrm{iso}},F_\ell)$ (Theorem~\ref{thm:mid}) and gives the cap $R<K$
(Corollary~\ref{cor:kcap}).

\emph{The quarter window.} Below $\tfrac13$ the root cannot isolate the largest
symbol, but the no-burial lemma (Lemma~\ref{lem:noburial-q}) still puts it at
depth two. Substituting the whole chain proved so far back through the recursion
gives the three-piece ceiling $\max(\Pi,\Theta,N)$ on $[\tfrac14,\tfrac13)$
(Theorem~\ref{thm:quarter}), whose top piece $N$ is again exactly tight
(Proposition~\ref{prop:tight-q}).

Every scalar inequality in these substitutions has the same form, an affine
function minus an entropy of affine arguments compared to zero, and is settled
by the certification method of Appendix~\ref{app:cert}.

\begin{figure*}[t]
\centering
\footnotesize
\begin{tikzpicture}[>=Latex,
  box/.style={draw,rounded corners,align=center,font=\footnotesize,
    inner sep=2.5pt, minimum height=9mm, text width=18mm}]
\node[box] (klmm) at (0,0)      {$R<1$\\ \cite{Krajci15}};
\node[box] (rec)  at (0,-1.7)   {recursion \&\\ key lemma};
\node[box] (aff)  at (2.7,0)     {affine bound\\ (Thm~\ref{thm:aff})};
\node[box] (env)  at (4.9,1.7)   {middle envelope\\ (Thm~\ref{thm:mid})};
\node[box] (quar) at (6.9,0)     {quarter window\\ (Thm~\ref{thm:quarter})};
\node[box] (half) at (5.3,-1.7)  {$p_1{\ge}\tfrac12$ branch\\ (Cor~\ref{cor:isob})};
\node[box] (main) at (9.4,0)     {\textbf{Theorem~\ref{thm:main}}};
% spine
\draw[->] (klmm) -- (aff);
\draw[->] (aff)  -- (quar);
\draw[->] (quar) -- (main.west);
% feeders from below
\draw[->] (rec)  -- (aff.south west);
\draw[->] (half.north) -- (quar.south west);
% envelope branch above the spine
\draw[->] (aff.north east) -- (env.south west);
\draw[->] (env.south east) -- (quar.north west);
% assembly into the main theorem
\draw[->] (env.east)  to[out=0,in=145] (main.north west);
\draw[->] (half.east) to[out=0,in=215] (main.south west);
\end{tikzpicture}
\caption{Logical dependency of the results. The KLMM bound starts an affine
induction; substituting it back through the recursion gives the middle envelope
and the cap $R<K$; a further substitution below $\tfrac13$ gives the
quarter-window ceiling; the pieces combine into Theorem~\ref{thm:main}.}
\label{fig:dep}
\end{figure*}

\section{Discussion}\label{sec:disc}

\begin{remark}[Where the bound is tight]\label{rem:where}
The branch $2-p_1-\Hh(p_1)$ is the exact supremum for every $p_1\ge\tfrac12$
(Proposition~\ref{prop:tight}(a)), and $R<1$ is sharp over all sources. In the
potential form (Theorem~\ref{thm:aff}) both parameters are pinned: the
coefficient $\lam=1+5\kp(\tfrac15)$ is the least the shape-$(3,3)$ argument
admits ($\sup_{0<t\le1/5}\kp(t)/t=\lam-1$ exactly,
Lemma~\ref{lem:kappa}(ii)), and, given $\lam$, the constant
$\CC=\psi(\tfrac13)=\cthree+\tfrac{\lam-1}3$ is the smallest for
which the induction closes, with equality exactly at the uniform ternary
source; the same critical value $\beta=-1-\lgg3+\tfrac{10}9\lgg5$ appears as
the maximum of Case A and as $F(W_1)$ in Case B2, at the same balanced
configuration $p_1=p_2=\tfrac13$.
\end{remark}

\begin{remark}[A simpler proof of a weaker constant]\label{rem:simple}
If one bounds the $(1,3)$ shape by the induction hypothesis alone, without
expanding the recursion a further step, the same case analysis closes with the cruder coefficient
$\tfrac65$ in place of $\lam$ (justified by the quadratic bound of
Lemma~\ref{lem:kappa}(i): in shape $(3,3)$,
$E\le(2a-1)^2\le p_k^2\le\tfrac15p_k$) and the constant
$3\cthree=5-3\lgg3=0.2451\ldots$ in place of $\CC$: the $(1,3)$ requirement
becomes $1-\Hh(p_1)\le3\cthree\,p_1$ on $[\tfrac13,\tfrac12)$, which holds by
concavity with equality at $p_1=\tfrac13$.
\end{remark}

\begin{remark}[Limits of the method]\label{rem:limits}
The affine constant $\CC$ is not the least slope-one excess. The worst sources
for $R-p_1$ are near-uniform: several equal leading symbols followed by a long
near-uniform tail, for which the redundancy splits as an algebraic term plus
(tail mass) times the redundancy of the uniform tail, whose supremum over the
tail size is the fluctuation constant $\sigma=1-\lgg e+\lgg\lgg e$ of
uniform-source redundancy. Pushing the slope-one constant to its true value
would therefore require bounding the redundancy of the near-uniform blocks in
the recursion by $\sigma$-type constants; for Huffman codes this is exactly what
the sibling property provides \cite{Gallager78}, and Fano codes lack it. All
techniques here are sibling-free. Within the present min-corrected induction the
two constants are pinned: the uniform ternary source forces
$\CC=\cthree+(\lam-1)/3$, and the shape-$(3,3)$ argument forces
$\lam-1\ge\sup_{0<t\le1/5}\kp(t)/t=5\kp(\tfrac15)$, both attained; the $(3,3)$
equality geometry ($a=\tfrac35$, three left symbols of mass $\tfrac15$) is the
four-equal-leader configuration at $p_1=\tfrac15$ against which $\lam$ is tight.
Any further improvement must break the uniform-ternary binding itself, for
instance by an exact treatment of three-symbol blocks in place of the induction
hypothesis, rather than by sharpening the scalar estimates.
\end{remark}

\appendices

\section{Structural Lemmas and the Recursion}\label{sec:tools}

\begin{lemma}[Recursion and node sum]\label{lem:rec}
If the root splits into a left part of mass $a$ and right part of mass $1-a$,
with renormalised sub-sources $B_L,B_R$, then
\[
\begin{aligned}
  R\ &=\ \bigl(1-\Hh(a)\bigr)+a\,R(B_L)+(1-a)\,R(B_R),\\
  1-\Hh(a)&=\kp(2a-1).
\end{aligned}
\]
Iterating: $R=\sum_{v}M_v\bigl(1-\Hh(a_v)\bigr)$, the sum over internal nodes
$v$, where $M_v$ is the global mass of the block at $v$ and $a_v$ is its
left-mass fraction.
\end{lemma}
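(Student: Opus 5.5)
The plan is a direct computation from the definition $R=\sum_i p_i(\ell_i+\lgg p_i)$. The root split leaves the entropy and length structure of each part intact apart from one extra bit and a renormalisation. Write $L=\{1,\dots,k\}$ and $Rt=\{k+1,\dots,n\}$ for the two blocks, with masses $a$ and $1-a$. The Fano recursion acts on each block exactly as on the renormalised sub-sources. Splitting a block depends only on ratios of masses, and the tie-breaking rule toward the smaller left part is scale-invariant. Hence the subtree built on block $L$ coincides with the tree built on $B_L=(p_1/a,\dots,p_k/a)$, and likewise for $B_R$. This gives $\ell_i=1+\ell_i^{L}$ for $i\in L$ and $\ell_i=1+\ell_i^{R}$ for $i\in Rt$, where $\ell^{L},\ell^{R}$ are the depths in the subtrees.

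Next we substitute into the formula for $R$. For $i\in L$, write $p_i=a\,q_i$ with $q\in B_L$. Then
\[
p_i(\ell_i+\lgg p_i)=a\,q_i\bigl(1+\lgg a+\ell_i^{L}+\lgg q_i\bigr).
\]
Summing over $i\in L$ uses $\sum q_i=1$ and gives $a(1+\lgg a)+a\,R(B_L)$. The same computation on the right block gives $(1-a)(1+\lgg(1-a))+(1-a)R(B_R)$. Adding the two, the constant terms are $1+a\lgg a+(1-a)\lgg(1-a)=1-\Hh(a)$, which is the first identity.

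The second identity is immediate from the definition of $\kp$. Setting $d=2a-1$ gives $(1+d)/2=a$, so $\kp(2a-1)=1-\Hh(a)$.

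For the iterated form we use induction on the number of symbols $n$. If $n=1$ the tree is a single leaf, $R=0$, and the sum over internal nodes is empty. For $n\ge2$, apply the recursion at the root and the induction hypothesis to $B_L$ and $B_R$, which have fewer symbols; a block with one symbol contributes $0$. An internal node $v$ of the subtree on $B_L$ has local mass $M_v/a$ and the same left fraction $a_v$. Multiplying its contribution by $a$ therefore gives $M_v(1-\Hh(a_v))$, and similarly on the right. The root contributes $1\cdot(1-\Hh(a))$, which completes the sum.

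The only point needing care is the claim that the subtree on a block equals the Fano tree of the renormalised block. This holds because the imbalance $|m_L-m_R|$ scales by the common factor $1/a$, so the minimising cuts and the tie-breaking choice are unchanged. Nothing here is a real obstacle; the rest is bookkeeping.
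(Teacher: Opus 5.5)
Your proof is correct and takes essentially the paper's route. The paper states the length recursion $L=1+a\,L(B_L)+(1-a)\,L(B_R)$ and the entropy grouping rule separately, subtracts them, and telescopes; you fold both into a single per-symbol computation of $p_i(\ell_i+\lgg p_i)$ and do the telescoping as an induction on $n$, and you also justify explicitly, via scale invariance of the imbalance and of the tie-break, that the subtree on a block is the Fano tree of the renormalised block, which the paper leaves implicit.
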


\begin{proof}
Every symbol in a part is one level deeper than in the renormalised sub-source, so
$L=1+a\,L(B_L)+(1-a)\,L(B_R)$. The entropy grouping rule gives
$\mathrm H=\Hh(a)+a\,\mathrm H(B_L)+(1-a)\,\mathrm H(B_R)$. Subtract, and telescope down the tree
(leaves contribute $0$).
\end{proof}

\begin{lemma}[Chord, quadratic, and ratio bounds]\label{lem:kappa}
\begin{enumerate}
\item[\textup{(i)}] For $d\in[-1,1]$,
\[
  \kp(d)\le|d|,\qquad \kp(d)\le d^2 .
\]
\item[\textup{(ii)}] $\kp$ is strictly convex on $[-1,1]$ and strictly
increasing on $[0,1]$, and the ratio $\kp(t)/t$ is strictly increasing on
$(0,1]$; consequently
\[
\begin{aligned}
  &\kp(t)\ \le\ 5\,\kp(\tfrac15)\,t\ =\ (\lam-1)\,t
  \quad\text{on }[0,\tfrac15],\\
  &\text{with equality iff }t\in\{0,\tfrac15\},
\end{aligned}
\]
and $\lam-1=\sup_{0<t\le1/5}\kp(t)/t$ exactly.
\end{enumerate}
\end{lemma}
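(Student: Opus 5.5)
Both parts follow from the closed form $\kp(d)=1-\Hh(\tfrac{1+d}{2})$, written out as
\[
\kp(d)=1+\tfrac{1+d}{2}\lgg\tfrac{1+d}{2}+\tfrac{1-d}{2}\lgg\tfrac{1-d}{2}.
\]
Since $\Hh$ is symmetric about $\tfrac12$, $\kp$ is even. We will use the values $\kp(0)=0$ and $\kp(\pm1)=1$ throughout. Differentiating twice gives
\[
\kp'(d)=\tfrac12\lgg\tfrac{1+d}{1-d},\qquad \kp''(d)=\frac{1}{(\ln 2)(1-d^2)}>0
\]
on $(-1,1)$. Hence $\kp$ is strictly convex on $[-1,1]$, and continuity at the endpoints preserves strict convexity. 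Because $\kp'(d)>0$ for $d\in(0,1)$, $\kp$ is strictly increasing on $[0,1]$.

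For (i), we may take $d\ge0$ by evenness. The chord bound $\kp(d)\le d$ follows from convexity: the chord from $(0,0)$ to $(1,1)$ lies above the graph. For the quadratic bound, put $g(d)=d^2-\kp(d)$. Then $g(0)=0$, $g'(0)=0$, and $g(1)=0$. Its second derivative is
\[
g''(d)=2-\frac{1}{(\ln2)(1-d^2)},
\]
which is decreasing on $[0,1)$, positive at $0$ (because $2\ln2>1$), and tends to $-\infty$ as $d\to1$. So $g$ is convex on some $[0,d_0]$ and concave on $[d_0,1]$. On the convex part $g\ge g(0)+g'(0)d=0$. On the concave part $g$ lies above the chord joining $(d_0,g(d_0))$ to $(1,0)$, and both endpoints are $\ge0$. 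Therefore $g\ge0$ on $[0,1]$.

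For (ii), note that $\kp(t)/t$ is the slope of the chord from the origin, because $\kp(0)=0$. For a strictly convex function this slope is strictly increasing in $t$. Concretely, for $0<s<t$ write $s=(s/t)\,t+(1-s/t)\cdot0$; strict convexity gives $\kp(s)<(s/t)\kp(t)$. It follows that for $t\in(0,\tfrac15]$ we have $\kp(t)/t\le 5\kp(\tfrac15)$, with equality only at $t=\tfrac15$. The supremum over $(0,\tfrac15]$ is therefore attained at $\tfrac15$ and equals $\lam-1$ by \eqref{eq:constants}. Multiplying by $t$ gives $\kp(t)\le(\lam-1)t$, with equality at $t=0$ as well and nowhere else.

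The only step that is not routine is the quadratic bound $\kp\le d^2$, because it needs the sign-change analysis of $g''$. That analysis is elementary, requiring only $2\ln 2>1$, and no step presents a real obstacle.
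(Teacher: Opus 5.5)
Your proof is correct. The chord bound and part (ii) follow the paper essentially verbatim: strict convexity comes from the sign of a second derivative, and the ratio argument uses $s=\tfrac st\,t+(1-\tfrac st)\cdot 0$. One minor difference there is that you get strict monotonicity of $\kp$ on $[0,1]$ from $\kp'(d)=\tfrac12\lgg\tfrac{1+d}{1-d}>0$. The paper instead deduces it from the increasing ratio together with $\kp(t)>0$.

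The genuine difference is the quadratic bound. The paper expands $\kp(d)=\tfrac1{\ln2}\sum_{k\ge1}\tfrac{d^{2k}}{2k(2k-1)}$, bounds $d^{2k}\le d^2$ termwise, and sums the coefficients to $\ln2$ via the alternating harmonic series. You instead study $g=d^2-\kp$ with $g(0)=g'(0)=g(1)=0$, and observe that $g''(d)=2-\tfrac{1}{(\ln2)(1-d^2)}$ is decreasing with a single sign change. So $g$ is convex then concave: it is nonnegative on the first piece by the tangent at $0$, and on the second by the chord to $(1,0)$.

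The series argument is shorter once the expansion is granted, and it exposes more structure. It shows equality in $\kp\le d^2$ holds only at $d\in\{0,\pm1\}$, and it also gives the companion lower bound $\kp(d)\ge d^2/(2\ln2)$ for free. Its cost is that it relies on the Taylor expansion of $\Hh$ about $\tfrac12$, valid up to $|d|=1$. Your route needs only the second derivative you had already computed and the single numerical fact $2\ln2>1$, so the whole lemma stays within elementary one-variable calculus.
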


\begin{proof}
(i) $\kp$ is convex and even with $\kp(0)=0$ and $\kp(1)=1$. A convex function
lies below the \emph{chord} joining any two points of its graph, so on $[0,1]$
the graph of $\kp$ lies below the segment from $(0,0)$ to $(1,1)$; this gives
$\kp(d)\le(1-d)\kp(0)+d\kp(1)=d$, and evenness extends it to $\kp(d)\le|d|$. We
refer to this estimate as the \emph{chord bound}. The
quadratic bound follows from the Taylor expansion
\[
\begin{aligned}
  \Hh\Bigl(\tfrac{1+d}{2}\Bigr)\ &=\ 1-\frac{1}{\ln2}\sum_{k\ge1}\frac{d^{2k}}{2k(2k-1)},\\
  \text{i.e.}\quad
  \kp(d)\ &=\ \frac{1}{\ln2}\sum_{k\ge1}\frac{d^{2k}}{2k(2k-1)}.
\end{aligned}
\]
Since $d^{2k}\le d^2$ for $|d|\le1$ and
$\sum_{k\ge1}\tfrac1{2k(2k-1)}=\sum_{k\ge1}\bigl(\tfrac1{2k-1}-\tfrac1{2k}\bigr)
=1-\tfrac12+\tfrac13-\cdots=\ln2$ (the alternating harmonic series), it follows
that $\kp(d)\le\tfrac{d^2}{\ln2}\cdot\ln2=d^2$.

(ii) $\Hh''(x)=-\tfrac1{x(1-x)\ln2}<0$ makes $\Hh$ strictly concave, hence
$\kp$ strictly convex (affine inner map). For $0<s<t\le1$, writing
$s=\tfrac st\,t+(1-\tfrac st)\cdot0$, strict convexity and $\kp(0)=0$ give
$\kp(s)<\tfrac st\,\kp(t)$, i.e.\ $\kp(s)/s<\kp(t)/t$: the ratio is strictly
increasing, and since $\kp(t)>0$ for $t>0$ the same display shows that $\kp$
is strictly increasing on $[0,1]$. The stated bound is the ratio at
$t=\tfrac15$, using $\kp(\tfrac15)=(\lam-1)/5$, the definition
\eqref{eq:constants} of $\lam$.
\end{proof}

\begin{lemma}[Key lemma: imbalance $\le$ least left symbol]\label{lem:key}
At every internal node, the chosen split satisfies $|m_L-m_R|\le p_k$, where
$p_k$ is the last \textup{(}least\textup{)} symbol of the left part. Moreover, when
$m_L<m_R$ one has $m_R-m_L\le p_{k+1}$, the first symbol of the right part.
\end{lemma}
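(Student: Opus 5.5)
We argue by optimality of the chosen cut $k$, comparing it with the two neighbouring cuts $k-1$ and $k+1$. Put $D:=m_L-m_R$ for the chosen cut.

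\emph{Case $m_L\ge m_R$, i.e.\ $D\ge0$.} If $k=i$ the left part is the single symbol $p_i=p_k$, so $D=p_k-m_R\le p_k$ and we are done. Otherwise the cut $k-1$ is admissible. It moves $p_k$ to the right, producing imbalance $|D-2p_k|$. Minimality of $k$ gives $D\le|D-2p_k|$. If $D\ge2p_k$, this reads $D\le D-2p_k$, which is impossible since $p_k>0$. Hence $D<2p_k$, and then the inequality reads $D\le 2p_k-D$, i.e.\ $D\le p_k$.

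\emph{Case $m_L<m_R$, i.e.\ $D<0$.} Since $k<j$, the cut $k+1$ is admissible. It moves $p_{k+1}$ to the left and produces imbalance $|D+2p_{k+1}|$. Minimality of $k$ gives $|D|\le|D+2p_{k+1}|$. If $D+2p_{k+1}\le0$, this reads $-D\le-D-2p_{k+1}$, which is impossible. So $D+2p_{k+1}>0$, and $-D\le D+2p_{k+1}$ gives $m_R-m_L=-D\le p_{k+1}$, the second claim. Because the list is sorted, $p_{k+1}\le p_k$, so $|D|\le p_k$ holds here as well.

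The tie-breaking convention plays no role. We only used that the chosen cut is \emph{a} minimiser, so its imbalance is at most that of each neighbour; non-strict inequalities suffice throughout.

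The only point needing care is the boundary case $k=i$ with $D\ge0$, where the comparison cut $k-1$ does not exist. It is handled directly above, since then $m_L=p_k$. In the case $D<0$ no boundary issue arises, because $k<j$ always holds. Otherwise the proof is a routine neighbour-exchange argument.
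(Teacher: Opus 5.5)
Your overall strategy (compare the chosen cut with its neighbouring cuts) is the paper's, and your Case~1, including the singleton-left boundary $k=i$, matches the paper's argument exactly. The gap is in Case~2.

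There you assert ``Since $k<j$, the cut $k+1$ is admissible'', and this is false. Admissible cuts are positions $k'$ with $i\le k'<j$. Moving the cut one step right therefore requires $k+1<j$, i.e.\ a right part with at least two symbols; $k<j$ only guarantees that the right part is nonempty. When $k=j-1$ the right part is the singleton $\{p_j\}$ and the comparison cut does not exist, so minimality of $k$ gives you no inequality to use. Your closing claim that ``in the case $D<0$ no boundary issue arises'' is thus wrong. This is the mirror image of the $k=i$ boundary you did treat in Case~1.

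The repair is one line, and it is how the paper handles it. With a singleton right part, $D<0$ is impossible: the left part contains $p_i\ge p_j=m_R$, so $m_L\ge m_R$. Alternatively, if the right part is $\{p_{k+1}\}$, then $m_R-m_L<m_R=p_{k+1}$ directly. With this case added, your proof coincides with the paper's. Your remark that only the minimality of the chosen cut is used, not the tie-breaking rule, is correct.
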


\begin{proof}
Let the cut be after position $k$ and $x=m_L-m_R$. If $x\ge0$ and the left part
is a singleton, then $x=2p_k-M\le p_k$ directly ($M$, the block mass, is
$\ge p_k$). Otherwise, for $x\ge0$, moving the cut one
step left transfers $p_k$ to the right and yields imbalance $|x-2p_k|$;
optimality of the cut gives $x\le|x-2p_k|$. If $x>2p_k$ this reads
$x\le x-2p_k$, impossible; hence $x\le2p_k$ and then $x\le2p_k-x$, i.e.\
$x\le p_k$. If $x<0$: were the right part a single symbol, the left part would
contain at least one symbol $\ge$ that symbol, giving $x\ge0$; so the cut can move
one step right, transferring $p_{k+1}$, and the same argument gives
$-x\le p_{k+1}\le p_k$.
\end{proof}

\begin{lemma}[Krajci--Liu--Mikes--Moser~\cite{Krajci15} universal bound]\label{lem:univ}
For the Shannon--Fano/Fano splitting code,
\[
  R\le1-p_n .
\]
In particular $R<1$ for every finite source.
\end{lemma}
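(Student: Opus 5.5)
I would prove $R\le 1-p_n$ by strong induction on $n$ through the recursion of Lemma~\ref{lem:rec}, using the key lemma (Lemma~\ref{lem:key}) to control the root term. The base case $n=2$ is immediate: both symbols sit at depth one, and $R=1-\Hh(p_1)$. Since $p_1=1-p_2$ and $\kp(d)\le|d|$ (Lemma~\ref{lem:kappa}(i)), we get $R=\kp(2p_1-1)\le 2p_1-1=1-2p_2\le 1-p_2$.

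For the inductive step, let the root split the source into a left part $\{1,\dots,k\}$ of mass $a$ and a right part of mass $1-a$. By induction the renormalised children satisfy $R(B_L)\le 1-p_k/a$ and $R(B_R)\le 1-p_n/(1-a)$. A singleton child has $R=0$, which is even better. Substituting into Lemma~\ref{lem:rec} gives
\[
R\le \kp(2a-1)+a-p_k+(1-a)-p_n=\kp(2a-1)+1-p_k-p_n .
\]
It therefore suffices to show $\kp(2a-1)\le p_k$. By Lemma~\ref{lem:key}, $|2a-1|=|m_L-m_R|\le p_k$, and the chord bound then yields $\kp(2a-1)\le|2a-1|\le p_k$, which closes the induction. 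The case where the left part is a singleton is handled the same way: there $R(B_L)=0\le 1-p_k/a$ trivially, because $p_k=a$.

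The delicate point is keeping this bookkeeping straight when a child is a leaf. If the right part is the singleton $\{n\}$, the bound used is $R(B_R)=0$ instead of $1-p_n/(1-a)=0$, and these are consistent. If the left part is the singleton $\{1\}$, then $p_k=p_1=a$. In every case the term $-p_k$ contributed by the left child exactly absorbs the root cost, and the $-p_n$ term survives from the rightmost leaf. I expect no genuine obstacle beyond this. The argument is self-contained given Lemmas~\ref{lem:rec}, \ref{lem:kappa} and~\ref{lem:key}, and strictness $R<1$ follows from $p_n>0$.
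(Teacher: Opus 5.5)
Your proof is correct. The induction hypothesis legitimately applies to the children, because the cut criterion and the tie rule are invariant under renormalising a block. The single estimate $\kp(2a-1)\le|2a-1|\le p_k$, from the chord bound and Lemma~\ref{lem:key}, is exactly what lets the left child's $-p_k$ pay for the root.

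The paper, by contrast, gives no proof of this lemma. It imports $R\le 1-p_n$ from Kraj\v{c}i--Liu--Mike\v{s}--Moser as the external starting point of the whole development. Your route derives it in a few lines from the paper's own Lemmas~\ref{lem:rec}, \ref{lem:kappa}(i) and~\ref{lem:key}. This makes the paper fully self-contained, and shows that even its starting point ``comes from the Fano recursion.'' It is also the same absorption mechanism that drives the shape-$(3,3)$ case of Theorem~\ref{thm:aff}, there with coefficient $\lam$ in place of $1$.

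You can avoid the leaf bookkeeping entirely by unrolling the argument through the node sum of Lemma~\ref{lem:rec}. The internal nodes are in bijection with the cut positions $k=1,\dots,n-1$: the node for $k$ is the one separating symbols $k$ and $k+1$. At that node, in absolute masses, $M_v\,\kp(2a_v-1)\le|m_L-m_R|\le p_k$. Summing over the nodes gives $R\le\sum_{k=1}^{n-1}p_k=1-p_n$ directly.
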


\begin{lemma}[Isolation criterion and reduction]\label{lem:iso}
The root splits the leader off alone \textup{(}left part $=\{1\}$\textup{)} iff
$2p_1+p_2\ge1$; in particular this holds whenever $p_1\ge\tfrac12$, and it forces
$p_1\ge\tfrac13$. In that case
\[
\begin{aligned}
  R\ &=\ \bigl(1-\Hh(p_1)\bigr)+(1-p_1)\,R(\rho),\\
  \rho&=\tfrac{1}{1-p_1}\,(p_2,\dots,p_n).
\end{aligned}
\]
\end{lemma}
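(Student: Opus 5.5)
The statement has two parts: an iff criterion for the root cut to isolate the leader, and the redundancy identity that follows once it does. I will prove the criterion by comparing the imbalance of the cut after position $1$ with that of every later cut. The reduction then comes from Lemma~\ref{lem:rec}.

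\emph{The imbalance as a function of the cut.} Let $S_k=\sum_{t\le k}p_t$. The cut after $k$ has imbalance $g(k)=|2S_k-1|$. Since $S_k$ is strictly increasing in $k$, $2S_k-1$ is strictly increasing. Hence $g$ decreases while $2S_k-1<0$ and increases once $2S_k-1\ge0$, so the minimisers are confined to the one or two positions around the sign change.

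\emph{Sufficiency.} Suppose $2p_1+p_2\ge1$, so $S_2=p_1+p_2\ge 1-p_1$. If $p_1\ge\tfrac12$, then $2S_k-1>0$ for every $k\ge1$, and $k=1$ is the unique minimiser. Otherwise $2p_1-1<0$, and
\[
 g(1)=1-2p_1,\qquad g(2)=2p_1+2p_2-1 .
\]
The hypothesis $2p_1+p_2\ge1$ is exactly $g(2)\ge g(1)$. For $k\ge2$ we have $2S_k-1\ge 2S_2-1\ge0$, so $g(k)\ge g(2)\ge g(1)$ for all such $k$. Thus $k=1$ is a minimiser. If $g(2)=g(1)$, i.e.\ $2p_1+p_2=1$, the tie-break toward the smaller left part still selects $k=1$. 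So the left part is $\{1\}$.

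\emph{Necessity.} Suppose $2p_1+p_2<1$. Then $g(2)<g(1)$ whenever $2S_2-1\ge0$. If instead $2S_2-1<0$, then $g(2)=1-2S_2<1-2p_1=g(1)$. In either case $k=1$ is not a minimiser, so the leader is not isolated.

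\emph{Consequences and the reduction.} If $p_1\ge\tfrac12$, then $2p_1+p_2\ge1$ trivially, with $p_2\ge0$ covering $n=2$. Conversely, $2p_1+p_2\ge1$ together with $p_2\le p_1$ gives $3p_1\ge1$, i.e.\ $p_1\ge\tfrac13$. When the leader is isolated, apply Lemma~\ref{lem:rec} with $a=p_1$. The left child is a single leaf, so $R(B_L)=0$, and the right child is $\rho=\tfrac{1}{1-p_1}(p_2,\dots,p_n)$. This yields $R=(1-\Hh(p_1))+(1-p_1)R(\rho)$.

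The only delicate point is the boundary case $2p_1+p_2=1$. There the cuts after positions $1$ and $2$ have equal imbalance, and the stated tie-breaking convention is what makes isolation hold.
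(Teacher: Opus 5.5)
Your proof is correct and follows the paper's argument. In both, the signed imbalance $2S_k-1$ strictly increases, the comparison $g(1)\le g(2)$ is exactly $2p_1+p_2\ge1$, later cuts are no better, the boundary tie is settled by the left-leaning convention, and the identity is Lemma~\ref{lem:rec} with $a=p_1$, $R(B_L)=0$. One small slip: at $p_1=\tfrac12$ you have $2S_1-1=0$ rather than $>0$, but $k=1$ is still the unique minimiser.
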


\begin{proof}
As the cut moves right the signed imbalance $2(\text{prefix mass})-1$ strictly
increases. The cut after position $1$ is at least as good as the cut after position $2$ iff
$|2p_1-1|\le|2(p_1+p_2)-1|$, which simplifies (in both sign cases) to
$2p_1+p_2\ge1$; and when this holds, the signed imbalance at position $2$ is
already $2p_1+2p_2-1\ge p_2\ge0$, so all later cuts are worse. Conversely, if
$2p_1+p_2<1$ the cut after position~$2$ is strictly better than after
position~$1$. With $p_2\le p_1$, $2p_1+p_2\ge1$ forces $3p_1\ge1$. The identity
is Lemma~\ref{lem:rec} with $a=p_1$ and $R(B_L)=0$.
\end{proof}

\begin{corollary}[Isolation ceiling]\label{cor:isob}
If $2p_1+p_2\ge1$ \textup{(}in particular if $p_1\ge\tfrac12$\textup{)}, then
$R<2-p_1-\Hh(p_1)$. More generally, for any block in which the most probable symbol is split off on its own, the
redundancy is less than $2-\beta-\Hh(\beta)$, where $\beta$ is that
symbol's fraction of the block mass.
\end{corollary}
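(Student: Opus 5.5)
The plan is to combine the reduction identity of Lemma~\ref{lem:iso} with the universal bound of Lemma~\ref{lem:univ} applied to the residual source. Assume $2p_1+p_2\ge1$. By Lemma~\ref{lem:iso}, the root splits off $\{1\}$ as a leaf, and
\[
  R=\bigl(1-\Hh(p_1)\bigr)+(1-p_1)\,R(\rho),\qquad
  \rho=\tfrac{1}{1-p_1}(p_2,\dots,p_n).
\]

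First consider the degenerate case $n=2$. Then $\rho$ is a single symbol, so $R(\rho)=0$ and $R=1-\Hh(p_1)$. This is strictly below $2-p_1-\Hh(p_1)$ because $p_1<1$.

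Now let $n\ge3$. The residual $\rho$ is itself a sorted source with at least two symbols, and its Shannon--Fano tree is exactly the right subtree of the root. Lemma~\ref{lem:univ} gives $R(\rho)\le 1-\rho_{\min}<1$, where $\rho_{\min}=p_n/(1-p_1)>0$. Substituting,
\[
  R\le 1-\Hh(p_1)+(1-p_1)-p_n<2-p_1-\Hh(p_1),
\]
and the inequality is strict because $p_n>0$. This is the first claim. The parenthetical case $p_1\ge\tfrac12$ is covered, since then $2p_1+p_2\ge1$ holds trivially (this is also noted in Lemma~\ref{lem:iso}).

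For the general block statement, take a block of global mass $M$ whose most probable symbol (its first one) has fraction $\beta$ and is split off alone. Apply Lemma~\ref{lem:rec} locally: renormalise the block, and observe that the Fano recursion on a contiguous block depends only on the renormalised masses. Hence the block, viewed as a source, has root left-mass $\beta$ with a singleton left part. The same computation then gives
\[
  R(\text{block})=1-\Hh(\beta)+(1-\beta)R(\rho')<2-\beta-\Hh(\beta),
\]
where $\rho'$ is the renormalised remainder and Lemma~\ref{lem:univ} is applied to $\rho'$.

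There is essentially no obstacle here. The only points needing care are that the tie-breaking convention and the minimising cut are invariant under renormalisation, so the sub-block's tree really is the Fano tree of $\rho$, and the separate handling of the one-symbol residual.
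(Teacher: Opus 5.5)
Your proof is correct and follows the paper's argument: the isolation reduction of Lemma~\ref{lem:iso} combined with $R(\rho)<1$ from Lemma~\ref{lem:univ}, with the block form obtained by the same computation after renormalisation. Your separate treatment of the singleton residual ($n=2$) and the explicit slack $-p_n$ are extra bookkeeping that the paper absorbs into the single bound $R(\rho)<1$.
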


\begin{proof}
Lemma~\ref{lem:iso} and $R(\rho)<1$ (Lemma~\ref{lem:univ}) give
$R<(1-\Hh(p_1))+(1-p_1)$; the block form is the same computation after
renormalisation.
\end{proof}

\subsection{The Tight Branch $p_1\ge\tfrac12$}\label{sec:half}

The final branch of Theorem~\ref{thm:main} is Corollary~\ref{cor:isob}. We now
prove Proposition~\ref{prop:tight}(a),(b).

\begin{proof}[Proof of Proposition~\ref{prop:tight}(a)]
The upper bound is Corollary~\ref{cor:isob}. For the lower bound fix
$p_1\in[\tfrac12,1)$ and $q\in[\tfrac12,1)$, and consider the three-symbol source
\[
  p\ =\ \bigl(p_1,\ (1-p_1)q,\ (1-p_1)(1-q)\bigr),
\]
which is sorted since $(1-p_1)q\le p_1$ for $p_1\ge\tfrac12$. Since $2p_1\ge1$
the root splits the leader off on its own (Lemma~\ref{lem:iso}), and the remaining binary block has
redundancy $1-\Hh(q)$. Hence
\[
\begin{aligned}
  R&=\bigl(1-\Hh(p_1)\bigr)+(1-p_1)\bigl(1-\Hh(q)\bigr)\\
  &\ \xrightarrow[q\to1^-]{}\ 2-p_1-\Hh(p_1).
\end{aligned}
\]
The supremum is not attained because $R<2-p_1-\Hh(p_1)$ always
(Corollary~\ref{cor:isob}).
\end{proof}

\begin{proof}[Proof of Proposition~\ref{prop:tight}(b)]
$R<1$ is Lemma~\ref{lem:univ}. For the supremum, the binary source
$(1-\varepsilon,\varepsilon)$ has $R=1-\Hh(\varepsilon)\to1$ as
$\varepsilon\to0^+$.
\end{proof}

\begin{remark}
Corollary~\ref{cor:isob} bounds $R$ by $2-p_1-\Hh(p_1)$ whenever $2p_1+p_2\ge1$,
i.e.\ also for a range of sources with $p_1\in[\tfrac13,\tfrac12)$. Only the
condition on $(p_1,p_2)$, not $p_1\ge\tfrac12$ itself, is used.
\end{remark}

\section{The Affine Potential Bound}\label{sec:affine}

The second branch follows from a stronger statement, proved by induction, in
which a multiple of the least symbol is subtracted. This correction is what makes
the induction close; since the least symbol is positive, it may be dropped at the
end, leaving the bound $R<p_1+\CC$.

\begin{theorem}[Potential form]\label{thm:aff}
For every source with $n\ge3$ symbols,
\[
  R\ \le\ \Psi(p)\ :=\ p_1+\CC-\lam\,p_n ,
\]
with equality iff $p=(\tfrac13,\tfrac13,\tfrac13)$.
\end{theorem}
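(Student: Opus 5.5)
We prove Theorem~\ref{thm:aff} by strong induction on $n$. The ingredients are the recursion of Lemma~\ref{lem:rec}, the key lemma (Lemma~\ref{lem:key}), and the universal bound $R\le1-p_n$ of Lemma~\ref{lem:univ}.

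\emph{Base and shapes.} The base case is $n=3$: the source $(p_1,p_2,p_3)$ is checked directly. Let the root split into a left part of mass $a$ with $s$ symbols and a right part with $t$ symbols. Call $(s,t)$ the \emph{shape}, and write $E=\kp(2a-1)$ for the root cost, so that
\[
R=E+a\,R(B_L)+(1-a)\,R(B_R).
\]
For a child with at least three symbols we apply the induction hypothesis $R(B)\le \Psi(B)$. For a child with two symbols we use $R(B)=1-\Hh(\cdot)$, and for a singleton we use $R=0$. For a child of any size we may instead use Lemma~\ref{lem:univ}. After renormalisation:
\begin{itemize}
\item the largest symbol of $B_L$ contributes $p_1$;
\item the largest symbol of $B_R$ contributes $p_{s+1}\le p_s$;
\item the least-symbol corrections become $\lam p_s$ and $\lam p_n$.
\end{itemize}
When both children have at least three symbols, the induction gives
\[
R\le E+p_1+p_{s+1}+\CC-\lam p_s-\lam p_n.
\]
It therefore suffices to show $E+p_{s+1}-\lam p_s\le0$. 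By Lemma~\ref{lem:key}, $|2a-1|\le p_s$, and Lemma~\ref{lem:kappa}(ii) gives $E\le(\lam-1)p_s$ whenever $p_s\le\tfrac15$. Combined with $p_{s+1}\le p_s$, this closes the case. Here $p_s\le\tfrac15$ holds automatically when $s\ge3$ with $p_1\le\dots$ sorted, or when the shape is $(3,3)$. This is exactly where the value of $\lam$ is forced.

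\emph{Small-part shapes.} Next we treat shapes in which one side has one or two symbols.

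In shape $(1,\cdot)$, isolation (Lemma~\ref{lem:iso}) gives
\[
R=1-\Hh(p_1)+(1-p_1)R(\rho),
\]
with $p_1\ge\tfrac13$. Bounding $R(\rho)$ by induction (for $\rho$ with at least three symbols) yields
\[
R\le 1-\Hh(p_1)+p_2+(1-p_1)\CC-\lam p_n.
\]
We need this to be at most $p_1+\CC-\lam p_n$, that is,
\[
1-\Hh(p_1)+p_2-p_1\le p_1\CC .
\]
With $p_2\le p_1$ and $2p_1+p_2\ge1$ this becomes a one-variable inequality in $p_1\in[\tfrac13,1)$, certified by Appendix~\ref{app:cert}. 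Its equality at $p_1=p_2=\tfrac13$ explains why $\CC=\cthree+(\lam-1)/3$.

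The shapes $(2,\cdot)$, $(\cdot,1)$, $(\cdot,2)$ are handled similarly. The two-symbol block has exact redundancy $\kp(\cdot)$ and is bounded by the chord or quadratic bound. Where the term $-\lam p_n$ is needed but a child is too small to supply it through the hypothesis, we expand one more level of the recursion. The $(1,3)$ case, for instance, is expanded into the $3$-symbol block explicitly; this is where the four-parameter inequality (Lemma~\ref{lem:B2}) enters. We reduce it to convexity in the parameters, so that only the vertices of the feasible polytope need checking.

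\emph{Equality.} Equality is traced through the cases. Lemma~\ref{lem:kappa}(ii) is strict except at $t\in\{0,\tfrac15\}$, and the scalar certificates are strict except at $p_1=\tfrac13$. Together these force the uniform ternary source $(\tfrac13,\tfrac13,\tfrac13)$, for which
\[
R=\tfrac53-\lgg3=\cthree \qquad\text{and}\qquad \Psi=\tfrac13+\CC-\tfrac{\lam}{3}=\cthree .
\]

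\emph{Main obstacle.} The hardest part is the bookkeeping of small shapes: children with one or two symbols, or with exactly three symbols, do not carry the $-\lam p_n$ correction through induction in the right form. The $(1,3)$ / $(2,2)$-type configurations near $p_1\approx\tfrac13$ must be handled by an extra level of expansion and the multivariate inequality; I expect this to be the delicate step.
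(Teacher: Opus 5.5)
Your shape-$(3,3)$ argument matches the paper's, but the isolate shape with $|\rho|\ge3$ does not close the way you describe.

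\textbf{The IH route for shape $(1,3)$ fails.} Applying the induction hypothesis to $\rho$ reduces $R\le\Psi$ to $1-\Hh(p_1)+p_2-p_1\le\CC p_1$. This is \emph{false} under the constraints you list ($p_2\le p_1$, $2p_1+p_2\ge1$). At $p_1=p_2=\tfrac13$ it reads $\cthree\le\tfrac{\CC}{3}$, i.e.\ $0.0817\le0.0434$.
\begin{itemize}
\item Concretely, for $p=(\tfrac13,\tfrac13,\tfrac16,\tfrac16)$ the root isolates the leader and $\rho=(\tfrac12,\tfrac14,\tfrac14)$.
\item The hypothesis bound $R\le\cthree+\tfrac23\bigl(\tfrac12+\CC-\tfrac{\lam}{4}\bigr)$ exceeds $\Psi=\tfrac13+\CC-\tfrac{\lam}{6}$ by $\cthree-\tfrac{\CC}{3}>0$.
\item By continuity the failure persists for $p_1=p_2$ slightly above $\tfrac13$.
\end{itemize}
So no one-variable certificate exists here. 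There is also no equality at $p_1=p_2=\tfrac13$ in this shape. The constant $\CC$ is forced in shape $(1,2)$: there the two-symbol child is bounded by the chord, $p_2-p_3$, and $-\lam p_n$ enters by substituting $p_3=1-p_1-p_2$. This gives $\psi(p_1)=1-\Hh(p_1)+(\lam-1)(1-2p_1)\le\psi(\tfrac13)=\CC$.

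\textbf{The extra expansion needs real content.} You mention it only in passing. The paper splits shape $(1,3)$ by how $\rho$ itself splits, with $c=p_2/(1-p_1)$ and $t=1-p_1-p_2$:
\begin{itemize}
\item Case A ($\rho$ isolates its leader): the two-level identity $R=2-p_1-\mathrm H(p_1,p_2,t)+tR''$ reduces everything, by convexity, to $G\le\beta<0$.
\item Case B1 ($\rho$ is a group split with $c\le\tfrac13$): your inequality \emph{does} hold here, precisely because then $p_2\le\tfrac{1-p_1}{3}$.
\item Case B2 ($\tfrac13<c<\tfrac12$): you must show three things. First, $\rho$'s left block isolates its own leader. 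Second, $\rho$'s right block is not a singleton. Third, optimality of $\rho$'s cut yields $y\le v\le y+2u$.
\end{itemize}
Only with these does the four-variable function become convex on a polytope spanned by $W_1,\dots,W_6$. Without (P2) or (P5) the inequality is false, as the remark after Lemma~\ref{lem:B2} shows. Citing Lemma~\ref{lem:B2} without this case structure is not a proof.

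\textbf{Equality tracing is incomplete.} In shape $(3,3)$, equality in the ratio bound \emph{is} attainable ($p_k=\tfrac15$, $a=\tfrac35$), so ``strict except at $t\in\{0,\tfrac15\}$'' does not exclude equality. You must carry the equality clause in the induction hypothesis and apply it to the right child. That child would then have to be uniform ternary, forcing $p_{k+1}=\tfrac{2}{15}\ne\tfrac15=p_k$, a contradiction.
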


\begin{corollary}\label{cor:aff}
For every source with $n\ge2$ symbols, $R<p_1+\CC$. In particular the
first \textup{(}$0<p_1<\tfrac14$\textup{)} branch of Theorem~\ref{thm:main}
holds \textup{(}$p_1<\tfrac12$ forces $n\ge3$\textup{)}.
\end{corollary}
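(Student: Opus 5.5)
We derive Corollary~\ref{cor:aff} from Theorem~\ref{thm:aff}, with separate handling of $n=2$ and of the cases where the strict inequality is not immediate.

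\emph{Case $n\ge3$.} Theorem~\ref{thm:aff} gives $R\le p_1+\CC-\lam p_n$. Since $p_n>0$ and $\lam>0$ (numerically $\lam=1.145\ldots$, or $\lam=1+5\kp(\tfrac15)>1$ because $\kp>0$ off the origin), we have $-\lam p_n<0$. Hence $R<p_1+\CC$ strictly. No appeal to the equality case is needed, because the subtracted term is always strictly positive.

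\emph{Case $n=2$.} Here $p_1\ge\tfrac12$ and $R=1-\Hh(p_1)$, so we need $1-\Hh(p_1)<p_1+\CC$ on $[\tfrac12,1)$. The left side is convex in $p_1$ and the right side is affine, so the difference is convex. Its maximum on $[\tfrac12,1]$ therefore occurs at an endpoint. At $p_1=\tfrac12$ the difference is $0-\tfrac12-\CC<0$. At $p_1=1$ it is $1-1-\CC=-\CC<0$, since $\CC>0$ by \eqref{eq:cancel} ($5^5<2^{12}$). So the inequality is strict throughout, including as $p_1\to1^-$. A shorter route is Lemma~\ref{lem:univ}: $R\le1-p_2=p_1<p_1+\CC$.

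\emph{The first branch of Theorem~\ref{thm:main}.} For $p_1<\tfrac14$ we also have $p_1<\tfrac12$. Then $n\ge3$, since $n=2$ forces $p_1\ge\tfrac12$ (indeed $p_1\ge1/n$, so $p_1<\tfrac14$ forces $n\ge5$). The $n\ge3$ case above then gives the branch $R<p_1+\CC$.

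There is no real obstacle here. All the substance is in Theorem~\ref{thm:aff}, and the only point needing care is that the $n=2$ case falls outside its hypothesis and must be checked separately, which Lemma~\ref{lem:univ} does at once.
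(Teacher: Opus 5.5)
Your proposal is correct and follows the paper's proof: for $n\ge3$ the paper also applies Theorem~\ref{thm:aff} and uses $\lam,p_n>0$. For $n=2$ the paper uses the chord bound $R=\kp(2p_1-1)\le 2p_1-1<p_1+\CC$; this is your convexity-with-endpoints argument in another form, and your shortcut via Lemma~\ref{lem:univ}, $R\le 1-p_2=p_1$, works equally well.
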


\begin{proof}[Proof of the corollary]
For $n\ge3$, Theorem~\ref{thm:aff} gives $R\le p_1+\CC-\lam p_n<p_1+\CC$
since $\lam,p_n>0$. For $n=2$ (where necessarily $p_1\ge\tfrac12$), the chord
bound (Lemma~\ref{lem:kappa}) gives $R=\kp(2p_1-1)\le2p_1-1<p_1<p_1+\CC$.
\end{proof}

\subsection{Characterising root shapes}

The root split cuts the sorted list $p_1\ge\cdots\ge p_n$ into a left part,
which contains the leader $p_1$, and a right part. We record its
\emph{shape} as the ordered pair of these two part sizes, with any size of
three or more written simply as $3$. Thus shape $(1,2)$ means the leader
sits alone on the left with exactly two symbols on the right; $(1,3)$ means
the leader alone on the left with three or more symbols on the right; and
$(3,3)$ means at least three symbols on each side.

\begin{lemma}[Shapes]\label{lem:shapes}
Let $p_1<\tfrac12$ and $n\ge3$. Then the shape of the root split is one of
\[
  (1,2),\quad(1,3),\quad(2,2),\quad(2,3),\quad(3,3),
\]
unless the source is uniform with $n\in\{3,5\}$ \textup{(}handled separately in
\S\ref{ss:unif}\textup{)}. Moreover a split of shape $(1,\cdot)$ satisfies
$2p_1+p_2\ge1$, hence $p_1\ge\tfrac13$; a split of shape $(2,\cdot)$ or
$(3,\cdot)$ \textup{(}a \emph{group} split\textup{)} satisfies $2p_1+p_2<1$,
and the mass $a$ of its left part obeys $a<\tfrac23$; a group split of shape $(2,\cdot)$ has
$a\ge\tfrac25$.
\end{lemma}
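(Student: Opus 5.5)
The plan is to read off everything from two earlier results: the optimality of the root cut, in the form of the key lemma (Lemma~\ref{lem:key}), and the isolation criterion (Lemma~\ref{lem:iso}). Since $n\ge3$, shape $(1,1)$ cannot occur. Every listed shape then has at least two symbols on the right. So the first task is to rule out a right part consisting of a single symbol $p_n$ (shapes $(2,1)$ and $(3,1)$), except in the degenerate uniform case.

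Suppose the right part is $\{n\}$. Then the left part has $n-1\ge2$ symbols, and the signed imbalance is $x=1-2p_n$. This is $>0$ because $p_n\le p_1<\tfrac12$. Moving the cut one step left transfers $p_{n-1}$ and gives imbalance $|x-2p_{n-1}|$. Optimality forces $x\le|x-2p_{n-1}|$. Exactly as in the proof of Lemma~\ref{lem:key}, this rules out $x\ge2p_{n-1}$ and then yields $x\le p_{n-1}$, i.e.\ $p_{n-1}+2p_n\ge1$. On the other hand,
\[
1\ \ge\ p_1+p_{n-1}+p_n\ \ge\ p_{n-1}+2p_n\ \ge\ 1 ,
\]
where the first inequality uses $n\ge3$. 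Equality must hold throughout. That forces $n=3$ and $p_1=p_2=p_3=\tfrac13$, the uniform ternary source. There the two minimising cuts tie; this is the exceptional case handed to \S\ref{ss:unif}. I would also check there the uniform $n=5$ source, whose tie between cuts after positions $2$ and $3$ is the other place where the choice of minimising cut matters. With the tie convention fixed, the remaining shapes are exactly $(1,2),(1,3),(2,2),(2,3),(3,3)$.

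For the side conditions, a shape $(1,\cdot)$ split means the root isolates the leader. By Lemma~\ref{lem:iso} this is equivalent to $2p_1+p_2\ge1$, which forces $p_1\ge\tfrac13$. A group split has at least two symbols on the left, so it is not an isolation, and Lemma~\ref{lem:iso} gives $2p_1+p_2<1$.

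For $a<\tfrac23$, only the case $a>\tfrac12$ needs an argument. There Lemma~\ref{lem:key} gives $2a-1\le p_k$. Since the left part has $k\ge2$ symbols and $p_k$ is its smallest, $p_k\le a/k\le a/2$. Hence $2a-1\le a/2$, i.e.\ $a\le\tfrac23$. Equality would require $k=2$, $p_1=p_2=\tfrac13$, and $2p_1+p_2=1$, contradicting the group condition; so $a<\tfrac23$.

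For shape $(2,\cdot)$ with $a<\tfrac12$, the second clause of Lemma~\ref{lem:key} gives $1-2a\le p_3\le p_2\le a/2$. Hence $a\ge\tfrac25$; when $a\ge\tfrac12$ the bound is trivial.

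The only delicate point I expect is the bookkeeping of the uniform exceptions under the tie-breaking convention. Everything else is a one-line consequence of the key lemma.
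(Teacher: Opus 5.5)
Your proposal has a genuine gap: it never rules out shape $(3,2)$. Excluding a right part of size one leaves the six shapes $(1,2),(1,3),(2,2),(2,3),(3,2),(3,3)$, not the five listed ones. Having at least two symbols on the right is only a necessary condition for being listed. Your sentence about the uniform quinary source and the tie convention does not close the gap. For a non-uniform source with $n\ge5$ there need be no tie at all, and nothing you have written prevents the unique minimising root cut from lying after position $n-2$. This is exactly the case that produces the $n=5$ exception in the statement, and the paper gives it a separate argument.

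The missing step is short and in the spirit of your size-one argument. Take a cut after position $n-2$, so $n\ge5$, and put $s=p_{n-1}+p_n$. Lemma~\ref{lem:key} gives $|1-2s|\le p_{n-2}$, hence $p_{n-2}+2p_{n-1}+2p_n\ge1$. Since $p_1+p_2\ge p_{n-1}+p_n$ and the indices $1,2,n-2,n-1,n$ are distinct,
\[
  1\ \ge\ p_1+p_2+p_{n-2}+p_{n-1}+p_n\ \ge\ p_{n-2}+2p_{n-1}+2p_n\ \ge\ 1 .
\]
Equality throughout forces $n=5$ and $p_1=p_4$, $p_2=p_5$. With sorting this makes all five symbols equal, i.e.\ the uniform quinary source. (The paper argues instead via $(n-2)p_{n-2}+p_{n-1}+p_n\le1$, obtaining $(n-3)p_{n-2}\le2p_{n-2}$.)

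The rest of your proposal is sound. Your size-one argument is a tidier version of the paper's. Your derivation of $a<\tfrac23$ from $2a-1\le p_k\le a/k$ is in fact the correct one for shape $(3,\cdot)$. There the paper's one-line estimate $\tfrac32a\le\tfrac32(p_1+p_2)$ does not apply, because the left part is strictly larger than $\{p_1,p_2\}$.
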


\begin{proof}
The isolate/group dichotomy and $p_1\ge\tfrac13$ are Lemma~\ref{lem:iso}. For a
group split, $p_1\ge p_2$ gives $\tfrac32a\le\tfrac32(p_1+p_2)\le2p_1+p_2<1$,
so $a<\tfrac23$. For shape $(2,\cdot)$: if $a\ge\tfrac12$ then $a\ge\tfrac25$
trivially; if $a<\tfrac12$, Lemma~\ref{lem:key} gives $1-2a=|2a-1|\le p_2$,
while $p_2\le\tfrac a2$ (as $p_1\ge p_2$, $p_1+p_2=a$), so $1\le\tfrac52a$. It remains to exclude shapes with right part of size $1$, and shape $(3,2)$.

\textbf{Right part of size $1$} (cut after $n-1$): the imbalance is $1-2p_n$, and
Lemma~\ref{lem:key} gives $1-2p_n\le p_{n-1}\le p_1<\tfrac12$, so
$p_n>\tfrac14$. For $n\ge4$ all symbols then exceed $\tfrac14$ and the total mass
exceeds $1$, a contradiction. For $n=3$, cutting after position $1$ (imbalance
$1-2p_1$) is at least as good as after position $2$ (imbalance $1-2p_3\ge1-2p_1$),
with equality only if $p_1=p_3$, i.e.\ the uniform ternary source.

\textbf{Shape $(3,2)$} (cut after $n-2$, $n\ge5$): let $s=p_{n-1}+p_n$. If
$s\ge\tfrac12$ then $p_{n-1}\ge\tfrac s2\ge\tfrac14$, so all symbols exceed
$\tfrac14$ and again the mass exceeds $1$; hence $s<\tfrac12$ and the imbalance
is $1-2s>0$. Moving the cut one step left transfers $p_{n-2}$ and produces
signed imbalance $1-2s-2p_{n-2}$; optimality requires
$1-2s\le|1-2s-2p_{n-2}|$, which forces $1-2s-2p_{n-2}<0$ and
\begin{equation}\label{eq:32a}
  p_{n-2}+2p_{n-1}+2p_n\ \ge\ 1 .
\end{equation}
On the other hand the $n-2$ largest symbols are each $\ge p_{n-2}$, so
\begin{equation}\label{eq:32b}
  (n-2)\,p_{n-2}+p_{n-1}+p_n\ \le\ 1 .
\end{equation}
Subtracting \eqref{eq:32a} from \eqref{eq:32b}:
$(n-3)p_{n-2}\le p_{n-1}+p_n\le2p_{n-2}$, hence $n\le5$. For $n=5$ all the
inequalities are forced into equalities, giving $p_1=\cdots=p_5=\tfrac15$, the
uniform quinary source.
\end{proof}

\subsection{Scalar inequalities}

Throughout we use the standing observation that the entropy
$\mathrm H(a_1(t),\dots,a_k(t))$ of affine arguments is a concave
function of $t$ (each $\mathrm I$ is concave), so that subtractive
affine and entropy expressions are convex and are maximised at endpoints of
intervals.

\begin{lemma}\label{lem:scalar}
\begin{enumerate}
\item[\textup{(S1)}] $\psi(x):=1-\Hh(x)+(\lam-1)(1-2x)$ is strictly
decreasing on $[\tfrac13,\tfrac12]$, and
$\psi(\tfrac13)=\cthree+\tfrac{\lam-1}{3}=\CC$.
\item[\textup{(S2)}] $h(x):=\bigl(2-2x-\Hh(x)\bigr)+\tfrac{\lam-1}{2}(1-x)
\ \le\ \tfrac{\lam-1}{4}$ on $[\tfrac12,1]$, and $\tfrac{\lam-1}4<\CC$.
\item[\textup{(S3)}] \textup{(chord)} $1-\Hh(a)\ \le\ 6\cthree\,(a-\tfrac12)$ on
$[\tfrac12,\tfrac23]$.
\end{enumerate}
\end{lemma}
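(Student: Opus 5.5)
All three parts are one-variable statements. Each compares an affine function with a binary entropy, so the plan is to reduce each to convexity or concavity plus an evaluation at the endpoints, as the standing observation suggests. The constants are then settled by the integer-power comparisons the paper allows.

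\textbf{(S1).} I differentiate: $\psi'(x)=-\Hh'(x)-2(\lam-1)=\lgg\frac{x}{1-x}-2(\lam-1)$. On $[\tfrac13,\tfrac12]$ we have $\lgg\frac{x}{1-x}\le0$, and $\lam-1>0$, so $\psi'<0$ and $\psi$ is strictly decreasing. For the value at $\tfrac13$, $1-\Hh(\tfrac13)=\kp(2\cdot\tfrac13-1)=\kp(-\tfrac13)=\kp(\tfrac13)=\cthree$ by evenness of $\kp$. Since $1-2\cdot\tfrac13=\tfrac13$, this gives $\psi(\tfrac13)=\cthree+\tfrac{\lam-1}3=\CC$ by \eqref{eq:constants}.

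\textbf{(S2).} The function $h$ is an affine function minus $\Hh$, hence convex on $[\tfrac12,1]$. Its maximum is therefore attained at an endpoint. At $x=\tfrac12$ we get $h=(1-1)+\tfrac{\lam-1}{4}=\tfrac{\lam-1}4$. At $x=1$ we get $h=0$. So $h\le\tfrac{\lam-1}4$ on the interval. It remains to check $\tfrac{\lam-1}4<\CC=\cthree+\tfrac{\lam-1}3$. This holds trivially, because $\cthree>0$ (indeed $\kp(t)>0$ for $t>0$ by Lemma~\ref{lem:kappa}(ii)) and $\lam-1>0$. No numerics are needed.

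\textbf{(S3).} Again $1-\Hh(a)$ is convex, so on $[\tfrac12,\tfrac23]$ it lies below its chord. At $a=\tfrac12$ the value is $0$. At $a=\tfrac23$ the value is $\kp(\tfrac13)=\cthree$. The chord is therefore $\cthree\,(a-\tfrac12)/(\tfrac16)=6\cthree(a-\tfrac12)$, which is exactly the claimed bound.

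The only step that needs care is the sign bookkeeping in (S1). The derivative has two negative contributions on the interval, so it is not a real obstacle. Every constant comparison needed here reduces to the positivity of $\cthree$ and $\lam-1$. Both follow from the strict increase of $\kp$ on $[0,1]$ with $\kp(0)=0$.
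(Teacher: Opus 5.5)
Your proof is correct and follows the paper's argument essentially step for step: the derivative sign in (S1), convexity plus endpoint evaluation in (S2), and the chord of the convex function $1-\Hh$ through $(\tfrac12,0)$ and $(\tfrac23,\cthree)$ in (S3). The only deviation is in (S2). There you obtain $\tfrac{\lam-1}4<\CC$ structurally from $\CC=\cthree+\tfrac{\lam-1}3$ with $\cthree>0$ and $\lam-1>0$, whereas the paper goes through the slightly stronger constant comparison $\tfrac{\lam-1}2<\CC$, which it reuses later in Cases A and B2.
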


\begin{proof}
(S1) $\psi'(x)=-\Hh'(x)-2(\lam-1)=\lgg\tfrac{x}{1-x}-2(\lam-1)<0$ for
$x\le\tfrac12$ (the logarithm is $\le0$ there and $\lam>1$), and
$\psi(\tfrac13)=\cthree+\tfrac{\lam-1}{3}$, which is the definition
\eqref{eq:constants} of $\CC$; by \eqref{eq:cancel} the $\lgg3$-terms cancel
and $\CC=4-\tfrac53\lgg5$.

(S2) $h$ is convex ($-\Hh$ convex plus affine), so its maximum on $[\tfrac12,1]$
is at an endpoint: $h(\tfrac12)=0+\tfrac{\lam-1}{4}$ and $h(1)=0$. And
$\tfrac{\lam-1}4<\tfrac{\lam-1}2<\CC$.

(S3) $1-\Hh$ is convex, hence lies below its chord on $[\tfrac12,\tfrac23]$
through $\bigl(\tfrac12,0\bigr)$ and $\bigl(\tfrac23,1-\Hh(\tfrac23)\bigr)$;
and $\Hh(\tfrac23)=\Hh(\tfrac13)$, so the chord's slope is
$\cthree/\tfrac16=6\cthree$.
\end{proof}

\subsection{Induction}\label{ss:induct}

We prove Theorem~\ref{thm:aff} by strong induction on $n$: the full
statement (the inequality together with its equality clause) is
assumed for all sources with $3\le n'<n$ symbols (IH). In the recursion
(Lemma~\ref{lem:rec}) a part $S$ with symbols $p_i\ge\cdots\ge p_j$ and mass $w$
contributes $w\,R(B_S)$, which we bound by
\begin{equation}\label{eq:children}
  w\,R(B_S)\ \le\
  \begin{cases}
    0,\\
    \quad |S|=1,\\[2pt]
    p_i-p_j,\\
    \quad |S|=2\quad(\text{chord: }w\,\kp\bigl(\tfrac{p_i-p_j}{w}\bigr)\\
    \qquad\quad\le p_i-p_j),\\[2pt]
    p_i+w\,\CC-\lam\,p_j,\\
    \quad |S|\ge3\quad(\text{IH}),
  \end{cases}
\end{equation}
the middle bound with equality iff $p_i=p_j$ (Lemma~\ref{lem:kappa}(i);
$\tfrac{p_i-p_j}{w}=1$ is impossible as $p_j>0$), the last with equality iff
$B_S$ is uniform ternary (the IH's equality clause). Let the root split have
left mass $a$ and set $E:=1-\Hh(a)=\kp(2a-1)$. By
Lemmas~\ref{lem:kappa} and~\ref{lem:key},
\begin{equation}\label{eq:E}
  E\ \le\ |2a-1|\ \le\ p_k
  \qquad\text{and}\qquad
  E\ \le\ (2a-1)^2 ,
\end{equation}
$p_k$ being the least symbol of the left part. Writing $\Phi$ for the resulting
upper bound on $R$, it suffices in each case to show $\Phi\le\Psi
=p_1+\CC-\lam p_n$, tracking where equality can occur.

\medskip\noindent\textbf{Dominant leader ($p_1\ge\tfrac12$).}
The root splits the leader off on its own; the rest $\rho$ has $n-1\ge2$ symbols and least symbol
$\tfrac{p_n}{1-p_1}$, so Lemma~\ref{lem:univ} gives
$R\le(1-\Hh(p_1))+(1-p_1)-p_n$. Then, using $p_n\le\tfrac12(1-p_1)$ and
$\lam>1$,
\[
\begin{aligned}
  R-\Psi\ &\le\ \bigl(2-2p_1-\Hh(p_1)\bigr)+(\lam-1)p_n-\CC\\
  &\le\ h(p_1)-\CC\ \le\ \tfrac{\lam-1}{4}-\CC\ <\ 0
\end{aligned}
\]
by (S2).

\smallskip
For $p_1<\tfrac12$ we go through the shapes of Lemma~\ref{lem:shapes}. Shapes
$(1,\cdot)$ have $p_1\in[\tfrac13,\tfrac12)$; group shapes have $a<\tfrac23$.

\medskip\noindent\textbf{Shape $(1,2)$ (so $n=3$): the binding case.}
Here $\Phi=\bigl(1-\Hh(p_1)\bigr)+(p_2-p_3)$ and
\[
  \Phi-\Psi=\bigl(1-\Hh(p_1)\bigr)+p_2+(\lam-1)p_3-p_1-\CC .
\]
Substituting $p_3=1-p_1-p_2$, the middle terms equal
$(2-\lam)p_2+(\lam-1)(1-p_1)$, increasing in $p_2$ (since $\lam<2$); at
the extreme $p_2=p_1$ (so $p_3=1-2p_1$),
\[
\begin{aligned}
  \Phi-\Psi\ &\le\ \bigl(1-\Hh(p_1)\bigr)+(\lam-1)(1-2p_1)-\CC\\
  &=\ \psi(p_1)-\CC\ \le\ \psi(\tfrac13)-\CC\ =\ 0
\end{aligned}
\]
by (S1). Equality requires
$p_1=\tfrac13$ and $p_2=p_1$ (the $p_2$-coefficient $2-\lam$ is
positive) and equality in the chord bound for the right part, i.e.\
$p_2=p_3$: the uniform ternary source. This is the extremal case, and it is
what forces the value of the constant, since $\CC=\psi(\tfrac13)$ is the
smallest for which the bound can hold.

\medskip\noindent\textbf{Shape $(2,3)$.}
Here $a=p_1+p_2\in[\tfrac25,\tfrac23)$ and
$\Phi-\Psi=E+(p_1-p_2)+p_3-p_1-a\,\CC\le E-a\,\CC$ (since $p_3\le p_2$; the
$\lam p_n$-terms cancel). On
$[\tfrac25,\tfrac12]$: $E\le(2a-1)^2$ by \eqref{eq:E}, and
$\tfrac{(2a-1)^2}{a}$ has derivative of the sign of $2a-1$, hence is decreasing
there, so $\tfrac Ea\le\tfrac{(1/5)^2}{2/5}<\CC$. On
$[\tfrac12,\tfrac23)$, (S3) gives
$\tfrac Ea\le6\cthree\bigl(1-\tfrac1{2a}\bigr)<\tfrac32\cthree<\CC$. Hence
$E<a\CC$ and $\Phi<\Psi$.

\medskip\noindent\textbf{Shape $(2,2)$ (so $n=4$).}
Here $a=p_1+p_2\in[\tfrac25,\tfrac23)$ and
\[
\begin{aligned}
  \Phi-\Psi&=E+(p_1-p_2)+(p_3-p_4)-p_1-\CC+\lam p_4\\
   &=E-p_2+p_3+(\lam-1)p_4-\CC .
\end{aligned}
\]
With $p_3+p_4=1-a$ fixed,
$-p_2+p_3+(\lam-1)p_4=-p_2+(2-\lam)p_3+(\lam-1)(1-a)$
is increasing in $p_3$ ($\lam<2$), hence maximal at $p_3=p_2$, where it equals
$(\lam-1)(1-a-p_2)$; and $p_2\ge p_3\ge\tfrac{1-a}{2}$ gives
$(\lam-1)(1-a-p_2)\le\tfrac{\lam-1}{2}(1-a)$. Therefore
$\Phi-\Psi\le E+\tfrac{\lam-1}{2}(1-a)-\CC$. On $[\tfrac25,\tfrac12]$: with
$E\le(2a-1)^2$, the bound $(2a-1)^2+\tfrac{\lam-1}{2}(1-a)$ is convex with
endpoint values $\tfrac1{25}+\tfrac3{10}(\lam-1)$ at $a=\tfrac25$ and
$\tfrac{\lam-1}{4}$ at $a=\tfrac12$, and both are less than $\CC$:
$\tfrac{\lam-1}4<\CC$, while
$\tfrac1{25}+\tfrac3{10}(\lam-1)-\CC<0$. On $[\tfrac12,\tfrac23)$:
with (S3), $6\cthree(a-\tfrac12)+\tfrac{\lam-1}{2}(1-a)$ has positive slope
$6\cthree-\tfrac{\lam-1}{2}$, so on $a<\tfrac23$,
\[
  \Phi-\Psi\ <\ 6\cthree\cdot\tfrac16+\frac{\lam-1}{2}\cdot\frac13-\CC
  <\ 0. 
\]

\medskip\noindent\textbf{Shape $(3,3)$.}
Here the $\CC$-terms and $p_1,p_n$ cancel and
$\Phi-\Psi=E+p_{k+1}-\lam p_k$. The left part has at least $3$ symbols of
total mass $a$, so its least symbol obeys $p_k\le\tfrac a3$. If $a\ge\tfrac12$,
then \eqref{eq:E} gives $2a-1\le p_k\le\tfrac a3$, whence $a\le\tfrac35$ and
$p_k\le\tfrac15$; if $a<\tfrac12$ then $p_k\le\tfrac a3<\tfrac15$. In
either case $|2a-1|\le p_k\le\tfrac15$, so by Lemma~\ref{lem:kappa}(ii)
($\kp$ increasing, then the ratio bound),
\[
\begin{aligned}
  E\ &=\ \kp(2a-1)\ \le\ \kp(p_k)\ \le\ (\lam-1)\,p_k ,\\
  \text{so}\quad
  E+p_{k+1}\ &\le\ (\lam-1)p_k+p_k\ =\ \lam\,p_k ,
\end{aligned}
\]
using $p_{k+1}\le p_k$: $\Phi\le\Psi$. This case needs no constant at all,
it is what forces the correction coefficient, and by
Lemma~\ref{lem:kappa}(ii) the value $\lam=1+5\kp(\tfrac15)$ is exactly the
least coefficient this argument admits.

For strictness, suppose $R=\Psi$ in this shape; then every estimate above
is an equality. Equality in $\kp(p_k)\le(\lam-1)p_k$ with $0<p_k\le\tfrac15$
forces $p_k=\tfrac15$ (the ratio is strictly increasing); this rules
out $a<\tfrac12$ (there $p_k<\tfrac16$), so $a\ge\tfrac12$, and equality in
$\kp(|2a-1|)\le\kp(p_k)$ forces $2a-1=\tfrac15$, i.e.\ $a=\tfrac35$ ($\kp$
strictly increasing on $[0,1]$). The left part then has $\ge3$ symbols, each
$\ge p_k=\tfrac15$, of total mass $\tfrac35$: it is exactly
$(\tfrac15,\tfrac15,\tfrac15)$; also $p_{k+1}=p_k=\tfrac15$. Finally $R=\Phi$
requires equality in the right-part bound \eqref{eq:children}, and the right
part has $\ge3$ symbols, so by the equality clause of the IH its
renormalisation is uniform ternary: $p_{k+1}=\tfrac{1-a}3=\tfrac2{15}$,
contradicting $p_{k+1}=\tfrac15$. Hence $R<\Psi$, and the inequality is strict.

\subsection{Uniform case}\label{ss:unif}

These are verified for every minimising cut, so ties are immaterial.

\textbf{Uniform ternary} $(\tfrac13,\tfrac13,\tfrac13)$: either minimising cut
gives $R=1-\Hh(\tfrac13)=\cthree$, while
\[
  \Psi=\tfrac13+\CC-\tfrac{\lam}3
      =\tfrac13+\Bigl(\cthree+\frac{\lam-1}{3}\Bigr)-\frac{\lam}{3}
      =\cthree .
\]
Thus $R=\Psi$ exactly, the equality announced in Theorem~\ref{thm:aff}.

\textbf{Uniform quinary} $(\tfrac15,\dots,\tfrac15)$: either minimising cut is a
$2$--$3$ split (imbalance $\tfrac15$ both ways); the binary block has
redundancy $0$ and the ternary block is uniform ternary, so
$R=\kp(\tfrac15)+\tfrac35\,\cthree$. On the other side, since
$\tfrac{\lam}5=\tfrac15+\kp(\tfrac15)$ by \eqref{eq:constants},
$\Psi=\tfrac15+\CC-\tfrac{\lam}5=\CC-\kp(\tfrac15)$, so $\kp(\tfrac15)$
appears on both sides and, using $\CC=\cthree+\tfrac53\kp(\tfrac15)$,
\[
  \Psi-R\ =\ \CC-2\kp(\tfrac15)-\tfrac35\cthree
  \ =\ \tfrac25\,\cthree-\tfrac13\,\kp(\tfrac15)
  \ >\ 0.
\]

\subsection{The Isolate Shape $(1,3)$}\label{sec:onethree}

This case takes the most work. The root splits the leader off on its own, and the rest
$\rho=(p_2,\dots,p_n)/s$, $s=1-p_1$, has at least three symbols; write $c=p_2/s$
for its leader fraction and note $p_1\in[\tfrac13,\tfrac12)$. Applying
the induction hypothesis to $\rho$ directly gives too weak a bound; instead we
expand the recursion one step further and examine the root split of $\rho$
itself. This gives three cases, according to how $\rho$ splits.

\subsection*{Case A: $\rho$ splits its leader off on its own}

This holds in particular whenever $c\ge\tfrac12$. Write $t:=s-p_2=1-p_1-p_2$
for the mass of the tail $(p_3,\dots,p_n)$ (at least two symbols), and let $R''$
be the redundancy of that tail after renormalisation, $(p_3,\dots,p_n)/t$.
Applying Lemma~\ref{lem:rec} at the root and at $\rho$'s root, together with the
grouping identity $\Hh(p_1)+s\,\Hh(p_2/s)=\mathrm H(p_1,p_2,t)$, gives
\begin{equation}\label{eq:2level}
  R\ =\ \bigl(2-p_1-\mathrm H(p_1,p_2,t)\bigr)+t\,R'' .
\end{equation}
If that tail has $\ge3$ symbols, the
induction hypothesis gives $t\,R''\le p_3+t\CC-\lam p_n$; if it has exactly
two symbols, the chord bound gives $t\,R''\le p_3-p_n$, and then
$t\,R''+\lam p_n\le p_3+(\lam-1)p_n\le p_3+\tfrac{\lam-1}2t\le p_3+t\CC$
(using $p_n\le\tfrac t2$ and $\tfrac{\lam-1}2<\CC$). In
either case, $R\le\Psi$ reduces to $G(p_1,p_2)\le0$, where
\[
\begin{aligned}
  G(p_1,p_2)\ :=\ &2-2p_1-\mathrm H(p_1,p_2,t)+p_3-\CC\,(p_1+p_2),\\
  &\qquad p_3\le\min(p_2,t).
\end{aligned}
\]

\begin{lemma}\label{lem:caseA}
On the region
$\bigl\{\tfrac13\le p_1\le\tfrac12,\ 1-2p_1\le p_2\le p_1,\
p_3\le\min(p_2,t)\bigr\}$,
\[
  G\ \le\ \beta\ :=\ \tfrac53-\lgg3-\tfrac23\CC\ <\ 0 .
\]
\end{lemma}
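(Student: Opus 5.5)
The plan is to eliminate $p_3$ first and then use convexity. $G$ is increasing in $p_3$ (coefficient $+1$), so on the region it suffices to take $p_3=\min(p_2,t)$ with $t=1-p_1-p_2$. The constraint $p_2\ge1-2p_1$ is the same as $t\le p_1$. The switch $p_2\gtrless t$ happens on the line $p_2=\tfrac{1-p_1}{2}$. This line splits the closed region (a triangle with vertices $(\tfrac13,\tfrac13)$, $(\tfrac12,\tfrac12)$, $(\tfrac12,0)$ in the $(p_1,p_2)$-plane) into two closed triangles:
\[
T_1:\ (\tfrac13,\tfrac13),\ (\tfrac12,\tfrac12),\ (\tfrac12,\tfrac14),\qquad
T_2:\ (\tfrac13,\tfrac13),\ (\tfrac12,\tfrac14),\ (\tfrac12,0).
\]
The line passes through $(\tfrac13,\tfrac13)$ and meets $p_1=\tfrac12$ at $p_2=\tfrac14$.

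On $T_1$ we have $p_3=t$. On $T_2$ we have $p_3=p_2$. In each triangle the maximised $G$ is an affine function of $(p_1,p_2)$ minus $\mathrm H(p_1,p_2,1-p_1-p_2)$. The entropy is a concave function of affine arguments, so by the standing observation the expression is convex. Its maximum over each triangle is therefore attained at a vertex. The degenerate vertices with $p_2=0$ or $t=0$ are harmless: we bound the supremum of the continuous extension, using $\mathrm I(0)=0$.

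Next we evaluate the vertices.

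\begin{itemize}
\item At $(\tfrac13,\tfrac13)$, where $t=p_3=\tfrac13$: $G=2-\tfrac23-\lgg3+\tfrac13-\tfrac23\CC=\beta$.
\item At $(\tfrac12,\tfrac12)$, where $t=p_3=0$ and $\mathrm H=1$: $G=-\CC$.
\item At $(\tfrac12,\tfrac14)$, where $t=p_3=\tfrac14$ and $\mathrm H=\tfrac32$: $G=-\tfrac14-\tfrac34\CC$.
\item At $(\tfrac12,0)$, where $p_3=0$ and $\mathrm H=1$: $G=-\tfrac12\CC$.
\end{itemize}

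Hence $G\le\max\bigl(\beta,-\tfrac12\CC,-\tfrac14-\tfrac34\CC,-\CC\bigr)$. It remains to show that $\beta$ is the largest entry and that $\beta<0$.

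Substituting \eqref{eq:cancel} gives $\beta=-1-\lgg3+\tfrac{10}9\lgg5$. Then $\beta<0$ is equivalent to $5^{10}<2^9\cdot3^9$, that is, $9{,}765{,}625<10{,}077{,}696$. Numerically $\beta\approx-0.0050$, while $-\tfrac12\CC\approx-0.065$ and the other two vertex values are smaller still. To keep the argument in the integer-power form, $\beta>-\tfrac12\CC$ reduces to $\tfrac53-\lgg3-\tfrac16\CC>0$. This is again a comparison of powers of $2,3,5$, which we check with the method of Appendix~\ref{app:cert}.

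The main obstacle is the convexity bookkeeping rather than the arithmetic. The function $\min(p_2,t)$ is concave, not convex, so the convexity argument cannot be applied to $G$ with $p_3=\min(p_2,t)$ on the whole region at once. That is why the region is split along $p_2=\tfrac{1-p_1}{2}$ before invoking convexity. We also need to confirm that both pieces are genuinely triangles with the listed vertices. This holds because the splitting line, $p_2=p_1$ and $p_2=1-2p_1$ all pass through $(\tfrac13,\tfrac13)$.
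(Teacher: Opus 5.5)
Your proof is correct and follows the paper's argument: push $p_3$ up to $\min(p_2,t)$, split along the junction $p_2=\tfrac{1-p_1}{2}$, use convexity of ``affine minus entropy of affine arguments'' to reduce to extreme points, and close with $\beta<0$ and $\beta+\tfrac{\CC}{2}=\tfrac53-\lgg3-\tfrac{\CC}6>0$ (equivalently $2^{18}5^5>3^{18}$). The only difference is cosmetic: you use joint convexity on the two triangles and evaluate four vertices, whereas the paper applies one-variable convexity first in $p_2$ and then in $p_1$, and handles the edge $p_2=p_1$ by pointwise domination instead of evaluating the vertex $(\tfrac12,\tfrac12)$.
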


\begin{proof}
$G$ is increasing in $p_3$, so put $p_3=\min(p_2,t)$; the junction is
$p_2=\tfrac{1-p_1}2$ (where $p_2=t$), which lies in $[1-2p_1,\,p_1]$ exactly for
$p_1\ge\tfrac13$. On each of the two $p_2$-segments, $G$ is
``affine $-$ entropy of affine arguments'', hence convex in $p_2$; so its
maximum is at $p_2\in\{1-2p_1,\ \tfrac{1-p_1}2,\ p_1\}$. This leaves three
one-variable functions of $p_1$, each again convex (same reason), so each is
maximised at $p_1\in\{\tfrac13,\tfrac12\}$:
\begin{itemize}
\item $p_2=1-2p_1$ (then $t=p_1$, $p_3=1-2p_1$):
  at $p_1=\tfrac13$ the value is
  $2-\tfrac23-\lgg3+\tfrac13-\tfrac23\CC=\tfrac53-\lgg3-\tfrac23\CC=\beta$;
  at $p_1=\tfrac12$ it is $1-\mathrm H(\tfrac12,0,\tfrac12)+0-\tfrac{\CC}2
  =-\tfrac{\CC}2<0$.
\item $p_2=\tfrac{1-p_1}2$ (then $p_3=t=p_2$):
  at $p_1=\tfrac13$ the value is again $\beta$ (all three symbols
  $\tfrac13$); at $p_1=\tfrac12$ it is
  $1-\mathrm H(\tfrac12,\tfrac14,\tfrac14)+\tfrac14-\tfrac34\CC
  =-\tfrac14-\tfrac34\CC<0$.
\item $p_2=p_1$ (then $p_3=1-2p_1$): this function is pointwise $\le$ the first
  one, since it has the same entropy term (by symmetry of $\mathrm H$) and the
  same $p_3$, but the larger subtraction $\CC\cdot2p_1\ge\CC\,(1-p_1)$ for
  $p_1\ge\tfrac13$.
\end{itemize}
The two $\tfrac12$-endpoint values are $\le\beta$: indeed
$-\tfrac14-\tfrac34\CC<-\tfrac{\CC}2$, and
\[
  \beta+\frac{\CC}2\ =\ \frac53-\lgg3-\frac{\CC}6\ >\ 0 .
\]
Hence $G\le\beta$ on the whole region, and $\beta<0$ by simple arithmetic.
\end{proof}

\subsection*{Case B1: $\rho$ splits as a group, $c\le\tfrac13$}

Here the induction hypothesis on $\rho$ ($\ge3$ symbols) suffices:
$s\,R(\rho)\le p_2+s\CC-\lam p_n$, so, using $p_2\le\tfrac s3$,
\[
\begin{aligned}
  R-\Psi\ &\le\ \bigl(1-\Hh(p_1)\bigr)+p_2-p_1-\CC p_1\\
  &\le\ \bigl(1-\Hh(p_1)\bigr)+\tfrac{1-p_1}3-(1+\CC)\,p_1\ =:\ b(p_1).
\end{aligned}
\]
$b'(p_1)=\lgg\tfrac{p_1}{1-p_1}-\tfrac43-\CC<0$ on $[\tfrac13,\tfrac12)$ (the
logarithm is negative), so
\[
  b\ \le\ b(\tfrac13)\ =\ \cthree-\tfrac19-\tfrac{\CC}3
  \ <\ 0.
\]

\subsection*{Case B2: $\rho$ splits as a group, $\tfrac13<c<\tfrac12$}

In absolute masses, let $\rho$ split into $L'=(p_2,\dots,p_m)$, $m\ge3$, of
mass $A$, and $R''=(p_{m+1},\dots,p_n)$ with mass $s-A$. Three preliminary
observations.

\textbf{(i) $L'$ is a dominant block.} Lemma~\ref{lem:key} applied to the block
$(p_2,\dots,p_n)$ gives $|2A-s|\le p_m$; since $c>\tfrac13$ means $s<3p_2$,
\[
  A\ \le\ \tfrac{s+p_m}2\ \le\ \tfrac{s+p_2}2\ <\ 2p_2,
\]
so $p_2/A>\tfrac12$, and within $L'$ its most probable symbol is again split off on its own. By the block
form of Corollary~\ref{cor:isob} and the grouping rule,
\[
\begin{aligned}
  &A\,R_{L'}\ \le\ 2A-p_2-A\,\Hh\bigl(\tfrac{p_2}A\bigr),\\
  &\mathrm H(p_1,A,s{-}A)+A\,\Hh\bigl(\tfrac{p_2}A\bigr)
  =\mathrm H(p_1,p_2,A{-}p_2,s{-}A)
\end{aligned}
\]
(when $|L'|=2$ the displayed bound is immediate from $R_{L'}=1-\Hh(p_2/A)$ and
$p_2\le A$).

\textbf{(ii) $R''$ is not a singleton.} If it were $\{p_n\}$, the cut after symbol
$n-1$ would have been chosen; but the cut after symbol $2$ (available since
$n-1\ge3$) has imbalance $s-2p_2\le s-2p_n$, with equality only if
$p_2=p_n$, which would make all of $p_2,\dots,p_n$ equal, forcing
$c=p_2/s=\tfrac1{n-1}\le\tfrac13$ and contradicting $c>\tfrac13$. So the cut
after symbol $2$ is strictly better, a contradiction (no tie rule is invoked). If $|R''|=2$,
the chord bound, $p_n\le\tfrac{s-A}2$ and $\tfrac{\lam-1}2<\CC$ give, as in
Case A, $(s-A)R''+\lam p_n\le p_{m+1}+(s-A)\CC$; if $|R''|\ge3$ the induction
hypothesis gives the same.

\textbf{(iii)} Combining (i) and (ii) with the two-level expansion \eqref{eq:2level}
(Lemma~\ref{lem:rec} applied at the root and at $\rho$'s root), the requirement
$R\le\Psi$ reduces to $\Delta\le0$, where
\begin{equation}\label{eq:B2}
\begin{split}
  \Delta\ :=\ &2-2p_1+2A-p_2-\mathrm H\bigl(p_1,p_2,A{-}p_2,s{-}A\bigr)\\
  &+p_{m+1}-\CC\,(p_1+A).
\end{split}
\end{equation}

\begin{lemma}[The B2 inequality]\label{lem:B2}
$\Delta\le\max\bigl(\beta,-\tfrac1{45}\bigr)<0$ for
every tuple $(p_1,p_2,A,q,r)$
\textup{(}$q=p_m$, $r=p_{m+1}$\textup{)} satisfying
\[
\begin{aligned}
  &\textup{(P1) }\tfrac13\le p_1\le\tfrac12;\quad
  \textup{(P2) }2p_1+p_2\ge1;\\
  &\textup{(P3) }\tfrac s3\le p_2\le\tfrac s2;\quad
  \textup{(P4) }0\le q\le A-p_2;
\end{aligned}
\]
\[
  \textup{(P5) }|2A-s|\le q;\qquad
  \textup{(P6) }0\le r\le q\qquad (s=1-p_1).
\]
Every case-\textup{B2} configuration satisfies \textup{(P1)--(P6)}: \textup{(P1),
(P2)} by Lemma~\ref{lem:iso}, \textup{(P3)} from $\tfrac13<c<\tfrac12$,
\textup{(P4)} because the left block $(p_2,\dots,p_m)$ has mass $A$, so $A-p_2=p_3+\cdots+p_m$ is a sum of symbols one of which is $q=p_m$ and the rest nonnegative, giving $0<q\le A-p_2$; \textup{(P5)} by
Lemma~\ref{lem:key}, \textup{(P6)} by sorting.
\end{lemma}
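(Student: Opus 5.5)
The plan is to remove the two auxiliary variables $r,q$ by monotonicity, and then bound $\Delta$ as a jointly convex function over a three-dimensional polytope. Convexity reduces the problem to finitely many vertex evaluations, which is the ``convex function at the six vertices of a polytope'' mentioned in the outline of Section~\ref{sec:outline}.

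\emph{Eliminating $r$ and $q$.} In \eqref{eq:B2} the variable $r=p_{m+1}$ enters $\Delta$ only through the term $+r$. So $\Delta$ is increasing in $r$, and by (P6) we may set $r=q$. After this substitution $q$ also enters only linearly with coefficient $+1$, since the entropy involves only $p_1,p_2,A,s-A$. Hence we push $q$ to its largest admissible value, $q=A-p_2$ from (P4). With $q$ fixed there, (P5) becomes a constraint on $(p_1,p_2,A)$ alone:
\[
  |2A-s|\le A-p_2\iff \tfrac{s+p_2}{3}\le A\le s-p_2 ,
\]
which also forces $A\ge p_2$. We must keep the eliminated feasibility in mind: whenever the original tuple is feasible, so is the one with $q=A-p_2$, $r=q$, and $\Delta$ has only increased. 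We are left with
\[
  \widetilde\Delta(p_1,p_2,A)=2-2p_1+3A-2p_2-\mathrm H\bigl(p_1,p_2,A-p_2,s-A\bigr)-\CC\,(p_1+A),
\]
to be maximised over the polytope $\mathcal P$ cut out by (P1), (P2), (P3) and the two displayed inequalities, with $s=1-p_1$.

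\emph{Convexity and vertices.} All four entropy arguments are affine in $(p_1,p_2,A)$, and $\mathrm I$ is concave. So $\mathrm H(\cdot)$ is jointly concave, and $\widetilde\Delta$ is jointly convex; this is the multivariate form of the standing observation. Its maximum on the compact polytope $\mathcal P$ is therefore attained at a vertex.

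I would next enumerate the vertices. The lower bound on $p_2$ is $\max(1-2p_1,\tfrac{1-p_1}3)$, and the two pieces switch at $p_1=\tfrac25$; so $\mathcal P$ has a facet-edge at $p_1=\tfrac25$. The upper bound is $p_2\le\tfrac{1-p_1}2$, and $A$ ranges between $\tfrac{s+p_2}3$ and $s-p_2$.

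At $p_1=\tfrac13$ the two lower bounds on $p_2$ coincide at $\tfrac13$. This gives $A\in\{\tfrac13,\tfrac13\}$, a single point, and $p_2=\tfrac13$ gives the configuration $(\tfrac13,\tfrac13,0,\tfrac13)$. Its value should reproduce $\beta$, consistent with Remark~\ref{rem:where} ($F(W_1)=\beta$). The other vertices come from $p_1\in\{\tfrac13,\tfrac25,\tfrac12\}$ combined with the extreme values of $p_2$ and $A$. For each vertex I would write down the four masses explicitly and compute $\widetilde\Delta$ as a rational combination of $1$, $\lgg3$ and $\lgg5$, with $\CC=4-\tfrac53\lgg5$.

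\emph{Expected obstacle.} The main work is the vertex bookkeeping. First, one must confirm that $\mathcal P$ has exactly the expected vertices, with degenerate ones merged, for instance where $A=p_2$ and the entropy term has a zero argument. Second, one must verify each vertex value against $\max(\beta,-\tfrac1{45})$. I expect one vertex, plausibly at $p_1=\tfrac25$ or $\tfrac12$, to give the value that $-\tfrac1{45}$ is designed to dominate, and the rest to be bounded by $\beta$. Each comparison reduces to an inequality of the form $2^a3^b5^c\gtrless1$ with integers $a,b,c$, settled as in Appendix~\ref{app:cert}. Finally, $\beta<0$ is Lemma~\ref{lem:caseA}, and $-\tfrac1{45}<0$ is trivial, which gives the strict inequality.
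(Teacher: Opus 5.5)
Your plan matches the paper's proof: the same monotone elimination $r\le q\le A-p_2$, the same polytope (your $(p_1,p_2,A)$ are the paper's $(x,y,u)$ with $u=A-p_2$), joint convexity, and evaluation at the six vertices $W_1,\dots,W_6$; the paper gets that vertex set by projecting onto the quadrilateral $Q_1Q_2Q_3Q_4$ and lifting the $u$-interval rather than by enumerating active constraints, and finds $F(W_1)=\beta$ with the other five values at most $-\tfrac1{45}$ (tightest at $W_3$, $p_1=\tfrac25$). One small slip: at $p_1=\tfrac13$ the two lower bounds on $p_2$ do not coincide ($\tfrac{1-p_1}{3}=\tfrac29$ there); it is the active lower bound $1-2p_1$ and the upper bound $\tfrac{1-p_1}{2}$ that meet at $\tfrac13$, so your conclusion that the fibre there is the single point $W_1$ is still correct, and your extra candidate $(p_1,p_2,A)=(\tfrac25,\tfrac3{10},\tfrac3{10})$ lies on the edge $W_1W_6$ and is harmless.
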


\begin{proof}
\textbf{Step 1 (change of variables).} Set $x=p_1$, $y=p_2$, $u=A-p_2$,
$v=s-A$; then $x+y+u+v=1$ and $x,y,u\ge0$. From (P4), (P5):
$2A-s\le q\le u$ gives $y+u-v\le u$, i.e.\ $v\ge y$; and $s-2A\le q\le u$ gives
$v\le y+2u$. In these variables, using $r\le q\le u$ (P6), (P4) and that
$\Delta$ is nondecreasing in $r$,
\[
\begin{aligned}
  \Delta\ \le\ F(x,y,u)\ &:=\ 2-2x+y+3u-\mathrm H(x,y,u,v)\\
  &\phantom{{}:={}}-\CC\,(x+y+u),\\
  &\phantom{{}:={}}v=1-x-y-u .
\end{aligned}
\]

\textbf{Step 2 (the polytope).} Conditions (P1)--(P3), together with the two consequences $v\ge y$ and $v\le y+2u$ of (P5) from Step 1,
translate to: $\tfrac13\le x\le\tfrac12$; $2x+y\ge1$; $x+3y\ge1$; $x+2y\le1$;
$x+2y+u\le1$ ($v\ge y$); $x+2y+3u\ge1$ ($v\le y+2u$). Call this polytope
$\cP$ (note $u\ge0$ is implied by the last two with $x+2y\le1$).

\textbf{Step 3 (convexity).} $F$ is affine minus
$\mathrm H(x,y,u,v)=\sum_k\mathrm I(a_k)$, a sum of the concave function
$\mathrm I$ evaluated at the affine arguments $a_k\in\{x,y,u,v\}$; hence
$-\mathrm H$, and with it $F$, is convex on $\cP$.

\textbf{Step 4 (a hull of six points).} We show that $\cP$ lies in the convex hull of the six points $W_1,\dots,W_6$, where
\[
  \begin{aligned}
  W_1&=\bigl(\tfrac13,\tfrac13,0\bigr),&
  W_2&=\bigl(\tfrac25,\tfrac15,\tfrac1{15}\bigr),&
  W_3&=\bigl(\tfrac25,\tfrac15,\tfrac15\bigr),\\
  W_4&=\bigl(\tfrac12,\tfrac16,\tfrac1{18}\bigr),&
  W_5&=\bigl(\tfrac12,\tfrac16,\tfrac16\bigr),&
  W_6&=\bigl(\tfrac12,\tfrac14,0\bigr).
  \end{aligned}
\]
The idea is to project $\cP$ onto the $(x,y)$-plane and then lift back.
Eliminating $u$, the pair $(x,y)$ ranges over the planar region
$\{\tfrac13\le x\le\tfrac12,\ y^-(x)\le y\le y^+(x)\}$ with
$y^+(x)=\tfrac{1-x}2$ and $y^-(x)=\max\bigl(1-2x,\tfrac{1-x}3\bigr)$, the two
bounds crossing at $x=\tfrac25$. This region is the quadrilateral with corners
$Q_1=(\tfrac13,\tfrac13)$, $Q_2=(\tfrac25,\tfrac15)$, $Q_3=(\tfrac12,\tfrac16)$,
$Q_4=(\tfrac12,\tfrac14)$: its upper edge $y=y^+(x)$ is the segment $Q_1Q_4$,
and its lower edge $y=y^-(x)$ is the segment $Q_1Q_2$ for $x\le\tfrac25$ and
$Q_2Q_3$ for $x\ge\tfrac25$, as the parametrisations
\[
  \bigl(x,y^+(x)\bigr)=(1-\lambda)Q_1+\lambda Q_4,\qquad \lambda=6x-2,
\]
\[
  \bigl(x,y^-(x)\bigr)=
  \begin{cases}
  (1-\lambda)Q_1+\lambda Q_2,\quad \lambda=15x-5, & x\le\tfrac25,\\
  (1-\lambda)Q_2+\lambda Q_3,\quad \lambda=10x-4, & x\ge\tfrac25,
  \end{cases}
\]
show; hence every $(x,y)$ in the region is a convex combination of
$Q_1,\dots,Q_4$. To lift back to $\cP$, fix such an $(x,y)$: the constraints
$x+2y+u\le1$ and $x+2y+3u\ge1$ confine $u$ to the interval
$\bigl[\tfrac{1-x-2y}3,\ 1-x-2y\bigr]$, whose two endpoints are affine in
$(x,y)$. Writing $u$ as a convex combination of these two endpoints, and
substituting for $(x,y)$ its expression as a convex combination of the $Q_i$,
exhibits $(x,y,u)$ as a convex combination of the six points
$\bigl(Q_i,\tfrac{1-x-2y}3\big|_{Q_i}\bigr)$ and
$\bigl(Q_i,(1-x-2y)\big|_{Q_i}\bigr)$. Since $1-x-2y$ equals
$0,\tfrac15,\tfrac16,0$ at $Q_1,\dots,Q_4$, these six points are exactly
$W_1,\dots,W_6$.

\textbf{Step 5 (certified evaluation).} By convexity,
$\max_\cP F\le\max_jF(W_j)$. Write $F=N-\mathrm H-\CC M$ with $N=2-2x+y+3u$,
$M=x+y+u$, so that $\CC M=4M-\tfrac53M\lgg5$; with
$\lgg6=1+\lgg3$, $\lgg{15}=\lgg3+\lgg5$, $\lgg{18}=1+2\lgg3$,
$\lgg\tfrac52=\lgg5-1$, $\lgg\tfrac{18}5=1+2\lgg3-\lgg5$, direct evaluation
gives
\[
\begin{array}{l}
F(W_1)=-1-\lgg3+\tfrac{10}9\lgg5=\beta,\\[3pt]
F(W_2)=-\tfrac23-\tfrac25\lgg3+\tfrac49\lgg5,\\[3pt]
F(W_3)=-\tfrac45+\tfrac13\lgg5,\\[3pt]
F(W_4)=-\tfrac{23}9-\tfrac56\lgg3+\tfrac{40}{27}\lgg5,\\[3pt]
F(W_5)=-\tfrac83-\tfrac12\lgg3+\tfrac{25}{18}\lgg5,\\[3pt]
F(W_6)=-\tfrac{13}4+\tfrac54\lgg5 .
\end{array}
\]
Each value is monotone in $\lgg3$ and in $\lgg5$ (sign of the coefficient);
substituting $\lgg3>\tfrac{19}{12}$, $\lgg5<\tfrac73$ gives
\begin{align*}
  F(W_2)&\le\tfrac{-180-171+280}{270}=-\tfrac{71}{270},\\
  F(W_3)&\le-\tfrac45+\tfrac13\cdot\tfrac73=-\tfrac1{45},\\
  F(W_4)&\le\tfrac{-1656-855+2240}{648}=-\tfrac{271}{648},\\
  F(W_5)&\le\tfrac{-576-171+700}{216}=-\tfrac{47}{216},\\
  F(W_6)&\le-\tfrac{13}4+\tfrac{35}{12}=-\tfrac13,
\end{align*}
so $F(W_j)\le-\tfrac1{45}$ for $j=2,\dots,6$; and $F(W_1)=\beta<0$, critical at the same
balanced configuration $p_1=p_2=\tfrac13$ as in Case~A. Hence
$\Delta\le F\le\max\bigl(\beta,-\tfrac1{45}\bigr)<0$ on $\cP$.
\end{proof}

\begin{remark}
Both constraint families in Lemma~\ref{lem:B2} are essential: dropping
isolation (P2) admits
$(p_1,p_2,A,q,r)=(\tfrac13,\tfrac29,\tfrac49,\tfrac29,\tfrac29)$ with
$\Delta=\tfrac{26}9-\tfrac53\lgg3-\tfrac79\CC>0$, and dropping
the split constraint
(P5) admits $(\tfrac13,\tfrac13,\tfrac23,\tfrac13,\tfrac13)$ with
$\Delta=\tfrac83-\lgg3-\CC>0$.
The Shannon--Fano optimality of the cut is thus the
mechanism that makes \eqref{eq:B2} true, entering the proof only through its
two linear consequences $y\le v\le y+2u$.
\end{remark}

\medskip
Cases A, B1, B2 exhaust shape $(1,3)$ (if $\rho$ does not isolate its leader
then $c<\tfrac12$), each with strict inequality, so, together with
\S\ref{ss:induct}--\ref{ss:unif}, the induction closes; every case is strict:
the dominant case, shapes $(2,3)$, $(2,2)$, $(3,3)$ (the latter via the
equality clause of the IH), the three cases of shape $(1,3)$, and the uniform
quinary source, except shape $(1,2)$, which is tight exactly at the
uniform ternary source, where equality indeed holds (\S\ref{ss:unif}).
Theorem~\ref{thm:aff}, and with it the affine branch of
Theorem~\ref{thm:main}, are proved; the middle branches are the topic of
the next section. \qed

\section{The Middle Envelope on $[\tfrac13,\tfrac12)$}\label{sec:midcap}

Substituting the affine bound back into the recursion yields the two middle
branches of Theorem~\ref{thm:main}: a non-affine, V-shaped envelope on $[\tfrac13,\tfrac12)$ that strictly improves the affine
bound on the whole interval, and hence a uniform cap for all $p_1<\tfrac12$.
Throughout this section
\[
\begin{aligned}
  K\ &:=\ \tfrac52-\tfrac56\lgg5\ =\ \tfrac{1+\CC}{2}\ =\ 0.565059\ldots,\\
  \delta\ &:=\ 1-2p_1,\qquad V(q)\ :=\ 2-q-\Hh(q),
\end{aligned}
\]
$V$ is the tight bound of
Corollary~\ref{cor:isob} for a block whose leader fraction is $q\ge\tfrac12$
(convex, since $-\Hh$ is convex). The auxiliary single curve
\[
  \tilde B(p_1)\ :=\ 1+(1+\CC)\,p_1-\Hh(p_1)
  \qquad\bigl(\tfrac13\le p_1\le\tfrac12\bigr)
\]
will lie above all the comparison curves at once; the two branches of the
envelope itself are introduced after Lemma~\ref{lem:midcomp}.
Two exact values anchor our bounds: since $\Hh(\tfrac13)=\lgg3-\tfrac23$,
\[
\begin{aligned}
  \tilde B(\tfrac13)\ &=\ 2-\lgg3+\tfrac{\CC}{3}\ =\ 0.458410\ldots,\\
  \tilde B(\tfrac12)\ &=\ \tfrac{1+\CC}{2}\ =\ K\ \ \text{exactly};
\end{aligned}
\]
and $\tilde B$ is strictly increasing on $[\tfrac13,\tfrac12]$:
\begin{equation}\label{eq:btmono}
  \tilde B'(x)\ =\ 1+\CC-\lgg\tfrac{1-x}{x}\ \ge\ 1+\CC-\lgg2\ =\ \CC\ >\ 0,
\end{equation}
because $\tfrac{1-x}{x}\le2$ for $x\ge\tfrac13$. Everything rests on a fine
description of the \emph{group} case (the root does not isolate the leader),
where the leader has depth two.

\begin{lemma}[Fine group bounds]\label{lem:fine}
Let $\tfrac13\le p_1<\tfrac12$ and let the root split be a group split
($2p_1+p_2<1$) with left mass $a$; put $x:=a-p_1>0$ and, for $0<x<1-p_1$,
\[
\begin{aligned}
  F(p_1,x)\ &:=\ 1+p_1+2x+(1-p_1-x)\,\CC\\
  &\phantom{{}:={}}-\mathrm H\bigl(p_1,x,1-p_1-x\bigr),\\
  G(p_1,x)\ &:=\ F(p_1,x)+x\,\CC .
\end{aligned}
\]
If the left part has exactly two symbols, then $R<F(p_1,x)$ and
$x\in[\tfrac\delta3,\delta]$; if it has at least three, then $R<G(p_1,x)$ and
$x\in[\tfrac{2\delta}5,\tfrac{2\delta}3]$. Consequently
\[
  R\ <\ \max\Bigl\{F\bigl(p_1,\tfrac\delta3\bigr),\ F(p_1,\delta),\
  G\bigl(p_1,\tfrac{2\delta}5\bigr),\ G\bigl(p_1,\tfrac{2\delta}3\bigr)\Bigr\}.
\]
\end{lemma}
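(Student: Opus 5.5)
The plan is to expand the recursion of Lemma~\ref{lem:rec} at the root, and in the three-or-more case once more inside the left part. The residual sub-blocks are bounded by Corollary~\ref{cor:aff} or by the chord bound, and the binary-entropy terms are merged with the grouping rule
\[
  \Hh(a)+a\,\Hh\bigl(\tfrac{p_1}{a}\bigr)=\mathrm H(p_1,x,1-a).
\]
Write $L=(p_1,\dots,p_k)$ for the left part, so $a=p_1+x$. The root gives
\[
  R=1-\Hh(a)+a\,R(B_L)+(1-a)\,R(B_R).
\]
For the right part I use $(1-a)R(B_R)<p_{k+1}+(1-a)\CC$. This comes from Corollary~\ref{cor:aff} if $|B_R|\ge2$; if $B_R$ is a singleton the left side is $0$ and the bound is trivially strict. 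The leader of the right part satisfies $p_{k+1}\le p_k$.

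\emph{Two-symbol left part.} Here $x=p_2$, and exactly $a\,R(B_L)=a-a\Hh(p_1/a)$. With $p_{k+1}\le p_2=x$, the grouping identity gives
\[
  R<1+a+x+(1-a)\CC-\mathrm H(p_1,x,1-a)=F(p_1,x),
\]
since $a+x=p_1+2x$ and $1-a=1-p_1-x$. For the range of $x$:
\begin{itemize}
\item the group condition $2p_1+p_2<1$ gives $x<\delta$;
\item Lemma~\ref{lem:key} gives $|2a-1|\le p_2=x$, and the case $2a-1<0$ reads $\delta-2x\le x$, i.e.\ $x\ge\delta/3$.
\end{itemize}

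\emph{Left part with at least three symbols.} Since $p_1\ge\tfrac13$ and $a<\tfrac23$ (Lemma~\ref{lem:shapes}), the leader fraction satisfies $p_1/a>\tfrac12$. By Lemma~\ref{lem:iso}, applied to the renormalised block, $L$ splits off $p_1$ alone. Its remainder $(p_2,\dots,p_k)$ has mass $x$ and at least two symbols. Corollary~\ref{cor:aff} then gives
\[
  a\,R(B_L)<a\bigl(1-\Hh(p_1/a)\bigr)+p_2+x\,\CC.
\]
Adding the root and right-part terms, the leftover linear terms are $a+p_2+p_{k+1}$. The key observation is that $p_2+p_{k+1}\le p_2+p_k\le x$, because $p_2$ and $p_k$ are two distinct summands of $x$. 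The total is therefore $1+p_1+2x+(1-p_1)\CC-\mathrm H(p_1,x,1-a)=G(p_1,x)$. For the range, $p_k\le x/2$, and Lemma~\ref{lem:key} gives $|2p_1+2x-1|\le x/2$. The two sign cases give $x\ge 2\delta/5$ and $x\le 2\delta/3$.

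\emph{Conclusion.} For fixed $p_1$, both $F$ and $G$ are affine in $x$ minus an entropy of affine arguments, hence convex in $x$. Their maxima on the stated intervals are therefore attained at the endpoints, and strictness is inherited from the strict inequalities above, which gives the displayed four-term maximum.

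The main obstacle is the three-symbol case. Bounding $a\,R(B_L)$ directly by the affine bound loses an extra $x$, and so does using the block form of Corollary~\ref{cor:isob}. The fix is the second-level expansion inside $L$ together with the pairing $p_2+p_{k+1}\le x$. Care is also needed that $p_1\ge\tfrac13$ is what guarantees isolation inside $L$.
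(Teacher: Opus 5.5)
Your proof is correct and is essentially the paper's: the same root expansion with the leader isolated inside $L$, the same grouping identity, the same pairing $p_2+p_{k+1}\le p_2+p_k\le x$, and the same key-lemma windows, with the endpoint maximisation done by convexity in $x$ (the paper argues equivalently that the partial derivative in $x$ is increasing). The one difference is how you get isolation inside $L$: you cite $a<\tfrac23$ from Lemma~\ref{lem:shapes}, whereas the paper uses $x\le\delta$, i.e.\ $a\le1-p_1\le2p_1$; the paper's proof of $a<\tfrac23$ goes through $a\le p_1+p_2$ and so only covers two-symbol left parts, but your own window $x\le\tfrac{2\delta}3$ gives $a\le\tfrac{2-p_1}3<\tfrac23$ directly, so your argument stands.
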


\begin{proof}
The left part has at least two symbols: its leader is $p_1$, and its remainder
$L'$ has mass $x$, largest symbol $p_2$ and least symbol $p_k$ ($k$ is the cut
position), so $p_k\le x$. Lemma~\ref{lem:key} at the root reads
$|2a-1|=|2x-\delta|\le p_k$. Since $|2x-\delta|\le x$ gives $x\le\delta$,
i.e.\ $a\le1-p_1\le2p_1$ (as $p_1\ge\tfrac13$), the leader's fraction of the
left part is $p_1/a\ge\tfrac12$, so the leader is again split off on its own within the left part
(Lemma~\ref{lem:iso}); this reduction applied inside the left part and the
grouping identity
$(1-\Hh(a))+a\bigl(1-\Hh(p_1/a)\bigr)=(1+a)-\mathrm H(p_1,x,1-a)$ give
\[
  R=(1+a)-\mathrm H(p_1,x,1-a)+x\,R_{L'}+(1-a)\,R_{\mathrm{Rt}},
\]
where $R_{L'}$ and $R_{\mathrm{Rt}}$ are the redundancies of the renormalised left remainder $L'$ (mass $x$) and right part $\mathrm{Rt}$ (mass $1-a$, largest symbol $p_{k+1}$). By
Corollary~\ref{cor:aff} (every source),
$(1-a)R_{\mathrm{Rt}}<p_{k+1}+(1-a)\CC$ (or $=0$ for a singleton, when the
same bound holds trivially).

\textbf{Two left symbols.} Then $L'=\{p_2\}$, so $x=p_2=p_k$ and $xR_{L'}=0$;
with $p_{k+1}\le p_k=x$,
\[
  R\ <\ (1+a)-\mathrm H(p_1,x,1-a)+x+(1-a)\,\CC\ =\ F(p_1,x),
\]
and $|2x-\delta|\le p_k=x$ reads $x\ge\tfrac\delta3$
\textup{(}$\delta-2x\le x$\textup{)} and $x\le\delta$
\textup{(}$2x-\delta\le x$\textup{)}.

\textbf{At least three left symbols.} Then $|L'|\ge2$ and $xR_{L'}<p_2+x\CC$
(Corollary~\ref{cor:aff} again); since $p_2$ and $p_k$ are distinct symbols of
$L'$, $p_2+p_{k+1}\le p_2+p_k\le x$, whence $R<F(p_1,x)+x\CC=G(p_1,x)$.
Moreover $p_k\le\min(p_2,\,x-p_2)\le\tfrac x2$, so $|2x-\delta|\le\tfrac x2$:
$x\ge\tfrac{2\delta}5$ and $x\le\tfrac{2\delta}3$.

For the endpoints, in $x$, with $t:=1-p_1-x$ and $\mathrm I'(t)=-\lgg t-\lgg e$,
\[
  \frac{\partial F}{\partial x}\ =\ 2-\CC+\lgg\frac{x}{t},
  \qquad
  \frac{\partial G}{\partial x}\ =\ 2+\lgg\frac{x}{t},
\]
both strictly increasing in $x$ ($x$ increases, $t$ decreases); so
$F(p_1,\cdot)$ and $G(p_1,\cdot)$ are decreasing-then-increasing and each is
maximised over its $x$-interval at an endpoint.
\end{proof}

\begin{lemma}[Comparisons]\label{lem:midcomp}
Write $W_{\mathrm{iso}}(p_1):=\bigl(1-\Hh(p_1)\bigr)+(1-p_1)K$ and
$T_0(x):=3-3x-2\mathrm I(x)-\mathrm I(1-2x)$; by the grouping identity,
$T_0(x)=(1-\Hh(x))+(1-x)V\bigl(\tfrac{x}{1-x}\bigr)$. On
$[\tfrac13,\tfrac12]$, with $\delta=1-2p_1$:
\begin{enumerate}
\item[\textup{(o)}] $\tfrac13+\CC<K$:
  $\ K-\tfrac13-\CC=\tfrac56\lgg5-\tfrac{11}6>0$.
\item[\textup{(i)}] $W_{\mathrm{iso}}\le\tilde B$, by pure algebra: since
  $K=\tfrac{1+\CC}2$, the $\Hh$-terms cancel and
  \[
  \begin{aligned}
    \tilde B(p_1)-W_{\mathrm{iso}}(p_1)\ &=\ (1+\CC)\,p_1-(1-p_1)\,\frac{1+\CC}{2}\\
    &=\ \frac{(1+\CC)(3p_1-1)}{2}\ \ge\ 0 .
  \end{aligned}
  \]
\item[\textup{(ii)}] $T_0\le\tilde B$: the difference
  $\tilde B-T_0=-2+(4+\CC)\,p_1+\mathrm I(p_1)-\mathrm I(1-p_1)+\mathrm I(\delta)$ is
  concave, with endpoint values $\tfrac{\CC}3>0$ at $\tfrac13$ and
  $\tfrac{\CC}2>0$ at $\tfrac12$.
\item[\textup{(iii)}] $F(\cdot,\delta)\le\tilde B$: since $1-p_1-\delta=p_1$,
  $F(p_1,\delta)=T_0(p_1)+\CC\,p_1$, and
  \[
    \tilde B-F(\cdot,\delta)\ =\ -2+4p_1+\mathrm I(p_1)-\mathrm I(1-p_1)+\mathrm I(\delta)
  \]
  is concave and vanishes at both endpoints: $\tilde B$ is exactly the
  concave curve pinned to the worst two-left-symbol bound at $p_1=\tfrac13$ and
  at $p_1=\tfrac12$.
\item[\textup{(iv)}] $F(\cdot,\tfrac\delta3)\le\tilde B$: the difference is
  concave with endpoint values
  $-\tfrac49+\tfrac23\lgg3-\tfrac5{27}\lgg5>0$ at $\tfrac13$
  and $0$ at $\tfrac12$.
\item[\textup{(v)}] $G(\cdot,\tfrac{2\delta}5)\le\tilde B$: the difference is
  concave with endpoint values $-\tfrac83+\tfrac{11}9\lgg5>0$
  at $\tfrac13$ and $0$ at $\tfrac12$.
\item[\textup{(vi)}] $G(\cdot,\tfrac{2\delta}3)\le\tilde B$: the difference is
  concave with endpoint values
  $-\tfrac{20}9+\tfrac23\lgg3+\tfrac59\lgg5>0$ at $\tfrac13$
  and $0$ at $\tfrac12$.
\end{enumerate}
In \textup{(ii)}--\textup{(vi)}, concavity places each difference above the
smaller endpoint value on all of $[\tfrac13,\tfrac12]$, hence $\ge0$ there.
\end{lemma}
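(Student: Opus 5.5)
Every item of the lemma has the same form: an affine function minus an entropy of affine arguments, compared on $[\tfrac13,\tfrac12]$. I would handle them all by one scheme. Write each difference $D(p_1)=\tilde B(p_1)-(\text{comparison curve})$ explicitly. Prove that $D$ is concave. Then evaluate $D$ exactly at $p_1=\tfrac13$ and $p_1=\tfrac12$. A concave function on an interval lies above the chord through its endpoint values, so $D\ge\min(D(\tfrac13),D(\tfrac12))\ge0$.

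Items (o) and (i) are pure arithmetic. For (o), substitute $\CC=4-\tfrac53\lgg5$ and $K=\tfrac52-\tfrac56\lgg5$. This gives $K-\tfrac13-\CC=\tfrac56\lgg5-\tfrac{11}6$, which is positive because $5^5>2^{11}$. For (i), use $K=\tfrac{1+\CC}2$; the $\Hh$ terms cancel and the displayed factor $(3p_1-1)\ge0$ remains.

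For (ii)--(vi) I first expand each comparison curve. For $T_0$ the grouping identity is immediate: $\Hh(x)+(1-x)\Hh(\tfrac{x}{1-x})=2\mathrm I(x)+\mathrm I(1-2x)$. For $F(p_1,\delta)$ and $F(p_1,\tfrac\delta3)$, and for $G(p_1,\tfrac{2\delta}5)$ and $G(p_1,\tfrac{2\delta}3)$, I substitute $x$ and $t=1-p_1-x$ as affine functions of $p_1$, for example $t=p_1$ when $x=\delta$ and $t=\tfrac{2+p_1}{3}$ when $x=\tfrac\delta3$. Each difference $\tilde B-F$ or $\tilde B-G$ then has the form
\[
\text{(affine in }p_1)+\mathrm I(p_1)-\mathrm I(1-p_1)+\mathrm I(x)+\mathrm I(t).
\]

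Concavity is the main obstacle, because of the term $-\mathrm I(1-p_1)$, which is convex. I would not check it by second derivatives term by term. Instead I would recombine $-\mathrm I(1-p_1)+\mathrm I(t)$, using that $t$ is a fixed fraction of $1-p_1$ plus a multiple of $\delta$. Alternatively, I would compute the second derivative directly:
\[
\frac{1}{\ln2}\Bigl(-\frac1{p_1}+\frac1{1-p_1}-\frac{x'^2}{x}-\frac{t'^2}{t}\Bigr).
\]
Here $x'$ and $t'$ are constant slopes, e.g.\ $x'=-2$, $t'=1$ for $x=\delta$. The aim is to show this is $\le0$ on $[\tfrac13,\tfrac12]$. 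The term $\tfrac{x'^2}{x}$ blows up as $\delta\to0$, and $-\tfrac1{p_1}+\tfrac1{1-p_1}\le 0$ for $p_1\le\tfrac12$, so the sign should be immediate in every item. For (ii) the same formula applies with the extra term $\mathrm I(\delta)$. If any case resists this, I would fall back to showing the derivative is monotone.

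The endpoint values come from exact substitution. At $p_1=\tfrac13$ one has $\delta=\tfrac13$, and the entropies involve $\tfrac13,\tfrac19,\tfrac29,\tfrac{2}{15},\tfrac{4}{15}$, etc. These reduce to combinations of $1$, $\lgg3$ and $\lgg5$. Positivity of each value, such as $-\tfrac83+\tfrac{11}9\lgg5>0$ (equivalently $5^{11}>2^{24}$), becomes an integer-power comparison, as the paper does throughout. At $p_1=\tfrac12$ one has $\delta=0$, every $x$-term vanishes, and each difference is $0$ or $\tfrac{\CC}2$. Note also that in (iii) both endpoint values are $0$.
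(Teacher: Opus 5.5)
Your overall scheme (prove each difference concave, evaluate it exactly at $p_1=\tfrac13$ and $p_1=\tfrac12$, and decide the signs by integer-power comparisons) is the paper's own. Your handling of (o), (i) and the values at $\tfrac12$ is right. The gap is in the concavity step, which you correctly single out as the crux but justify with a miscomputed difference. Both $\tilde B$ and $F(p_1,x)$ (hence $G$) subtract $\mathrm I(p_1)$, so it cancels:
\[
\tilde B-F(\cdot,x)=(\text{affine})-\mathrm I(1-p_1)+\mathrm I(x)+\mathrm I(t),
\]
with no $+\mathrm I(p_1)$. The $-\tfrac1{p_1}$ in your second-derivative formula is spurious; in (iii) it is counted twice, since there $t=p_1$ supplies the genuine one. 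For (iv)--(vi) the correct second derivative times $\ln2$ is $\tfrac1{1-p_1}-\tfrac{x'^2}{x}-\tfrac{t'^2}{t}$.

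So the convex contribution $\tfrac1{1-p_1}$ is not offset by ``$-\tfrac1{p_1}+\tfrac1{1-p_1}\le0$''. Your claim that the sign ``should be immediate in every item'' therefore fails exactly in (iv)--(vi). Noting that $x'^2/x$ blows up as $\delta\to0$ only helps near $p_1=\tfrac12$, not near $\tfrac13$. Your other idea, recombining $-\mathrm I(1-p_1)+\mathrm I(t)$, cannot work either: in (iv)--(vi) that pair alone has second derivative $\tfrac1{1-p_1}-\tfrac{t'^2}{t}>0$ on $[\tfrac13,\tfrac12]$. Concavity there must come from $\mathrm I(x)$.

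The fix is the quantitative bound the paper uses. With $x=c\delta$ one has $x'^2/x=4c/\delta\ge12c$, because $\delta\le\tfrac13$ on $[\tfrac13,\tfrac12)$. For $c=\tfrac13,\tfrac25,\tfrac23$ this gives $\tfrac4{3\delta}\ge4$, $\tfrac8{5\delta}\ge\tfrac{24}5$ and $\tfrac8{3\delta}\ge8$, all exceeding $2\ge\tfrac1{1-p_1}$. Only in (ii) and (iii), where an $\mathrm I(p_1)$ genuinely survives (directly from $T_0$, or through $t=p_1$), is the pairing $-\tfrac1{p_1}+\tfrac1{1-p_1}\le0$ actually available.

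There is also an arithmetic slip: for $x=\tfrac\delta3$ one has $t=1-p_1-\tfrac{1-2p_1}3=\tfrac{2-p_1}3$, not $\tfrac{2+p_1}3$. Your version gives total mass above $1$ at $p_1=\tfrac13$. The correct argument triples at $\tfrac13$ are $(\tfrac13,\tfrac19,\tfrac59)$, $(\tfrac13,\tfrac2{15},\tfrac8{15})$ and $(\tfrac13,\tfrac29,\tfrac49)$; $\tfrac4{15}$ does not occur.
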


\begin{proof}
In each of (ii)--(vi) the difference of the two curves is concave (a sum of
$\pm\mathrm I$ of affine arguments), so nonnegativity on $[\tfrac13,\tfrac12]$
follows from its two endpoint values; (o) and (i) are direct algebra.
\textbf{Concavity in
(ii)--(vi).} Each difference is affine plus terms $\pm \mathrm I(u(p_1))$ with $u$
affine of slope $s$, contributing $\pm s^2I''(u)=\mp\tfrac{s^2}{u\ln2}$ to
the second derivative; collecting them, the second derivatives times $\ln2$
are
\[
  \textup{(ii),(iii)}:\ -\tfrac1{p_1}-\tfrac4\delta+\tfrac1{1-p_1};\qquad
  \textup{(iv)}:\ -\tfrac4{3\delta}-\tfrac1{3(2-p_1)}+\tfrac1{1-p_1};
\]
\[
  \textup{(v)}:\ -\tfrac8{5\delta}-\tfrac1{5(3-p_1)}+\tfrac1{1-p_1};\qquad
  \textup{(vi)}:\ -\tfrac8{3\delta}-\tfrac1{3(1+p_1)}+\tfrac1{1-p_1},
\]
each negative: on $[\tfrac13,\tfrac12)$ one has $\tfrac1{1-p_1}\le2$, while
$\delta\le\tfrac13$ gives $\tfrac4\delta\ge12$, $\tfrac4{3\delta}\ge4$,
$\tfrac8{5\delta}\ge\tfrac{24}5$, $\tfrac8{3\delta}\ge8$, all $>2$.

\textbf{Endpoints at $\tfrac12$.} There $\delta=0$ and all four constrained
curves degenerate to
$F(\tfrac12,0)=G(\tfrac12,0)=\tfrac32+\tfrac{\CC}2-\Hh(\tfrac12)=K
=\tilde B(\tfrac12)$, while $T_0(\tfrac12)=\tfrac12=\tilde B(\tfrac12)
-\tfrac{\CC}2$: the values claimed.

\textbf{Endpoints at $\tfrac13$}, where $\delta=\tfrac13$ and
$\tilde B(\tfrac13)=\tfrac{10}3-\lgg3-\tfrac59\lgg5$ (substitute
$\CC=4-\tfrac53\lgg5$). For (ii):
$T_0(\tfrac13)=2-3\mathrm I(\tfrac13)=2-\lgg3$, so $\tilde
B(\tfrac13)-T_0(\tfrac13)=\tfrac{\CC}3$. For (iii): using
$\mathrm I(\tfrac13)=\tfrac{\lgg3}3$ and $\mathrm I(\tfrac23)=\tfrac23(\lgg3-1)$,
\[
\begin{aligned}
  &-2+\tfrac43+\mathrm I\bigl(\tfrac13\bigr)-\mathrm I\bigl(\tfrac23\bigr)
  +\mathrm I\bigl(\tfrac13\bigr)\\
  &\quad=\ -\tfrac23+\tfrac{\lgg3}3-\tfrac23(\lgg3-1)+\tfrac{\lgg3}3\ =\ 0 .
\end{aligned}
\]
For (iv)--(vi), evaluating $\mathrm H$ at the argument triples
$\bigl(\tfrac13,\tfrac19,\tfrac59\bigr)$,
$\bigl(\tfrac13,\tfrac2{15},\tfrac8{15}\bigr)$,
$\bigl(\tfrac13,\tfrac29,\tfrac49\bigr)$ and subtracting from
$\tilde B(\tfrac13)$ yields the closed forms in (iv)--(vi), each positive.
\end{proof}

The two branches of the envelope are the isolate curve $W_{\mathrm{iso}}$ of
Lemma~\ref{lem:midcomp} and the two-left-symbol endpoint curve at
$x=\tfrac\delta3$, whose explicit form (substitute $x=\tfrac\delta3$,
$1-p_1-x=\tfrac{2-p_1}3$ in Lemma~\ref{lem:fine}) is
\[
  F\bigl(p_1,\tfrac\delta3\bigr)
  \ =\ \frac{5-p_1}{3}+\frac{2-p_1}{3}\,\CC
  -\mathrm H\Bigl(p_1,\frac{1-2p_1}{3},\frac{2-p_1}{3}\Bigr).
\]
These are exactly the two middle rows of Theorem~\ref{thm:main}. One decreases and
the other increases, so they cross exactly once, as the next lemma records.

\begin{lemma}[Crossing]\label{lem:cross}
On $[\tfrac13,\tfrac12]$:
\begin{enumerate}
\item[\textup{(a)}] $W_{\mathrm{iso}}$ is strictly decreasing, with
  $W_{\mathrm{iso}}'=\lgg\tfrac{p_1}{1-p_1}-K\le-K$; and $F(\cdot,\tfrac\delta3)$
  is strictly increasing, with slope at least
  $\tfrac{9\lgg3+2\lgg5-15}{9}\ge\tfrac{77}{180}$.
\item[\textup{(b)}] $\bigl(W_{\mathrm{iso}}-F(\cdot,\tfrac\delta3)\bigr)(\tfrac13)
  =-\tfrac49+\tfrac23\lgg3-\tfrac5{27}\lgg5>0$ and
  $\bigl(W_{\mathrm{iso}}-F(\cdot,\tfrac\delta3)\bigr)(\tfrac12)=-\tfrac K2<0$;
  hence there is a unique $p^*$ with
  $W_{\mathrm{iso}}(p^*)=F\bigl(p^*,\tfrac{\delta(p^*)}3\bigr)$, and
  $F(\cdot,\tfrac\delta3)\le W_{\mathrm{iso}}$ on $[\tfrac13,p^*]$,
  $W_{\mathrm{iso}}\le F(\cdot,\tfrac\delta3)$ on $[p^*,\tfrac12]$. Numerically
  $p^*=0.415798\ldots$, with common value $0.350664\ldots$
\item[\textup{(c)}] $\tfrac25<p^*<\tfrac7{16}$:
\[
  \bigl(W_{\mathrm{iso}}-F(\cdot,\tfrac\delta3)\bigr)\bigl(\tfrac25\bigr)
  =-\tfrac{83}{30}+\tfrac65\lgg3+\tfrac7{18}\lgg5\ >\ 0,
\]
\[
\begin{aligned}
  &\bigl(F(\cdot,\tfrac\delta3)-W_{\mathrm{iso}}\bigr)\bigl(\tfrac7{16}\bigr)\\
  &\quad=\tfrac{119}{96}-\tfrac{27}{16}\lgg3+\tfrac{185}{288}\lgg5\ >\ 0 .
\end{aligned}
\]
\end{enumerate}
\end{lemma}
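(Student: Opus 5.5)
The plan is to treat the two curves separately: get monotonicity of each from an explicit derivative, then combine the endpoint signs with monotonicity to obtain a unique crossing, and finally certify the bracketing values for (c) by evaluation in closed form.

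\textbf{Part (a), $W_{\mathrm{iso}}$.} Differentiate directly. Since $\Hh'(x)=\lgg\tfrac{1-x}{x}$, we get $W_{\mathrm{iso}}'(p_1)=-\lgg\tfrac{1-p_1}{p_1}-K=\lgg\tfrac{p_1}{1-p_1}-K$. For $p_1\le\tfrac12$ the logarithm is $\le0$, so the slope is $\le -K<0$.

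\textbf{Part (a), $F(\cdot,\tfrac\delta3)$.} Write the curve as $\tfrac{5-p_1}3+\tfrac{2-p_1}3\CC-\mathrm I(p_1)-\mathrm I(\tfrac{1-2p_1}3)-\mathrm I(\tfrac{2-p_1}3)$. This is affine minus a concave function, hence convex, so its derivative is increasing. It therefore suffices to bound the derivative at the left endpoint $p_1=\tfrac13$. Using $\mathrm I'(u)=-\lgg u-\lgg e$, the $\lgg e$ terms carry coefficients $1-\tfrac23-\tfrac13=0$ and cancel. This leaves
\[
-\tfrac13-\tfrac{\CC}3+\lgg p_1-\tfrac23\lgg\tfrac{1-2p_1}3-\tfrac13\lgg\tfrac{2-p_1}3 .
\]
At $p_1=\tfrac13$ the arguments are $\tfrac13,\tfrac19,\tfrac59$. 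Substituting these and $\CC=4-\tfrac53\lgg5$ gives $\tfrac{9\lgg3+2\lgg5-15}{9}$. The rational lower bound $\tfrac{77}{180}$ then follows from $\lgg3>\tfrac{19}{12}$ and $\lgg5>\tfrac{16}7$ (or whatever comparison of integer powers makes it work). Since the derivative is increasing, the slope is at least this value on the whole interval.

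\textbf{Part (b).} Compute the endpoint values. At $p_1=\tfrac13$ the value of $W_{\mathrm{iso}}$ is $\tilde B(\tfrac13)$, because the difference in Lemma~\ref{lem:midcomp}(i) vanishes there. Hence $W_{\mathrm{iso}}-F(\cdot,\tfrac\delta3)$ at $\tfrac13$ equals exactly the endpoint value of Lemma~\ref{lem:midcomp}(iv), which is already shown positive. At $\tfrac12$ we have $W_{\mathrm{iso}}(\tfrac12)=\tfrac K2$, while $F(\tfrac12,0)=K$ as in the proof of Lemma~\ref{lem:midcomp}; so the difference is $-\tfrac K2$. The difference is strictly decreasing by (a) and continuous, so the intermediate value theorem gives a unique zero $p^*$, and the sign pattern on either side of $p^*$ follows. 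The numerical values of $p^*$ and the common value come from bisection.

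\textbf{Part (c).} Evaluate both curves at $\tfrac25$ and at $\tfrac7{16}$ in closed form.
\begin{itemize}
\item At $p_1=\tfrac25$ the entropy arguments are $(\tfrac25,\tfrac3{5})$ for $W_{\mathrm{iso}}$ and $(\tfrac25,\tfrac1{15},\tfrac8{15})$ for $F$.
\item At $p_1=\tfrac7{16}$ they are $(\tfrac7{16},\tfrac9{16})$ and $(\tfrac7{16},\tfrac1{24},\tfrac{25}{48})$.
\end{itemize}
The $\lgg7$ terms should cancel, since $\mathrm I(\tfrac7{16})$ appears in both curves. What remains is a combination of $\lgg3$ and $\lgg5$, which is certified positive by integer-power comparisons. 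For example, the expression at $\tfrac25$ is positive iff $3^{108}5^{35}>2^{249}$ after clearing denominators, and the expression at $\tfrac7{16}$ similarly. Monotonicity of the difference then places $p^*$ in $(\tfrac25,\tfrac7{16})$.

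The main obstacle is bookkeeping rather than ideas: obtaining the exact closed forms at $\tfrac7{16}$ (with the $\lgg7$ cancellation) and choosing rational bounds on $\lgg3$ and $\lgg5$ tight enough to certify the $\tfrac{77}{180}$ slope bound and the small positive quantities in (c).
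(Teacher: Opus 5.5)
Your proposal is correct and follows the paper's proof essentially step for step. It uses the same derivative formulas in (a): you get the minimal slope at $p_1=\tfrac13$ from convexity of $F(\cdot,\tfrac\delta3)$, while the paper bounds the three logarithms termwise at $p_1=\tfrac13$, which amounts to the same thing. It also uses the same appeal to Lemma~\ref{lem:midcomp}(i),(iv) together with $F(\tfrac12,0)=K$ in (b), and the same closed-form evaluations, with $\mathrm I(\tfrac7{16})$ cancelling, in (c).

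One arithmetic correction: $\lgg3>\tfrac{19}{12}$ and $\lgg5>\tfrac{16}7$ give only $9\lgg3+2\lgg5>18.82$, short of the required $15+\tfrac{77}{20}=18.85$. Use $\lgg5>\tfrac{23}{10}$ (i.e.\ $5^{10}>2^{23}$) instead, which together with $\lgg3>\tfrac{19}{12}$ yields exactly $\tfrac{77}{180}$.
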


\begin{proof}
The single difference $W_{\mathrm{iso}}-F(\cdot,\tfrac\delta3)$ is strictly
decreasing by (a), so its unique sign change and its signs at the interior
points $\tfrac25,\tfrac7{16}$ follow from the closed-form values below.
(a) $p_1\le1-p_1$ gives the first claim. Differentiating
$F(p_1,\tfrac\delta3)$ in $p_1$ (the $\lgg e$-terms cancel, since the argument
slopes $1,-\tfrac23,-\tfrac13$ sum to zero):
\[
\begin{aligned}
  \tfrac{d}{dp_1}F\bigl(p_1,\tfrac\delta3\bigr)
  \ &=\ -\tfrac{1+\CC}{3}+\lgg3+\lgg p_1\\
  &\quad-\tfrac23\lgg\delta-\tfrac13\lgg(2-p_1)\\
  \ &\ge\ -\tfrac{1+\CC}{3}+\lgg3-\tfrac{\lgg5}3 \\
  \ &=\ \tfrac{9\lgg3+2\lgg5-15}{9},
\end{aligned}
\]
using $\lgg p_1\ge-\lgg3$, $\lgg\delta\le-\lgg3$, $\lgg(2-p_1)\le\lgg5-\lgg3$
and $\CC=4-\tfrac53\lgg5$; and $9\lgg3+2\lgg5>15$.

(b) The value at $\tfrac13$ is the endpoint value of
Lemma~\ref{lem:midcomp}(iv), since $W_{\mathrm{iso}}(\tfrac13)
=\tilde B(\tfrac13)$ by item (i) there; at $\tfrac12$,
$W_{\mathrm{iso}}(\tfrac12)=\tfrac K2$ while $F(\tfrac12,0)=K$. The
difference is strictly decreasing by (a), whence the unique zero and the sign
pattern.

(c) At $\tfrac25$ ($\delta=\tfrac15$): $\Hh(\tfrac25)
=\lgg5-\tfrac25-\tfrac35\lgg3$ gives $W_{\mathrm{iso}}(\tfrac25)
=\tfrac{29}{10}+\tfrac35\lgg3-\tfrac32\lgg5$, and
$\mathrm I(\tfrac1{15})=\tfrac1{15}(\lgg3+\lgg5)$,
$\mathrm I(\tfrac8{15})=\tfrac8{15}(\lgg3+\lgg5-3)$ give
$F(\tfrac25,\tfrac1{15})=\tfrac{17}3-\tfrac35\lgg3-\tfrac{17}9\lgg5$. At $\tfrac7{16}$ ($\delta=\tfrac18$): the terms
$\mathrm I(\tfrac7{16})$ cancel in the difference, which needs only
$\mathrm I(\tfrac1{24})=\tfrac1{24}(3+\lgg3)$,
$\mathrm I(\tfrac{25}{48})=\tfrac{25}{48}(4+\lgg3-2\lgg5)$ and
$\mathrm I(\tfrac9{16})=\tfrac9{16}(4-2\lgg3)$.
\end{proof}

\begin{lemma}[Per-piece domination]\label{lem:envdom}
Write $h:=W_{\mathrm{iso}}-F(\cdot,\delta)$ and, for a comparison curve $X$,
$e_X:=F(\cdot,\tfrac\delta3)-X$. On $[\tfrac13,\tfrac12]$:
\begin{enumerate}
\item[\textup{(i)}] $h$ is concave, $h(\tfrac13)=0$ exactly, and
\[
  h\bigl(\tfrac7{16}\bigr)
  =-\tfrac{37}{32}+\tfrac98\lgg3+\tfrac{25}{96}\lgg5-\tfrac7{16}\lgg7\ >\ 0 .
\]
Hence $F(\cdot,\delta)\le W_{\mathrm{iso}}$
on $[\tfrac13,\tfrac7{16}]\supseteq[\tfrac13,p^*]$, and
$T_0=F(\cdot,\delta)-\CC\,p_1\le W_{\mathrm{iso}}$ there as well.
\item[\textup{(ii)}] $e_{F(\cdot,\delta)}$ is concave, vanishes at $\tfrac12$,
and $e_{F(\cdot,\delta)}(p^*)=h(p^*)\ge0$ by the crossing identity
$F(p^*,\tfrac{\delta}3)=W_{\mathrm{iso}}(p^*)$ and (i); hence
$F(\cdot,\delta)\le F(\cdot,\tfrac\delta3)$ on $[p^*,\tfrac12]$.
\item[\textup{(iii)}] $e_{G(\cdot,2\delta/5)}$, $e_{G(\cdot,2\delta/3)}$ and
$e_{T_0}$ are concave, with values $0$, $0$, $\tfrac{\CC}2$ at $\tfrac12$ and,
at $\tfrac25$,
\begin{multline*}
  e_{G(\cdot,2\delta/5)}\bigl(\tfrac25\bigr)
  =\tfrac{92}{75}-\tfrac35\lgg3+\tfrac{32}{45}\lgg5-\tfrac{13}{25}\lgg13\\
  \ \ge\ -\tfrac{61}{75}-\tfrac{186}{175}\lgg3+\tfrac{1696}{1575}\lgg5
  \ >\ 0,
\end{multline*}
where the first lower bound is the chord bound
$\mathrm I(\tfrac{13}{25})\ge\tfrac57\,\mathrm I(\tfrac{64}{125})
+\tfrac27\,\mathrm I(\tfrac{27}{50})$ ($\mathrm I$ is concave and
$\tfrac{13}{25}=\tfrac57\cdot\tfrac{64}{125}+\tfrac27\cdot\tfrac{27}{50}$),
which eliminates $\lgg13$, and the second is elementary; and
\begin{gather*}
  e_{G(\cdot,2\delta/3)}\bigl(\tfrac25\bigr)
  =\tfrac{16}{15}+\tfrac19\lgg5-\tfrac7{15}\lgg7>0,\\
  e_{T_0}\bigl(\tfrac25\bigr)
  =\tfrac{46}{15}-\tfrac35\lgg3-\tfrac89\lgg5>0 .
\end{gather*}
Hence $G(\cdot,\tfrac{2\delta}5)$,
$G(\cdot,\tfrac{2\delta}3)$, $T_0\le F(\cdot,\tfrac\delta3)$ on
$[\tfrac25,\tfrac12]\supseteq[p^*,\tfrac12]$.
\item[\textup{(iv)}] $W_{\mathrm{iso}}-G(\cdot,\tfrac{2\delta}5)$ and
$W_{\mathrm{iso}}-G(\cdot,\tfrac{2\delta}3)$ are concave, positive at
$\tfrac13$ (the endpoint values of
Lemma~\ref{lem:midcomp}(v),(vi), since
$W_{\mathrm{iso}}(\tfrac13)=\tilde B(\tfrac13)$), and positive at $\tfrac25$,
where each equals the certified sum
$\bigl(W_{\mathrm{iso}}-F(\cdot,\tfrac\delta3)\bigr)(\tfrac25)+e_G(\tfrac25)$
of Lemma~\ref{lem:cross}(c) and item (iii). Hence both $G$-curves are
$\le W_{\mathrm{iso}}$ on $[\tfrac13,\tfrac25]$; and on $[\tfrac25,p^*]$ as
well, via $G\le F(\cdot,\tfrac\delta3)\le W_{\mathrm{iso}}$ (item (iii) and
Lemma~\ref{lem:cross}(b),(c)).
\end{enumerate}
Consequently every curve considered ($W_{\mathrm{iso}}$, $T_0$,
$F(\cdot,\tfrac\delta3)$, $F(\cdot,\delta)$, $G(\cdot,\tfrac{2\delta}5)$,
$G(\cdot,\tfrac{2\delta}3)$) is $\le W_{\mathrm{iso}}$ on $[\tfrac13,p^*]$
and $\le F(\cdot,\tfrac\delta3)$ on $[p^*,\tfrac12)$.
\end{lemma}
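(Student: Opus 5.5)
The plan is to prove each item by the same device used throughout Lemma~\ref{lem:midcomp}: show that the relevant difference is concave on $[\tfrac13,\tfrac12]$, evaluate it exactly at two points, and conclude nonnegativity between them. The one exception is the part of (iv) that chains two already-proved inequalities. Concavity is checked as in the proof of Lemma~\ref{lem:midcomp}. Each difference is affine in $p_1$ plus terms $\pm\mathrm I(u(p_1))$ with $u$ affine of slope $s$. So the second derivative times $\ln2$ is a sum of $\mp s^2/u$ terms, and we show the negative terms dominate. The only positive contribution should come from $+\mathrm I(1-p_1)$ via $-\Hh$, giving $\tfrac1{1-p_1}\le2$. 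The negative contribution comes from the $\delta$-dependent arguments, whose coefficient $\propto1/\delta\ge3$ times a slope factor exceeds $2$.

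\textbf{Item (i).} Write $h=(1-\Hh(p_1))+(1-p_1)K-F(p_1,\delta)$. Here $F(p_1,\delta)$ contains $-\mathrm H(p_1,\delta,p_1)$, so $h$ contains $+\mathrm I(\delta)+2\mathrm I(p_1)$, minus $\Hh$-terms, plus an affine part. We collect second derivatives and expect $-4/\delta$ to dominate, which gives concavity. The value $h(\tfrac13)=0$ follows from $W_{\mathrm{iso}}(\tfrac13)=\tilde B(\tfrac13)$ (Lemma~\ref{lem:midcomp}(i)) together with Lemma~\ref{lem:midcomp}(iii), which says $\tilde B=F(\cdot,\delta)$ at $\tfrac13$. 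The value at $\tfrac7{16}$ (where $\delta=\tfrac18$) is computed in closed form using $\mathrm I(\tfrac7{16})$, $\mathrm I(\tfrac9{16})$ and $\mathrm I(\tfrac18)$. Its positivity is reduced to an integer-power comparison by the method of Appendix~\ref{app:cert}. Concavity then gives $h\ge0$ on $[\tfrac13,\tfrac7{16}]$, and $\tfrac7{16}>p^*$ by Lemma~\ref{lem:cross}(c). The $T_0$ statement follows immediately from $T_0=F(\cdot,\delta)-\CC p_1$ and $\CC p_1>0$.

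\textbf{Item (ii).} We have $e_{F(\cdot,\delta)}=F(\cdot,\tfrac\delta3)-F(\cdot,\delta)$, and its concavity is checked the same way. It vanishes at $\tfrac12$ because both curves reduce to $F(\tfrac12,0)$. At $p^*$ we use the crossing identity $F(p^*,\tfrac{\delta}3)=W_{\mathrm{iso}}(p^*)$ to rewrite $e_{F(\cdot,\delta)}(p^*)=h(p^*)$, which is $\ge0$ by (i). Concavity on $[p^*,\tfrac12]$ with nonnegative endpoint values finishes (ii).

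\textbf{Item (iii).} The three differences are concave, and their values at $\tfrac12$ come from $F(\tfrac12,0)=G(\tfrac12,0)=K$ and $T_0(\tfrac12)=K-\tfrac\CC2$. The values at $\tfrac25$ (where $\delta=\tfrac15$) are computed in closed form. These involve $\lgg7$ and $\lgg13$ (the arguments $\tfrac{13}{25}$ and $\tfrac7{15}$ appear), so they are not pure $\{2,3,5\}$ expressions. I expect this to be the main obstacle. For the $\lgg13$ term I would use the stated chord bound on $\mathrm I(\tfrac{13}{25})$ to replace it by $\{2,3,5\}$-logs, and then certify by rational bounds on $\lgg3,\lgg5$. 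For the $\lgg7$ expressions I would exponentiate directly: $e>0$ becomes a comparison like $2^{48}5^{5/3}>7^{7}$ after clearing denominators, which is an integer-power check.

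\textbf{Item (iv).} Concavity of $W_{\mathrm{iso}}-G$ is checked as before. The value at $\tfrac13$ comes from Lemma~\ref{lem:midcomp}(v),(vi). The value at $\tfrac25$ is the sum of two positive certified quantities, namely Lemma~\ref{lem:cross}(c) and item (iii), which gives the claim on $[\tfrac13,\tfrac25]$. On $[\tfrac25,p^*]$ we chain (iii) with Lemma~\ref{lem:cross}(b). The final consequence then follows by assembling (i)--(iv) with Lemma~\ref{lem:cross}(b).
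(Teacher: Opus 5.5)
Your proposal is correct and follows the paper's proof essentially step for step: concavity of each difference from its second derivative, exact values at $\tfrac13,\tfrac25,\tfrac7{16},\tfrac12$, the chord bound on $\mathrm I(\tfrac{13}{25})$ to eliminate $\lgg13$, and the same chaining through Lemma~\ref{lem:midcomp} and Lemma~\ref{lem:cross}(b),(c). Two slips need fixing when you write it out. First, the $e_X$ differences contain no $\Hh$ term, and their $\delta$-terms partly cancel: for instance the second derivative of $e_{G(\cdot,2\delta/5)}$ times $\ln2$ is $-\tfrac4{15\delta}+\tfrac{9-2p_1}{15(2-p_1)(3-p_1)}$, which is negative because the second term is at most $\tfrac4{27}<\tfrac45$, not via a $\tfrac1{1-p_1}\le2$ comparison. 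Second, clearing denominators in $e_{G(\cdot,2\delta/3)}(\tfrac25)>0$ gives $2^{48}\cdot5^{5}>7^{21}$, not $2^{48}5^{5/3}>7^{7}$.
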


\begin{proof}
Every comparison below is between two curves whose difference is concave, so
its sign on the relevant subinterval is fixed by the closed-form values at the
endpoints $\tfrac13,\tfrac25,\tfrac7{16},\tfrac12$ computed here.
For concavity, as in Lemma~\ref{lem:midcomp}, each difference is affine
plus terms $\pm \mathrm I(u(p_1))$ with $u$ affine of slope $s$, contributing
$\mp\tfrac{s^2}{u\ln2}$ to the second derivative. Since $W_{\mathrm{iso}}$
and $\tilde B$ differ by an affine function
(Lemma~\ref{lem:midcomp}(i)), the differences in (i) and (iv) have the same
second derivatives as in Lemma~\ref{lem:midcomp}(iii),(v),(vi), and are
therefore concave. For
the differences against $F(\cdot,\tfrac\delta3)$, the second derivatives
times $\ln2$ are
\[
\begin{aligned}
  &\text{(ii), }e_{T_0}:\ -\tfrac1{p_1}-\tfrac8{3\delta}+\tfrac1{3(2-p_1)};\\
  &e_{G(\cdot,2\delta/5)}:\
  -\tfrac4{15\delta}+\tfrac{9-2p_1}{15(2-p_1)(3-p_1)};\\
  &e_{G(\cdot,2\delta/3)}:\
  -\tfrac4{3\delta}-\tfrac{\delta}{3(2-p_1)(1+p_1)},
\end{aligned}
\]
all negative on $[\tfrac13,\tfrac12)$: $\tfrac1{3(2-p_1)}\le\tfrac29<
\tfrac8{3\delta}$, and $\tfrac{9-2p_1}{15(2-p_1)(3-p_1)}
\le\tfrac{25/3}{15\cdot\frac32\cdot\frac52}=\tfrac4{27}
<\tfrac45\le\tfrac4{15\delta}$ (as $\delta\le\tfrac13$).

\textbf{Exact values.} $h(\tfrac13)=0$: both $W_{\mathrm{iso}}$ and
$F(\cdot,\delta)$ equal $\tilde B$ at $\tfrac13$
(Lemma~\ref{lem:midcomp}(i),(iii)). At $\tfrac12$: $\delta=0$ and
$F(\tfrac12,0)=G(\tfrac12,0)=K$, $T_0(\tfrac12)=K-\tfrac{\CC}2$ (proof of
Lemma~\ref{lem:midcomp}), giving the values in (ii),(iii). For
$h(\tfrac7{16})$: with $\mathrm I(\tfrac7{16})=\tfrac7{16}(4-\lgg7)$,
$\mathrm I(\tfrac9{16})=\tfrac9{16}(4-2\lgg3)$, $\mathrm I(\tfrac18)=\tfrac38$, one gets
$W_{\mathrm{iso}}(\tfrac7{16})=-\tfrac{51}{32}+\tfrac98\lgg3
-\tfrac{15}{32}\lgg5+\tfrac7{16}\lgg7$ and
$F(\tfrac7{16},\tfrac18)=-\tfrac7{16}-\tfrac{35}{48}\lgg5
+\tfrac78\lgg7$.

At $\tfrac25$ ($\delta=\tfrac15$), with
$F(\tfrac25,\tfrac1{15})=\tfrac{17}3-\tfrac35\lgg3-\tfrac{17}9\lgg5$
(Lemma~\ref{lem:cross}):
\[
\begin{aligned}
  G\bigl(\tfrac25,\tfrac2{25}\bigr)
  &=\tfrac{111}{25}-\tfrac{13}5\lgg5+\tfrac{13}{25}\lgg13,\\
  G\bigl(\tfrac25,\tfrac2{15}\bigr)
  &=\tfrac{23}5-\tfrac35\lgg3-2\lgg5+\tfrac7{15}\lgg7,
\end{aligned}
\]
\[
  T_0\bigl(\tfrac25\bigr)=\tfrac{13}5-\lgg5,
\]
and subtracting each from $F(\tfrac25,\tfrac1{15})$ gives the three stated
closed forms. For the chord bound, substitute
$\mathrm I(\tfrac{13}{25})=\tfrac{13}{25}\cdot2\lgg5-\tfrac{13}{25}\lgg13$ and
then $\mathrm I(\tfrac{64}{125})=\tfrac{64}{125}(3\lgg5-6)$,
$\mathrm I(\tfrac{27}{50})=\tfrac{27}{50}(1+2\lgg5-3\lgg3)$.
\end{proof}

\begin{theorem}[Middle envelope]\label{thm:mid}
For every source with $\tfrac13\le p_1<\tfrac12$,
\[
  R\ <\ \begin{cases}
  \,W_{\mathrm{iso}}(p_1)\ =\ 1-\Hh(p_1)+(1-p_1)\,\dfrac{1+\CC}2,\\
  \qquad \tfrac13\le p_1\le p^*,\\[8pt]
  \,F\bigl(p_1,\tfrac\delta3\bigr)\ =\ \dfrac{5-p_1}{3}
    +\dfrac{2-p_1}{3}\,\CC\\
  \qquad -\mathrm H\Bigl(p_1,\dfrac{1-2p_1}{3},\dfrac{2-p_1}{3}\Bigr),\\
  \qquad p^*\le p_1<\tfrac12.
  \end{cases}
\]
\end{theorem}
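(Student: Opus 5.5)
The proof splits on the behaviour of the root, using the isolation dichotomy of Lemma~\ref{lem:iso}. Since $p_1<\tfrac12$ forces $n\ge3$, every source with $\tfrac13\le p_1<\tfrac12$ has either an isolating root ($2p_1+p_2\ge1$) or a group root ($2p_1+p_2<1$). In each case I will bound $R$ strictly by one of the comparison curves of Lemma~\ref{lem:envdom}. Then I invoke its final conclusion: every such curve is $\le W_{\mathrm{iso}}$ on $[\tfrac13,p^*]$ and $\le F(\cdot,\tfrac\delta3)$ on $[p^*,\tfrac12)$. This gives the stated two-branch bound; at $p^*$ the branches coincide by Lemma~\ref{lem:cross}(b).

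\emph{Isolating root.} By Lemma~\ref{lem:iso}, $R=(1-\Hh(p_1))+(1-p_1)R(\rho)$, and I bound $R(\rho)$ in two ways.
\begin{enumerate}
\item[(a)] If the leader fraction $c=p_2/(1-p_1)$ of $\rho$ is below $\tfrac12$, then $\rho$ is a source with largest symbol $c<\tfrac12$. I want $R(\rho)<K$, and Corollary~\ref{cor:aff} gives only $R(\rho)<c+\CC$. This is $\le K$ exactly when $c\le\tfrac12-\tfrac{\CC}{2}$, and it is not enough near $c\to\tfrac12$. So I would rather use the cap $R<K$ for all $p_1<\tfrac12$ (Corollary~\ref{cor:kcap}), if it is available independently of this theorem. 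Since the paper derives the cap \emph{from} this theorem, I cannot rely on it and need a direct argument instead.
\item[(b)] If $c\ge\tfrac12$, Corollary~\ref{cor:isob} applied to $\rho$ gives $R(\rho)<V(c)$, so $R<(1-\Hh(p_1))+(1-p_1)V(c)$. This is convex in $c$, hence maximised at an endpoint of the allowed range $c\in[\tfrac12,\tfrac{p_1}{1-p_1}]$. At $c=\tfrac{p_1}{1-p_1}$ the value is exactly $T_0(p_1)$. At $c=\tfrac12$ it is $(1-\Hh(p_1))+(1-p_1)\tfrac32$, and this must be compared with $W_{\mathrm{iso}}$.
\end{enumerate}

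The worry about $K$ is resolved by noticing that $W_{\mathrm{iso}}$ itself is built with $K$. So I expect the intended route is a self-contained bootstrap: first prove $\sup_{c<1/2}R\le K$ directly. Taking the pointwise maximum of the curves over all $p_1<\tfrac12$, the affine bound gives $p_1+\CC\le\tfrac14+\CC<K$ for $p_1<\tfrac14$. On $[\tfrac13,\tfrac12)$, Lemma~\ref{lem:midcomp} shows every curve is $\le\tilde B\le\tilde B(\tfrac12)=K$, with $\tilde B$ increasing by \eqref{eq:btmono}. The $[\tfrac14,\tfrac13)$ gap still needs $p_1+\CC<K$, which is Lemma~\ref{lem:midcomp}(o). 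So the cap follows from Corollary~\ref{cor:aff} on $[0,\tfrac13)$ plus the $\tilde B$ bound on $[\tfrac13,\tfrac12)$. The isolating-with-$c<\tfrac12$ case then inducts on $n$: $\rho$ has fewer symbols, so the cap applies to it and $R<W_{\mathrm{iso}}(p_1)$ strictly. For $c\ge\tfrac12$ the bound is $\max(T_0, (1-\Hh)+\tfrac32(1-p_1))$, and the second term is $\le W_{\mathrm{iso}}$ because $V(c)\le V(\tfrac12)=\tfrac32$. I need to check $\tfrac32\le K$; it fails, since $K\approx0.565$. So in this case I would instead use the exact small-$\rho$ structure: $\rho$ isolating with $c$ near $\tfrac12$ means $R(\rho)\to$ the KLMM bound. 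I would handle this with Case~A's two-level expansion \eqref{eq:2level} and the affine bound on the tail, which should give a curve dominated by $T_0$ or $W_{\mathrm{iso}}$.

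\emph{Group root.} Lemma~\ref{lem:fine} gives $R<\max\{F(p_1,\tfrac\delta3),F(p_1,\delta),G(p_1,\tfrac{2\delta}5),G(p_1,\tfrac{2\delta}3)\}$, and Lemma~\ref{lem:envdom} dominates each of these by the correct branch.

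The main obstacle is the isolating case with $c\ge\tfrac12$ (where $\rho$ isolates again): reducing it to $T_0$ plus curves already in Lemma~\ref{lem:envdom}, and establishing the cap $K$ for $\rho$ without circularity, by strong induction on $n$.
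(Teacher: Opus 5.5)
Your architecture matches the paper's. You use the isolation/group dichotomy at the root, strong induction on $n$ to get $R(\rho)<K$ when the leader fraction $c$ of $\rho$ is below $\tfrac12$, Lemma~\ref{lem:fine} for a group root, and Lemma~\ref{lem:envdom} to place every resulting curve under the correct branch.

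\textbf{The gap.} As written, the isolating case with $c\ge\tfrac12$ is not proved, and the cause is an arithmetic slip: $V(\tfrac12)=2-\tfrac12-\Hh(\tfrac12)=\tfrac12$, not $\tfrac32$. Believing that the endpoint $c=\tfrac12$ of your (correct) convexity argument exceeds $W_{\mathrm{iso}}$, you call this case the main obstacle. You then replace it by a two-level Case~A expansion that you only assert ``should give'' a dominated curve. With the correct value nothing more is needed:
\begin{itemize}
\item since $\tfrac12<K$, the endpoint $c=\tfrac12$ gives $(1-\Hh(p_1))+\tfrac{1-p_1}2<W_{\mathrm{iso}}(p_1)$;
\item the endpoint $c=\tfrac{p_1}{1-p_1}$ gives $T_0(p_1)$, which is one of the six curves of Lemma~\ref{lem:envdom}.
\end{itemize}
This is exactly how the paper closes the case. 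Your later inequality $V(c)\le V(\tfrac12)$ is also false: $V$ decreases only up to $\tfrac23$ and then rises to $V(1)=1$. The endpoint maximum $\max\bigl(V(\tfrac12),V(\tfrac{p_1}{1-p_1})\bigr)$ you wrote first is the right bound. Your detour would in fact close too. With $p_3=t$ the two-level bound is convex in $p_2$, with endpoint values $W_{\mathrm{iso}}(p_1)$ and $T_0(p_1)+\delta\,\CC\le F(p_1,\delta)$. But it is unnecessary, and you did not carry it out.

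\textbf{The cap for $\rho$.} The ``direct'' bootstrap you sketch first is circular on its own. The bound $R<\tilde B(p_1)$ on $[\tfrac13,\tfrac12)$ (Corollary~\ref{cor:bt}) is itself a consequence of this theorem. It becomes sound only inside the strong induction on $n$ that you then adopt, which is what the paper does:
\begin{itemize}
\item when $c\in[\tfrac13,\tfrac12)$, apply the theorem to $\rho$, which has fewer symbols;
\item when $c<\tfrac13$, use Corollary~\ref{cor:aff} together with Lemma~\ref{lem:midcomp}(o).
\end{itemize}
Inside that induction, you may bound $E(c)\le\tilde B(c)\le\tilde B(\tfrac12)=K$ via Lemma~\ref{lem:midcomp}(i),(iv) and the monotonicity of $\tilde B$. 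This is a legitimate substitute for the paper's route, which uses the monotonicity of the two pieces (Lemma~\ref{lem:cross}(a)) together with $W_{\mathrm{iso}}(\tfrac13)<K$.
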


\begin{proof}
Write $E(p_1)$ for the right-hand side (well defined: the two curves agree at
$p^*$). Strong induction on $n$; a source with $p_1<\tfrac12$ has $n\ge3$,
and a block with maximal fraction in $[\tfrac13,\tfrac12)$ has at least three
symbols, so the induction is well founded. A sub-block with maximal fraction
$q$ is bounded by: $q+\CC$ if $q<\tfrac13$ (Corollary~\ref{cor:aff});
$E(q)<K$ if $q\in[\tfrac13,\tfrac12)$ (induction hypothesis, then the
monotonicity of the pieces, Lemma~\ref{lem:cross}(a): on the left piece
$E(q)\le W_{\mathrm{iso}}(\tfrac13)=2-\lgg3+\tfrac{\CC}3<K$, since
$K-W_{\mathrm{iso}}(\tfrac13)=\lgg3-\tfrac56-\tfrac5{18}\lgg5\ge\tfrac{11}{108}>0$
\textup{(}by $\lgg3>\tfrac{19}{12}$, $\lgg5<\tfrac73$\textup{)}, and on the right piece
$E(q)<F(\tfrac12,0)=K$); $V(q)$ if $q\ge\tfrac12$
(Corollary~\ref{cor:isob}). So, exactly as before, every sub-block with
maximal fraction $q<\tfrac12$ has redundancy $<K$.

\textbf{Isolate root} ($2p_1+p_2\ge1$): here $R=(1-\Hh(p_1))+(1-p_1)R(\rho)$
with leader fraction $q=p_2/(1-p_1)\le y:=p_1/(1-p_1)$, and
$y\in[\tfrac12,1)$. If $q<\tfrac13$, then $R(\rho)<\tfrac13+\CC<K$ by
Lemma~\ref{lem:midcomp}(o); if $q\in[\tfrac13,\tfrac12)$, then $R(\rho)<K$.
In both cases $R<(1-\Hh(p_1))+(1-p_1)K=W_{\mathrm{iso}}(p_1)$. If $q\ge\tfrac12$, then
$R(\rho)<V(q)\le\max\bigl(V(\tfrac12),V(y)\bigr)=\max\bigl(\tfrac12,V(y)\bigr)$,
since $V$ is convex on $[\tfrac12,y]\ni q$. If the maximum is $\tfrac12$,
then $R<(1-\Hh(p_1))+\tfrac{1-p_1}2<W_{\mathrm{iso}}$ (as $\tfrac12<K$);
otherwise $R<(1-\Hh(p_1))+(1-p_1)V(y)=T_0(p_1)$ by the grouping identity of
Lemma~\ref{lem:midcomp}. So the isolate root yields $R<W_{\mathrm{iso}}(p_1)$
or $R<T_0(p_1)$.

\textbf{Group root} ($2p_1+p_2<1$): Lemma~\ref{lem:fine} bounds $R$ strictly by
one of the four endpoint curves $F(p_1,\tfrac\delta3)$, $F(p_1,\delta)$,
$G(p_1,\tfrac{2\delta}5)$, $G(p_1,\tfrac{2\delta}3)$.

In every case $R$ is strictly below one of the six curves of
Lemma~\ref{lem:envdom}, and that lemma bounds each of them by
$W_{\mathrm{iso}}(p_1)$ on $[\tfrac13,p^*]$ and by $F(p_1,\tfrac\delta3)$ on
$[p^*,\tfrac12)$, i.e.\ by $E(p_1)$.
\end{proof}

\begin{corollary}[Single formula]\label{cor:bt}
For every source with $\tfrac13\le p_1<\tfrac12$,
\[
  R\ <\ \tilde B(p_1)\ =\ 1+(1+\CC)\,p_1-\Hh(p_1);
\]
indeed the envelope of Theorem~\ref{thm:mid} lies below $\tilde B$ pointwise,
strictly except at $p_1=\tfrac13$ and (in the limit) $p_1=\tfrac12$.
\end{corollary}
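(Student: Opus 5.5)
The corollary follows from Theorem~\ref{thm:mid} once the envelope $E$ is placed pointwise below $\tilde B$. On $[\tfrac13,p^*]$ the envelope is $W_{\mathrm{iso}}$. Lemma~\ref{lem:midcomp}(i) gives the exact identity
\[
  \tilde B-W_{\mathrm{iso}}=\tfrac{(1+\CC)(3p_1-1)}2 .
\]
This is $\ge0$, with equality only at $p_1=\tfrac13$. On $[p^*,\tfrac12)$ the envelope is $F(\cdot,\tfrac\delta3)$. Lemma~\ref{lem:midcomp}(iv) shows that $\tilde B-F(\cdot,\tfrac\delta3)$ is concave on $[\tfrac13,\tfrac12]$, with a positive value at $\tfrac13$ and the value $0$ at $\tfrac12$. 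Hence $E\le\tilde B$ throughout, and combining this with the strict inequality $R<E(p_1)$ of Theorem~\ref{thm:mid} gives $R<\tilde B(p_1)$.

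For the strictness clause I argue piece by piece. On the left piece, strictness away from $\tfrac13$ is immediate from the identity, since $3p_1-1>0$ for $p_1>\tfrac13$. On the right piece, a concave function with a positive value at $\tfrac13$ and the value $0$ at $\tfrac12$ is strictly positive on $[\tfrac13,\tfrac12)$. Indeed, for $p_1=(1-\theta)\tfrac13+\theta\tfrac12$ with $\theta\in[0,1)$, concavity gives a value at least $(1-\theta)$ times the positive endpoint value. So $F(\cdot,\tfrac\delta3)<\tilde B$ on $[p^*,\tfrac12)$. Equality is recovered only in the limit $p_1\to\tfrac12$, where both sides tend to $K$.

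At $p_1=\tfrac13$ itself the envelope coincides with $\tilde B$, since $W_{\mathrm{iso}}(\tfrac13)=\tilde B(\tfrac13)$. This is consistent with the claim, which excludes that point from strictness.

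I expect no real obstacle, because every ingredient is already certified in Lemma~\ref{lem:midcomp}. The only care needed is to use the \emph{strict} form of the concavity argument on the right piece, rather than merely nonnegativity, so that the strictness claim holds on the open interval up to $\tfrac12$. As a cross-check, the curves $T_0$, $F(\cdot,\delta)$ and $G$ are also $\le\tilde B$ by (ii), (iii), (v) and (vi), so the bound could equally be read off directly from the case analysis in the proof of Theorem~\ref{thm:mid}.
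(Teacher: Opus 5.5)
Your proof is correct and is essentially the paper's argument. You combine Theorem~\ref{thm:mid} with Lemma~\ref{lem:midcomp}(i) for the $W_{\mathrm{iso}}$ piece (equality only at $\tfrac13$) and Lemma~\ref{lem:midcomp}(iv) for the $F(\cdot,\tfrac\delta3)$ piece (a concave difference, positive at $\tfrac13$ and zero at $\tfrac12$); your convex-combination argument for strictness on $[p^*,\tfrac12)$ just spells out what the paper leaves implicit.
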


\begin{proof}
$W_{\mathrm{iso}}\le\tilde B$ is Lemma~\ref{lem:midcomp}(i), with equality
only at $\tfrac13$; $F(\cdot,\tfrac\delta3)\le\tilde B$ is
Lemma~\ref{lem:midcomp}(iv), with equality only at $\tfrac12$ (the difference
is concave, positive at $\tfrac13$, zero at $\tfrac12$).
\end{proof}

\begin{corollary}[Flat cap]\label{cor:kcap}
For every source with $p_1<\tfrac12$,
$\ R<K=\tfrac52-\tfrac56\lgg5=0.565059\ldots$
\end{corollary}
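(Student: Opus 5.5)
The flat cap combines the affine bound below $\tfrac13$ with the middle envelope on $[\tfrac13,\tfrac12)$. We split according to the range of $p_1$.

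\emph{Case $p_1<\tfrac13$.} Corollary~\ref{cor:aff} gives $R<p_1+\CC<\tfrac13+\CC$. By Lemma~\ref{lem:midcomp}(o), $\tfrac13+\CC<K$, since $K-\tfrac13-\CC=\tfrac56\lgg5-\tfrac{11}6>0$, which is equivalent to $5^5>2^{11}$. Hence $R<K$.

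\emph{Case $\tfrac13\le p_1<\tfrac12$.} We use the cleanest route, Corollary~\ref{cor:bt}: $R<\tilde B(p_1)$. By \eqref{eq:btmono}, $\tilde B$ is strictly increasing on $[\tfrac13,\tfrac12]$, and $\tilde B(\tfrac12)=\tfrac{1+\CC}{2}=K$ exactly. So for $p_1<\tfrac12$ we get $R<\tilde B(p_1)<\tilde B(\tfrac12)=K$.

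Alternatively, one can work directly from Theorem~\ref{thm:mid}, as its proof already records. On $[\tfrac13,p^*]$, $W_{\mathrm{iso}}$ is decreasing (Lemma~\ref{lem:cross}(a)), so $R<W_{\mathrm{iso}}(\tfrac13)=2-\lgg3+\tfrac{\CC}3<K$ by the certified inequality $K-W_{\mathrm{iso}}(\tfrac13)\ge\tfrac{11}{108}$. On $[p^*,\tfrac12)$, $F(\cdot,\tfrac\delta3)$ is increasing, so $R<F(\tfrac12,0)=K$.

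The two cases together cover all $p_1<\tfrac12$. There is no real obstacle: all the work sits in Corollary~\ref{cor:bt} and Theorem~\ref{thm:mid}. The only points that need care are the strictness near $p_1\to\tfrac12$, which comes from strict monotonicity of $\tilde B$ applied at an interior point, and the numeric comparison $5^5>2^{11}$ for the first case.
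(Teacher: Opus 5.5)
Your proof is correct. The case $p_1<\tfrac13$ is the paper's argument verbatim, including the reduction of (o) to $5^5>2^{11}$. Your ``alternative'' for $\tfrac13\le p_1<\tfrac12$ is exactly the paper's proof. There, $W_{\mathrm{iso}}$ is decreasing on $[\tfrac13,p^*]$ with $K-W_{\mathrm{iso}}(\tfrac13)\ge\tfrac{11}{108}$, and $F(\cdot,\tfrac\delta3)$ increases to $F(\tfrac12,0)=K$ on $[p^*,\tfrac12)$.

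Your primary route differs slightly. It goes through the single majorant of Corollary~\ref{cor:bt}: $R<\tilde B(p_1)\le\tilde B(\tfrac12)=K$ by \eqref{eq:btmono}. This is valid and not circular, since Corollary~\ref{cor:bt} rests only on Theorem~\ref{thm:mid} and Lemma~\ref{lem:midcomp}(i),(iv), none of which uses the flat cap. It is also a little shorter. One increasing curve pinned to $K$ at $\tfrac12$ replaces the two opposite monotonicities and the separate $\tfrac{11}{108}$ certificate. That saving is only local, because the proof of Theorem~\ref{thm:mid} already uses that certificate.

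The paper's route gives a bit more: on the left piece the bound is at most $2-\lgg3+\tfrac{\CC}3=0.4584\ldots$, whereas $\tilde B$ exceeds the envelope there by $(1+\CC)\tfrac{3p_1-1}2$.

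A small correction to your closing remark: strictness needs no strict monotonicity of $\tilde B$. The inequality $R<\tilde B(p_1)$ is already strict, so $\tilde B(p_1)\le K$ suffices.
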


\begin{proof}
For $p_1<\tfrac13$: $R<p_1+\CC<\tfrac13+\CC<K$ by
Lemma~\ref{lem:midcomp}(o). For $p_1\in[\tfrac13,p^*]$:
$R<W_{\mathrm{iso}}(p_1)\le W_{\mathrm{iso}}(\tfrac13)
=2-\lgg3+\tfrac{\CC}3<K$, by the monotonicity of
Lemma~\ref{lem:cross}(a) and the $\tfrac{11}{108}$-certificate in the proof
of Theorem~\ref{thm:mid}. For $p_1\in[p^*,\tfrac12)$:
$R<F(p_1,\tfrac\delta3)<F(\tfrac12,0)=K$, again by
Lemma~\ref{lem:cross}(a).
\end{proof}

\begin{proposition}[Domination]\label{prop:dom}
On $[\tfrac13,\tfrac12)$ the envelope of Theorem~\ref{thm:mid} is strictly
below $p_1+\CC$: the middle branches of Theorem~\ref{thm:main} improve the
affine bound wherever both apply. Quantitatively, the margin is at least
$\tfrac23\CC-\cthree=1+\lgg3-\tfrac{10}9\lgg5=0.005042\ldots$ on the left
piece and at least $\tfrac{\CC}2=0.0650\ldots$ on the right piece.
\end{proposition}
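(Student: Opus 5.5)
The plan is to treat the two pieces of the envelope of Theorem~\ref{thm:mid} separately. In each case I compare the piece with $p_1+\CC$ using only the monotonicity and convexity facts already recorded, and then certify one or two endpoint values by integer-power comparisons.

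\emph{Left piece} ($\tfrac13\le p_1\le p^*$). By Lemma~\ref{lem:cross}(a), $W_{\mathrm{iso}}$ is strictly decreasing, while $p_1+\CC$ is increasing. Hence the margin $p_1+\CC-W_{\mathrm{iso}}(p_1)$ is strictly increasing and is minimised at $p_1=\tfrac13$. There $W_{\mathrm{iso}}(\tfrac13)=2-\lgg3+\tfrac{\CC}3$ (Lemma~\ref{lem:midcomp}(i) with $\tilde B(\tfrac13)$), so the margin is
\[
\tfrac13+\tfrac23\CC-2+\lgg3=\tfrac23\CC-\cthree .
\]
Substituting $\CC=4-\tfrac53\lgg5$ and $\cthree=\tfrac53-\lgg3$ gives $1+\lgg3-\tfrac{10}9\lgg5$. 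This is positive iff $2^9\cdot3^9>5^{10}$, i.e.\ $10{,}077{,}696>9{,}765{,}625$.

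\emph{Right piece} ($p^*\le p_1<\tfrac12$). Here I use the fact that $F(\cdot,\tfrac\delta3)$ is affine minus an entropy of affine arguments, hence convex. So the margin $m(p_1):=p_1+\CC-F(p_1,\tfrac\delta3)$ is concave on $[p^*,\tfrac12]$, and its infimum is the smaller of its two endpoint values.

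At $\tfrac12$ we have $F(\tfrac12,0)=K=\tfrac{1+\CC}2$, which gives $m(\tfrac12)=\tfrac12+\CC-K=\tfrac{\CC}2$. This value is approached but not attained, since $p_1<\tfrac12$.

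At $p^*$ the crossing identity of Lemma~\ref{lem:cross}(b) gives $m(p^*)=p^*+\CC-W_{\mathrm{iso}}(p^*)$. By the left-piece monotonicity this is at least its value at $\tfrac25<p^*$ (Lemma~\ref{lem:cross}(c)). Using the closed form $W_{\mathrm{iso}}(\tfrac25)=\tfrac{29}{10}+\tfrac35\lgg3-\tfrac32\lgg5$, the inequality $m(\tfrac25)\ge\tfrac{\CC}2$ reduces to
\[
-\tfrac12+\tfrac23\lgg5-\tfrac35\lgg3>0 ,
\]
i.e.\ $5^{20}>2^{15}\cdot3^{18}$. This also follows from the certified bounds $\lgg3<\tfrac{8}{5}$ and $\lgg5>\tfrac{7}{3}-\text{small}$; I would check it with an exact power comparison.

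With both endpoint values at least $\tfrac{\CC}2$ and $m$ concave, the margin is at least $\tfrac{\CC}2$ on the whole right piece. Strictness everywhere follows because both margins are positive.

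The only delicate step is the right-piece endpoint at $p^*$. Since $p^*$ is known only numerically, I avoid evaluating there and instead transfer the comparison to the rational point $\tfrac25$ via the monotonicity of $W_{\mathrm{iso}}-p_1$. This requires making sure the resulting power inequality is genuinely true with room to spare (numerically it is about $0.097$). If it were too tight, the fallback is to use $\tfrac7{16}$ or another rational point in $(\tfrac25,p^*)$.
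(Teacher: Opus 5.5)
Your proof is correct and follows the paper's argument: on the left piece the margin increases because $W_{\mathrm{iso}}$ decreases, and on the right piece $p_1+\CC-F(p_1,\tfrac\delta3)$ is concave, so only its endpoint values matter, with value $\tfrac{\CC}2$ at $\tfrac12$. The one difference is the other endpoint: the paper uses concavity on the larger interval $[\tfrac25,\tfrac12]\supseteq[p^*,\tfrac12)$ and evaluates $F(\cdot,\tfrac\delta3)$ directly at $\tfrac25$ (margin $-\tfrac{19}{15}+\tfrac35\lgg3+\tfrac29\lgg5$), whereas you stay on $[p^*,\tfrac12]$ and use the crossing identity plus the monotonicity of the $W_{\mathrm{iso}}$-margin to reduce to $W_{\mathrm{iso}}(\tfrac25)$. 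That route is equally valid, and your certificate $5^{20}>2^{15}\cdot3^{18}$ is correct; the aside about $\lgg5>\tfrac73-\text{small}$ is vague, since $\tfrac73$ is an upper bound for $\lgg5$, but the exact power comparison makes it unnecessary.
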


\begin{proof}
\textbf{Left piece.} $p_1+\CC-W_{\mathrm{iso}}(p_1)$ has derivative
$1-W_{\mathrm{iso}}'=1+\lgg\tfrac{1-p_1}{p_1}+K>0$, so on $[\tfrac13,p^*]$ it
is minimised at $\tfrac13$, where (since
$W_{\mathrm{iso}}(\tfrac13)=\tilde B(\tfrac13)$) it equals
$\tfrac23\CC-\cthree=-\beta$, the critical margin of Lemma~\ref{lem:caseA}
resurfacing.

\textbf{Right piece.} $p_1+\CC-F(p_1,\tfrac\delta3)$ is concave (its second
derivative times $\ln2$ is $-\tfrac1{p_1}-\tfrac4{3\delta}
-\tfrac1{3(2-p_1)}<0$), so on $[\tfrac25,\tfrac12]\supseteq[p^*,\tfrac12)$ it
is minimised at an endpoint. At $\tfrac12$ it equals $\tfrac{\CC}2$, and at
$\tfrac25$ it equals $-\tfrac{19}{15}+\tfrac35\lgg3+\tfrac29\lgg5$, which
exceeds $\tfrac{\CC}2$; so the margin is at least $\tfrac{\CC}2$ throughout.
\end{proof}

\begin{remark}\label{rem:midquality}
The envelope decreases strictly from $2-\lgg3+\tfrac{\CC}3=0.4584\ldots$ at
$\tfrac13$ to $0.350664\ldots$ at $p^*=0.415798\ldots$, then rises strictly
to $K=0.5651\ldots$ at $\tfrac12$, so the cap needs nothing beyond the
monotonicity of the two pieces; and by Proposition~\ref{prop:dom} no minimum
with the affine bound is ever active. Against the single curve of
Corollary~\ref{cor:bt} the envelope gains $(1+\CC)\tfrac{3p_1-1}2$ on the
left piece, up to $0.1398$ at $p^*$. Every improvement of the affine constant
$\CC$ halves into $K$.
\end{remark}

\section{The Quarter-Window Ceiling on $[\tfrac14,\tfrac13)$}\label{sec:quarter}

Below $\tfrac13$ the isolate root disappears ($2p_1+p_2\le3p_1<1$). We now
apply everything proved so far, i.e., the affine bound
(Corollary~\ref{cor:aff}), the envelope (Theorem~\ref{thm:mid}) and the
$p_1\ge\tfrac12$ branch (Corollary~\ref{cor:isob}), within the recursion on
$\tfrac14\le p_1<\tfrac13$. Throughout this section $\delta:=1-2p_1$,
$K=\tfrac{1+\CC}2$ and $V(q)=2-q-\Hh(q)$ are as in
Appendix~\ref{sec:midcap}, $W_{\mathrm{iso}}$ and $F(p_1,x)$ are the
envelope ingredients of Lemmas~\ref{lem:midcomp} and~\ref{lem:fine}
(the formula for $F$ makes sense for all $0<p_1<\tfrac12$, $0<x<1-p_1$,
and is so used below), and
\[
\begin{aligned}
  F_\ell(q)\ &:=\ F\bigl(q,\tfrac{1-2q}3\bigr)\\
  &\ =\ \frac{5-q}{3}+\frac{2-q}{3}\,\CC
  -\mathrm H\Bigl(q,\frac{1-2q}{3},\frac{2-q}{3}\Bigr)
\end{aligned}
\]
is the increasing piece of the envelope
$E=\max(W_{\mathrm{iso}},F_\ell)$. The three curves of the quarter-window ceiling
are
\begin{align*}
  \Pi(p_1)&:=\tfrac32+\tfrac{\CC}2-p_1-\mathrm I(p_1)-\mathrm I\bigl(\tfrac12-p_1\bigr),\\
  \Theta(p_1)&:=\tfrac{4+2p_1}{3}+\tfrac{2(1-p_1)}{3}\,K
   -\mathrm H\Bigl(p_1,\tfrac{1-p_1}{3},\tfrac{2(1-p_1)}{3}\Bigr),\\
  N(p_1)&:=3-3p_1-3\mathrm I(p_1)-\mathrm I(1-3p_1)\\
   &\ =\ 3-3p_1-\mathrm H\bigl(p_1,p_1,p_1,1-3p_1\bigr).
\end{align*}
We use repeatedly that a sum of an affine function and terms
$-\mathrm I(u(\cdot))$ with $u$ affine is convex (each term contributes
$+c^2/(u\ln2)$ to the second derivative, where $c$ is the slope of $u$) and attains
its maximum over an interval at an endpoint.

Every sub-block is bounded using the results above: a block of absolute mass $m$
whose largest symbol is the fraction $q$ of $m$ contributes less than
$m\,\mathrm{ch}(q)$ to $R$ (and exactly $0$ if a singleton), where
\begin{equation}\label{eq:charge-q}
  \mathrm{ch}(q)\;:=\;
  \begin{cases}
    q+\CC,\\
    \quad 0<q<\tfrac13
      \quad\text{(Corollary~\ref{cor:aff})},\\[2pt]
    E(q)=\max\bigl(W_{\mathrm{iso}}(q),\,F_\ell(q)\bigr),\\
    \quad \tfrac13\le q<\tfrac12
      \quad\text{(Theorem~\ref{thm:mid})},\\[2pt]
    V(q),\\
    \quad \tfrac12\le q<1
      \quad\text{(Corollary~\ref{cor:isob})}.
  \end{cases}
\end{equation}

\begin{lemma}[Exact values]\label{lem:exact-q}
With $N_K(p_1):=1+2p_1-2\mathrm I(p_1)-\mathrm I(\delta)+\delta K$:
\[
\begin{aligned}
  &\Pi\bigl(\tfrac14\bigr)=\tfrac14+\tfrac{\CC}2,\qquad
  N\bigl(\tfrac14\bigr)=\tfrac14,\\
  &\Theta\bigl(\tfrac14\bigr)=N_K\bigl(\tfrac14\bigr)=\tfrac{1+\CC}4,
\end{aligned}
\]
\[
\begin{aligned}
  \Pi\bigl(\tfrac13\bigr)&=F\bigl(\tfrac13,\tfrac16\bigr)=1+\tfrac{\CC-\lgg3}2,\\
  \Theta\bigl(\tfrac13\bigr)&=F\bigl(\tfrac13,\tfrac19\bigr)
  =\tfrac{34}9-\tfrac53\lgg3-\tfrac{10}{27}\lgg5,
\end{aligned}
\]
\[
  F_\ell\bigl(\tfrac12\bigr)=F\bigl(\tfrac12,0\bigr)=K,\qquad
  \Theta-F\bigl(\cdot,\tfrac{1-\cdot}4\bigr)\ \text{is affine in }p_1 .
\]
\end{lemma}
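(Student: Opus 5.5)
The lemma is a list of closed-form evaluations plus one structural identity, so I would verify each item by direct substitution. Throughout I would use the grouping identity
\[
\mathrm H\bigl(p_1,(1-p_1)\alpha,(1-p_1)(1-\alpha)\bigr)=\Hh(p_1)+(1-p_1)\Hh(\alpha),
\]
the values $\mathrm I(\tfrac14)=\tfrac12$, $\mathrm I(\tfrac12)=\tfrac12$, $\mathrm I(\tfrac13)=\tfrac13\lgg3$, and the substitutions $\CC=4-\tfrac53\lgg5$ and $K=\tfrac{1+\CC}2$. Any logarithm of a composite ($\lgg 6$, $\lgg 9$, $\lgg{27}$) is rewritten in terms of $\lgg3$ and $\lgg5$, as in Step~5 of Lemma~\ref{lem:B2}.

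\emph{Values at $\tfrac14$.} Here $\delta=\tfrac12$, so the computation is immediate.
\begin{itemize}
\item For $\Pi$: $\tfrac32+\tfrac{\CC}2-\tfrac14-2\,\mathrm I(\tfrac14)=\tfrac14+\tfrac{\CC}2$.
\item For $N$: $3-\tfrac34-4\,\mathrm I(\tfrac14)=\tfrac14$.
\item For $\Theta$: the affine part is $\tfrac{4+1/2}{3}=\tfrac32$, the $K$-part is $\tfrac{2\cdot 3/4}{3}K=\tfrac K2$, and $\mathrm H(\tfrac14,\tfrac14,\tfrac12)=\tfrac32$. So $\Theta(\tfrac14)=\tfrac K2=\tfrac{1+\CC}4$.
\item For $N_K$: $1+\tfrac12-1-\tfrac12+\tfrac K2=\tfrac K2$.
\end{itemize}
At $\tfrac12$, $F_\ell(\tfrac12)=F(\tfrac12,0)=\tfrac32+\tfrac{\CC}2-\Hh(\tfrac12)=K$; this was already noted in the proof of Lemma~\ref{lem:midcomp}.

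\emph{Values at $\tfrac13$.} I would evaluate both sides of each claimed equality separately and compare the resulting expressions in $\lgg3$ and $\lgg5$.
\begin{itemize}
\item $\Pi(\tfrac13)$ uses the arguments $\tfrac13$ and $\tfrac16$.
\item $F(\tfrac13,\tfrac16)$ uses the triple $(\tfrac13,\tfrac16,\tfrac12)$. Its $\CC$-coefficient is $\tfrac12$, so both sides should reduce to $1+\tfrac{\CC-\lgg3}2$.
\item $\Theta(\tfrac13)$ uses the triple $(\tfrac13,\tfrac29,\tfrac49)$ together with $\tfrac49K$.
\item $F(\tfrac13,\tfrac19)$ uses the triple $(\tfrac13,\tfrac19,\tfrac59)$ together with $\tfrac59\CC$. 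The $\lgg5$ contributed by $\mathrm I(\tfrac59)$ must combine with the $\lgg5$ from $\CC$ to give $-\tfrac{10}{27}\lgg5$. On the $\Theta$ side the only $\lgg5$ comes from $K$, namely $-\tfrac49\cdot\tfrac56\lgg5=-\tfrac{10}{27}\lgg5$.
\end{itemize}
The equality $\Theta(\tfrac13)=F(\tfrac13,\tfrac19)$ is a genuine coincidence of closed forms, not a structural identity, so this is where I would be most careful. I would check the rational, $\lgg3$ and $\lgg5$ coefficients separately.

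\emph{Affinity of $\Theta-F(\cdot,\tfrac{1-\cdot}4)$.} With $x=\tfrac{1-p_1}4$ we have $1-p_1-x=\tfrac34(1-p_1)$. The grouping identity then gives
\[
\mathrm H\bigl(p_1,\tfrac{1-p_1}4,\tfrac{3(1-p_1)}4\bigr)=\Hh(p_1)+(1-p_1)\Hh(\tfrac14),
\qquad
\mathrm H\bigl(p_1,\tfrac{1-p_1}3,\tfrac{2(1-p_1)}3\bigr)=\Hh(p_1)+(1-p_1)\Hh(\tfrac13).
\]
In the difference the $\Hh(p_1)$ terms cancel, leaving $(1-p_1)\bigl(\Hh(\tfrac14)-\Hh(\tfrac13)\bigr)$. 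All remaining terms of $\Theta$ and $F$ are affine in $p_1$, so the difference is affine.

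The only real obstacle is bookkeeping in the two values at $\tfrac13$. Everything else is one line.
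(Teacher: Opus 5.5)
Your proposal is correct and takes the same approach as the paper: each value is a direct evaluation via $\mathrm I(a/b)=\tfrac ab(\lgg b-\lgg a)$ and the closed forms of $\CC$ and $K$. Your sketched evaluations at $\tfrac13$ check out, including the $\lgg5$ bookkeeping for $\Theta(\tfrac13)=F(\tfrac13,\tfrac19)$, which indeed holds only because $\CC=4-\tfrac53\lgg5$. Your grouping-identity argument for the affinity of $\Theta-F(\cdot,\tfrac{1-\cdot}4)$, where only $(1-p_1)\bigl(\Hh(\tfrac14)-\Hh(\tfrac13)\bigr)$ survives from the entropy terms, spells out a step the paper's proof leaves under ``routine evaluations.''
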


\begin{proof}
Every entropy term is of the form
$\mathrm I(a/b)=\tfrac ab\bigl(\lgg b-\lgg a\bigr)$, so each value is a
rational combination of logarithms of small integers. For example, at
$p_1=\tfrac13$,
$\Theta(\tfrac13)=\tfrac{4+2/3}{3}+\tfrac{4}{9}K-\mathrm H\bigl(\tfrac13,\tfrac29,\tfrac49\bigr)$
with
$\mathrm H\bigl(\tfrac13,\tfrac29,\tfrac49\bigr)
=\tfrac{\lgg3}3+\tfrac29\bigl(\lgg9-\lgg2\bigr)+\tfrac49\bigl(\lgg9-\lgg4\bigr)
=\tfrac53\lgg3-\tfrac{10}9$; substituting $K=\tfrac52-\tfrac56\lgg5$ gives
$\Theta(\tfrac13)=\tfrac{34}9-\tfrac53\lgg3-\tfrac{10}{27}\lgg5$. The
remaining values are identical routine evaluations.
\end{proof}

\begin{lemma}[Certificates]\label{lem:cert-q}
Let $E_1,E_2,\Lambda_1,\Lambda_2,\Theta',\Theta''$ be the corner curves of
Propositions~\textup{\ref{prop:L2-q}}--\textup{\ref{prop:L3-q}} below.
The following exact signed evaluations hold on the endpoints used in the
corner reductions:
\begin{enumerate}
\item[\textup{(A1)}] $(\Pi-\Theta)(\tfrac13)
  =-\tfrac79+\tfrac76\lgg3-\tfrac{25}{54}\lgg5<0$.
\item[\textup{(A2)}] $(\Theta-N)(\tfrac13)
  =\tfrac{16}9-\tfrac23\lgg3-\tfrac{10}{27}\lgg5<0$.
\item[\textup{(A3)}] $(2-\lgg3)-\Pi(\tfrac14)
  =\tfrac56\lgg5-\lgg3-\tfrac14>0$, and
  $\tfrac13+\CC-(2-\lgg3)=\tfrac73+\lgg3-\tfrac53\lgg5>0$.
\item[\textup{(B1)}] $\tfrac13+\CC-F_\ell(\tfrac13)
  =\tfrac59+\tfrac53\lgg3-\tfrac{35}{27}\lgg5>0$.
\item[\textup{(C1)}] $\bigl(\Theta-F(\cdot,\tfrac\delta3)\bigr)(\tfrac14)
  =-\tfrac56+\tfrac34\lgg3+\tfrac59\lgg5-\tfrac7{12}\lgg7>0$.
\item[\textup{(C2)}] $\bigl(\Theta-F(\cdot,\tfrac{1-\cdot}4)\bigr)(\tfrac14)
  =\tfrac78-\tfrac{21}{16}\lgg3+\tfrac{25}{48}\lgg5>0$;\quad
  $\bigl(\Theta-F(\cdot,\tfrac{1-\cdot}4)\bigr)(\tfrac13)
  =\tfrac79-\tfrac76\lgg3+\tfrac{25}{54}\lgg5>0$.
\item[\textup{(C3)}] $(\Theta-N_K)(\tfrac13)
  =\tfrac{23}{18}-\tfrac23\lgg3-\tfrac5{54}\lgg5>0$.
\item[\textup{(D1)}] $\Pi(\tfrac13)-E_1(\tfrac14)
  =\tfrac{17}{20}-\tfrac12\lgg3+\tfrac56\lgg5-\tfrac{11}{20}\lgg{11}>0$;\quad
  $\Pi(\tfrac13)-E_1(\tfrac13)
  =-\tfrac{12}5+\tfrac12\lgg3+\tfrac{13}{18}\lgg5>0$.
\item[\textup{(D2)}] $\Pi(\tfrac13)-E_2(\tfrac14)
  =\tfrac{11}{12}+\tfrac14\lgg3-\tfrac59\lgg5>0$;\quad
  $\Pi(\tfrac13)-E_2(\tfrac13)
  =-\tfrac{14}9+\tfrac76\lgg3-\tfrac5{54}\lgg5>0$.
\item[\textup{(D3)}] $\Pi(\tfrac13)-\Lambda_1(\tfrac14)
  =\tfrac5{28}-\tfrac54\lgg3-\tfrac5{42}\lgg5+\tfrac34\lgg7>0$;\quad
  $\Pi(\tfrac13)-\Lambda_1(\tfrac13)
  =-\tfrac{26}{21}+\tfrac5{14}\lgg3+\tfrac5{18}\lgg5+\tfrac23\lgg7
  -\tfrac{11}{21}\lgg{11}>0$.
\item[\textup{(D4)}] $\Pi(\tfrac13)-\Theta''(\tfrac14)
  =\tfrac5{36}+\lgg3+\tfrac{25}{108}\lgg5-\tfrac{19}{36}\lgg{19}>0$;\quad
  $\Pi(\tfrac13)-\Theta''(\tfrac13)
  =-\tfrac{16}9+\tfrac{11}6\lgg3+\tfrac{25}{162}\lgg5-\tfrac{14}{27}\lgg7>0$.
\item[\textup{(D5)}] $(\Pi-\Lambda_2)(\tfrac14)
  =-\tfrac45-\tfrac65\lgg3+\tfrac76\lgg5>0$;\quad
  $(\Pi-\Lambda_2)(\tfrac13)
  =-\tfrac43+\tfrac3{10}\lgg3+\tfrac{17}{18}\lgg5-\tfrac7{15}\lgg7>0$.
\item[\textup{(D6)}] $(\Pi-\Theta')(\tfrac14)
  =-\tfrac{11}{14}+\tfrac5{28}\lgg5+\tfrac34\lgg7-\tfrac{13}{28}\lgg{13}>0$;\quad
  $(\Pi-\Theta')(\tfrac13)
  =-\tfrac{38}{21}+\tfrac12\lgg3-\tfrac5{14}\lgg5+\tfrac23\lgg7>0$.
\item[\textup{(S)}] $(\Theta-\Pi)(\tfrac{23}{72})
  =-\tfrac{167}{108}-\tfrac{121}{72}\lgg3+\tfrac{295}{648}\lgg5
  +\tfrac{49}{36}\lgg7-\tfrac{13}{72}\lgg{13}>0$ and
  $(\Theta-N)(\tfrac{23}{72})
  =\tfrac{59}{54}-\tfrac{13}{18}\lgg3-\tfrac{245}{648}\lgg5
  +\tfrac{49}{36}\lgg7-\tfrac{23}{36}\lgg{23}>0$.
\end{enumerate}
\end{lemma}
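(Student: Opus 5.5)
The plan has two parts: first compute each signed value in closed form, then certify its sign by reducing it to a comparison of integer powers. The corner curves $E_1,E_2,\Lambda_1,\Lambda_2,\Theta',\Theta''$ are not yet defined at this point, so for (D1)--(D6) I take their definitions from Propositions~\ref{prop:L2-q}--\ref{prop:L3-q}.

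\emph{Step 1 (exact evaluation).} At a rational point every curve is an affine expression in $p_1$, $\CC$, $K$ minus a sum of terms $\mathrm I(u)$ with rational $u$. I expand each term as in Lemma~\ref{lem:exact-q}, using $\mathrm I(a/b)=\tfrac ab(\lgg b-\lgg a)$. I then substitute $\CC=4-\tfrac53\lgg5$ and $K=\tfrac52-\tfrac56\lgg5$. The result has the form $c_0+\sum_{\ell}c_\ell\lgg\ell$ with $\ell\in\{2,3,5,7,11,13,19,23\}$ and rational $c_\ell$; writing $\lgg2=1$ leaves exactly the closed forms stated. For the paired items I reuse the known values $\Pi(\tfrac13)=1+\tfrac{\CC-\lgg3}2$, $\Theta(\tfrac13)$, $\Pi(\tfrac14)$, $\Theta(\tfrac14)$ and $N_K(\tfrac14)$ from Lemma~\ref{lem:exact-q}. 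For example, (A2) is $\Theta(\tfrac13)-N(\tfrac13^-)$ with $N(\tfrac13)=2-\lgg3$, which gives $\tfrac{34}9-\tfrac53\lgg3-\tfrac{10}{27}\lgg5-2+\lgg3$, the stated value. (A1), (A3), (B1) and (C3) are equally short. (C1) and (C2) need $F(\tfrac14,\tfrac16)$ and $F(\cdot,\tfrac{1-\cdot}4)$ at $\tfrac14$ and $\tfrac13$; these produce $\lgg7$ from $\mathrm I(\tfrac7{12})$ and nothing new otherwise. The items (D1)--(D6) and (S) are the longest. At $p_1=\tfrac{23}{72}$ I have $\tfrac12-p_1=\tfrac{13}{72}$, $1-3p_1=\tfrac{3}{72}$, $\tfrac{1-p_1}3=\tfrac{49}{216}$ and $\tfrac{2(1-p_1)}3=\tfrac{49}{108}$. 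These account for the primes $7$, $13$ and $23$ in (S).

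\emph{Step 2 (sign certification).} Each closed form $c_0+\sum c_\ell\lgg\ell>0$ is, after multiplying by a common denominator $D$, equivalent to $2^{Dc_0}\prod\ell^{Dc_\ell}>1$. This is a comparison of two integers after separating positive and negative exponents, which is the certification scheme of Appendix~\ref{app:cert}. When the exponents are too large to be practical, I use rational brackets instead. These are $\tfrac{19}{12}<\lgg3<\tfrac{1585}{1000}$ and $\tfrac{232}{100}<\lgg5<\tfrac73$, with analogous brackets for $\lgg7$, $\lgg{11}$, $\lgg{13}$, $\lgg{19}$, $\lgg{23}$, each certified by a single power comparison such as $3^{12}>2^{19}$ or $5^3<2^7$. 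I choose the bracket direction by the sign of each coefficient, exactly as in Step~5 of Lemma~\ref{lem:B2}. If a margin is too thin for crude brackets, I tighten only the dominant logarithm with a higher-power comparison such as $3^{53}$ versus $2^{84}$. Alternatively I remove an awkward prime by a chord bound on $\mathrm I$, as was done with $\lgg13$ in Lemma~\ref{lem:envdom}(iii).

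\emph{Main obstacle.} The hardest items are (S), (D3), (D4) and (D6). They contain several large primes with coefficients of both signs, and their margins may be small; for instance, the two (S) values decide that $\tfrac{23}{72}$ lies between the crossings $b_1$ and $b_2$, which are only about $6\times10^{-4}$ apart. For these I would first evaluate numerically to see the size of each margin. Then I pick the weakest bracket set that certifies it, tightening first the $\lgg7$ and $\lgg{23}$ brackets, since they carry the largest coefficients. If brackets fail, I fall back to the exact integer-power comparison.
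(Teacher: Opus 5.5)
Your proposal is correct and follows the paper's own proof. You expand each curve at the rational point via $\mathrm I(a/b)=\tfrac ab(\lgg b-\lgg a)$ into a rational combination of $\lgg m$, then decide each sign by rational (continued-fraction) brackets on the logarithms, i.e.\ by comparisons of integer powers, with higher convergents where the margin is thin. One calibration: the genuinely thin items are (S), with margins of about $1.2\times10^{-4}$ and $1.1\times10^{-3}$, and (A1)/(C2) at about $4\times10^{-3}$, whereas (D3), (D4), (D6) all have margins above $0.016$.
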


\begin{proof}
As in Lemma~\ref{lem:exact-q}, each item is the evaluation of the stated
curves at a rational point, hence a rational combination of
$\lgg3,\lgg5,\lgg7,\lgg{11},\lgg{13},\lgg{19},\lgg{23}$; the displayed
closed forms follow by collecting coefficients. For example, \textup{(A1)}
is $(\Pi-\Theta)(\tfrac13)$ with both values taken from
Lemma~\ref{lem:exact-q}:
$1+\tfrac{\CC-\lgg3}2-\tfrac{34}9+\tfrac53\lgg3+\tfrac{10}{27}\lgg5
=-\tfrac79+\tfrac76\lgg3-\tfrac{25}{54}\lgg5$. Each sign then follows from
Lemma~\ref{lem:certify}\textup{(i)}: replacing each $\lgg m$ by a rational
bound from its continued-fraction expansion reduces the sign to a comparison of
integer powers. Coarse bounds such as $\tfrac{19}{12}<\lgg3<\tfrac{27}{17}$
suffice for the items with a comfortable margin; the near-degenerate ones,
\textup{(A1)}, \textup{(A2)}, \textup{(C2)}, and \textup{(S)}, use
correspondingly higher convergents \textup{(}for instance
$\lgg3<\tfrac{485}{306}$, i.e.\ $3^{306}<2^{485}$\textup{)}.
\end{proof}

\begin{lemma}[Right-part bound]\label{lem:msup-q}
Let $Q\in(0,1)$. Then $\sup_{0<q\le Q}\mathrm{ch}(q)\le M(Q)$, where
\[
  M(Q)\ :=\
  \begin{cases}
    Q+\CC, & Q<\tfrac13,\\[2pt]
    \max\bigl(\tfrac13+\CC,\;F_\ell(Q)\bigr), & \tfrac13\le Q<\tfrac12,\\[2pt]
    \max\bigl(K,\;V(Q)\bigr), & \tfrac12\le Q<1 ,
  \end{cases}
\]
and $M(Q)\le Q+\CC$ for all $Q<\tfrac12$.
\end{lemma}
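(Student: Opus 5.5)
We bound $\mathrm{ch}$ on each of its three pieces separately, using the monotonicity facts already established, and then take the maximum over the pieces met by $(0,Q]$.

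\emph{First piece.} On $(0,\tfrac13)$ we have $\mathrm{ch}(q)=q+\CC$, which is increasing. So the supremum over $q\le\min(Q,\tfrac13)$ is at most $\min(Q,\tfrac13)+\CC$. This settles the case $Q<\tfrac13$ directly, and contributes the term $\tfrac13+\CC$ in the other two cases.

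\emph{Second piece.} On $[\tfrac13,\tfrac12)$ we have $\mathrm{ch}=E=\max(W_{\mathrm{iso}},F_\ell)$. By Lemma~\ref{lem:cross}(a), $W_{\mathrm{iso}}$ is decreasing, so $W_{\mathrm{iso}}(q)\le W_{\mathrm{iso}}(\tfrac13)=2-\lgg3+\tfrac{\CC}3$. By certificate (A3) of Lemma~\ref{lem:cert-q}, this is $<\tfrac13+\CC$. (Alternatively, combine Proposition~\ref{prop:dom} at $\tfrac13$ with the margin $-\beta$.) Also $F_\ell$ is increasing, so $F_\ell(q)\le F_\ell(\min(Q,\tfrac12))$. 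For $\tfrac13\le Q<\tfrac12$ this gives $\sup\le\max(\tfrac13+\CC,F_\ell(Q))$.

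\emph{Third piece.} For $Q\ge\tfrac12$, the second piece contributes at most $\sup_{[1/3,1/2)}F_\ell\le F_\ell(\tfrac12)=K$ (Lemma~\ref{lem:exact-q}), and $\tfrac13+\CC<K$ by Lemma~\ref{lem:midcomp}(o). On $[\tfrac12,Q]$ we have $\mathrm{ch}=V$, which is convex. Its maximum is therefore $\max(V(\tfrac12),V(Q))$, where $V(\tfrac12)=\tfrac12<K$. Altogether this gives $\max(K,V(Q))$.

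\emph{The inequality $M(Q)\le Q+\CC$ for $Q<\tfrac12$.} For $Q<\tfrac13$ there is nothing to prove. For $Q\in[\tfrac13,\tfrac12)$, first $\tfrac13+\CC\le Q+\CC$ trivially. Second, $F_\ell(Q)<Q+\CC$. On $[p^*,\tfrac12)$ this is exactly the right-piece margin of Proposition~\ref{prop:dom}. On $[\tfrac13,p^*]$ we have $F_\ell\le W_{\mathrm{iso}}$ by Lemma~\ref{lem:cross}(b), and $W_{\mathrm{iso}}<p_1+\CC$ by the left-piece statement of Proposition~\ref{prop:dom}.

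\emph{Main obstacle.} The only delicate point is the comparison $W_{\mathrm{iso}}(\tfrac13)<\tfrac13+\CC$, which is needed so that the second piece is absorbed by $\max(\tfrac13+\CC,F_\ell(Q))$. This reduces to the sign of $\tfrac73+\lgg3-\tfrac53\lgg5$, which certificate (A3) already provides. The cleanest route is to invoke (A3) directly; everything else is monotonicity bookkeeping.
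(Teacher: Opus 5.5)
Your treatment of the supremum is the paper's argument: $q+\CC$ is increasing, $W_{\mathrm{iso}}$ is decreasing and $F_\ell$ increasing on $[\tfrac13,\tfrac12)$, and $V$ is convex. However, the step you call the crux is justified by the wrong certificate.

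(A3) asserts $\tfrac13+\CC-(2-\lgg3)=\tfrac73+\lgg3-\tfrac53\lgg5>0$. That compares $\tfrac13+\CC$ with $2-\lgg3=N(\tfrac13^-)$. You need the comparison with $W_{\mathrm{iso}}(\tfrac13)=2-\lgg3+\tfrac{\CC}3$, that is,
\[
  \tfrac13+\CC-W_{\mathrm{iso}}\bigl(\tfrac13\bigr)
  =\tfrac23\CC-\cthree
  =1+\lgg3-\tfrac{10}9\lgg5=-\beta\approx0.00504 .
\]
This does not follow from (A3). The margin in (A3) is about $0.0484$, and you would still have to show it exceeds $\tfrac{\CC}3\approx0.0434$, which is the same statement $-\beta>0$. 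The comparison is also about ten times tighter than you suggest. The coarse bounds $\lgg3>\tfrac{19}{12}$, $\lgg5<\tfrac73$ that certify (A3) give $1+\tfrac{19}{12}-\tfrac{70}{27}=-\tfrac1{108}$ here, so they do not decide it.

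Your parenthetical alternative is the correct justification. $\beta<0$ is proved in Lemma~\ref{lem:caseA}, and it is exactly the left-piece margin of Proposition~\ref{prop:dom} at $p_1=\tfrac13$. The paper states the same fact as $\tfrac32\cthree\le\CC$. Drop the (A3) citation and the reduction in your last paragraph, and use that instead.

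For the final claim $F_\ell(Q)\le Q+\CC$ on $[\tfrac13,\tfrac12)$, your route differs from the paper's but is valid. You split at $p^*$. On $[p^*,\tfrac12)$ you use the right-piece margin of Proposition~\ref{prop:dom}. On $[\tfrac13,p^*]$ you use $F_\ell\le W_{\mathrm{iso}}<p_1+\CC$, via Lemma~\ref{lem:cross}(b) and the left-piece margin.

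The paper instead argues directly that $Q+\CC-F_\ell(Q)$ is concave, since $F_\ell$ is affine minus an entropy of affine arguments. It is therefore bounded below by its endpoint values: (B1) at $\tfrac13$ and $\tfrac{\CC}2$ at $\tfrac12$. The paper's check is local and self-contained, and (B1) is needed anyway in Proposition~\ref{prop:L1-q}. Yours avoids (B1), but it depends on the crossing point $p^*$ and on both pieces of the domination proposition.
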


\begin{proof}
On $(0,\tfrac13)$ the bound $q+\CC$ increases to the limit
$\tfrac13+\CC$. On $[\tfrac13,\tfrac12)$, $E=\max(W_{\mathrm{iso}},F_\ell)$
with $W_{\mathrm{iso}}$ decreasing and $F_\ell$ increasing
(Lemma~\ref{lem:cross}(a)), so
$E(q)\le\max\bigl(W_{\mathrm{iso}}(\tfrac13),F_\ell(Q)\bigr)$ for
$q\le Q$; and $W_{\mathrm{iso}}(\tfrac13)\le\tfrac13+\CC$ is
$\tfrac32\cthree\le\CC$. On $[\tfrac12,1)$,
$V$ is convex, so
$V(q)\le\max\bigl(V(\tfrac12),V(Q)\bigr)=\max\bigl(\tfrac12,V(Q)\bigr)$,
while the $[\tfrac13,\tfrac12)$ part contributes at most
$\max(\tfrac13+\CC,K)=K$ \textup{(}Lemma~\ref{lem:midcomp}(o)\textup{)},
and $\tfrac12<K$. For the last claim it remains to check
$F_\ell(Q)\le Q+\CC$ on $[\tfrac13,\tfrac12)$: the difference
$Q+\CC-F_\ell(Q)$ is concave \textup{(}$F_\ell$ is affine plus $-\mathrm I$ of
affine arguments, hence convex\textup{)}, so it is bounded below by the
smaller of its endpoint values $\tfrac13+\CC-F_\ell(\tfrac13)>0$
\textup{(}certificate \textup{(B1)}\textup{)} and
$\tfrac12+\CC-K=\tfrac{\CC}2>0$.
\end{proof}

\begin{lemma}[No burial on $[\tfrac14,\tfrac13)$]\label{lem:noburial-q}
Let $\tfrac14\le p_1<\tfrac13$. Then:
\begin{enumerate}
\item[\textup{(i)}] the root is a group split: the left part
$L=\{p_1,\dots,p_k\}$ has $k\ge2$ symbols; write $a$ for its mass,
$x:=a-p_1>0$, $p_k$ for its least symbol, and $s:=|1-2a|$, so that
$s\le p_k$ by Lemma~\ref{lem:key};
\item[\textup{(ii)}] the left part splits its leader off by itself: $2p_1+p_2\ge a$;
hence $\ell_1=2$ exactly, and
\begin{equation}\label{eq:master-q}
  R\;=\;(1+a)-\mathrm H(p_1,x,1-a)\;+\;x\,R_{L'}\;+\;(1-a)\,R_{B},
\end{equation}
where $L'=L\setminus\{p_1\}$ \textup{(}mass $x$, largest symbol $p_2$, least
symbol $p_k$\textup{)} and $B$ is the right part \textup{(}mass $1-a$,
largest symbol $p_{k+1}\le p_k$\textup{)};
\item[\textup{(iii)}] the right part has at least two symbols.
\end{enumerate}
\end{lemma}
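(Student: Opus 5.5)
The plan is to prove the three items in order, each from Lemma~\ref{lem:iso}, Lemma~\ref{lem:key} and the bound $p_1\ge\tfrac14$; no entropy estimates are needed.

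For (i), I would note that $2p_1+p_2\le3p_1<1$. By the isolation criterion of Lemma~\ref{lem:iso}, the root therefore does not split the leader off alone. So the left part contains $p_1$ together with at least one further symbol, giving $k\ge2$ and $x=a-p_1\ge p_2>0$. The inequality $s=|1-2a|\le p_k$ is then Lemma~\ref{lem:key} applied at the root.

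For (ii), which is the only step needing an idea, I would apply Lemma~\ref{lem:iso} inside the left block after renormalising by $a$. The leader is isolated there iff $2p_1/a+p_2/a\ge1$, i.e.\ iff $2p_1+p_2\ge a$. If $k=2$ this is immediate, since $a=p_1+p_2$. For general $k$, Lemma~\ref{lem:key} gives $2a-1\le p_k$, so $a\le\tfrac{1+p_k}2\le\tfrac{1+p_2}2$. It then suffices to show $\tfrac{1+p_2}2\le 2p_1+p_2$, i.e.\ $1\le4p_1+p_2$. This holds, even strictly, because $p_1\ge\tfrac14$ and $p_2>0$. This is exactly where the lower end $\tfrac14$ of the window is used, and it is the step I expect to be the crux. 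Once the leader is isolated within $L$, it is a leaf at depth two. Formula \eqref{eq:master-q} then follows from Lemma~\ref{lem:rec}, applied at the root and at the root of $L$, together with the grouping identity $(1-\Hh(a))+a\bigl(1-\Hh(p_1/a)\bigr)=(1+a)-\mathrm H(p_1,x,1-a)$. This is the same computation as in the proof of Lemma~\ref{lem:fine}.

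For (iii), I would argue by contradiction, as in Lemma~\ref{lem:shapes}. Suppose the right part were the single symbol $\{p_n\}$. Since the source sums to $1$ and every symbol is at most $p_1<\tfrac13$, we have $n\ge4$, and Lemma~\ref{lem:key} gives $1-2p_n\le p_{n-1}\le p_1<\tfrac13$. Hence $p_n>\tfrac13>p_1$, contradicting the sorting. Finally, the bound $p_{k+1}\le p_k$ in the description of $B$ is just the sorting of the source.
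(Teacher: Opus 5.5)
Your proof is correct and follows the paper's route: (i) comes from the isolation criterion, (ii) from Lemma~\ref{lem:key} combined with $p_k\le p_2$ and $p_1\ge\tfrac14$, and (iii) from Lemma~\ref{lem:key} applied to a singleton right part. The one difference is cosmetic: in (ii) you argue directly from the one-sided bound $2a-1\le p_k$, which removes the paper's contradiction argument and its case split on the sign of $1-2a$, since the case $a\le\tfrac12$ is covered automatically.
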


\begin{proof}
(i) $2p_1+p_2\le3p_1<1$, so the root does not isolate the leader
(Lemma~\ref{lem:iso}); thus $k\ge2$, and the leader lies in $L$ since
blocks are prefixes.

(ii) Suppose $2p_1+p_2<a$, i.e.\ $p_2<a-2p_1$. By Lemma~\ref{lem:key},
$s\le p_k\le p_2<a-2p_1$. If $a\le\tfrac12$ then $1-2a<a-2p_1$ gives
$a>\tfrac{1+2p_1}3\ge\tfrac12$ (using $p_1\ge\tfrac14$), a contradiction;
if $a>\tfrac12$ then $2a-1<a-2p_1$ gives $a<1-2p_1\le\tfrac12$, again a
contradiction. Hence $2p_1+p_2\ge a$ and, by Lemma~\ref{lem:iso} applied
inside $L$, $L$ splits as $\{p_1\}\,|\,L'$: the leader is a leaf at depth
$2$. Identity \eqref{eq:master-q} is the recursion (Lemma~\ref{lem:rec})
at the root, the isolate reduction inside $L$, and the grouping identity
$(1-\Hh(a))+a\bigl(1-\Hh(p_1/a)\bigr)=(1+a)-\mathrm H(p_1,x,1-a)$, exactly
as in Lemma~\ref{lem:fine}.

(iii) If $B=\{p_n\}$ then $1-a=p_n\le p_k\le p_1$, so $a\ge1-p_1>\tfrac12$;
the bound $s\le p_k$ forces $p_k\ge2a-1\ge1-2p_1$, and $p_k\le p_1$ then
gives $p_1\ge\tfrac13$, a contradiction.
\end{proof}

Both children in \eqref{eq:master-q} are now bounded by
\eqref{eq:charge-q} and Lemma~\ref{lem:msup-q} \textup{(}the right part
only through $p_{k+1}\le p_k$:
$(1-a)R_B<(1-a)\,M\bigl(p_k/(1-a)\bigr)$\textup{)}; what remains is the
coupled supremum over the constraint set of Lemma~\ref{lem:noburial-q},
organised by the size of $L'$. Write
$J(a):=(1+a)-\mathrm H(p_1,\,a-p_1,\,1-a)\ge0$ for the two-level toll in
\eqref{eq:master-q} and $Q(a):=x/(1-a)$, strictly increasing in $a$, with
$J'(a)=1+\lgg Q(a)$; the corner curves are ($N_K$ as in
Lemma~\ref{lem:exact-q}; in $\Lambda_1,\Lambda_2,\Theta',\Theta''$ the
toll $J$ is evaluated at the indicated left mass, with $x=a-p_1$ equal to
$\tfrac{3\delta}7,\tfrac{3\delta}5,\tfrac{4\delta}7,\tfrac{4\delta}9$
respectively):
\begin{align*}
  E_1(p_1)&:=2-p_1+\tfrac{3-p_1}{5}\,\CC-\mathrm I(p_1)\\
   &\phantom{{}:={}}-\mathrm I\Bigl(\tfrac{3-p_1}{5}\Bigr)-2\mathrm I\Bigl(\tfrac{1-2p_1}{5}\Bigr),\\
  E_2(p_1)&:=\tfrac{5-p_1}{3}+\tfrac{2-4p_1}{3}+\tfrac{1-2p_1}{3}
   +\tfrac{1+p_1}{3}\,\CC\\
   &\phantom{{}:={}}-\mathrm I(p_1)-\mathrm I\Bigl(\tfrac{1+p_1}{3}\Bigr)-2\mathrm I\Bigl(\tfrac{1-2p_1}{3}\Bigr),\\
  \Lambda_1(p_1)&:=J\Bigl(\tfrac{3+p_1}{7}\Bigr)+\tfrac{2\delta}{7}+(1-p_1)\,\CC,\\
  \Lambda_2(p_1)&:=J\Bigl(\tfrac{3-p_1}{5}\Bigr)+\tfrac{2\delta}{5}+(1-p_1)\,\CC,\\
  \Theta'(p_1)&:=J\Bigl(\tfrac{4-p_1}{7}\Bigr)+\tfrac{4\delta}{7}\,K\\
   &\phantom{{}:={}}+\tfrac{\delta}{7}+\tfrac{3+p_1}{7}\,\CC,\\
  \Theta''(p_1)&:=J\Bigl(\tfrac{4+p_1}{9}\Bigr)+\tfrac{4\delta}{9}\,K\\
   &\phantom{{}:={}}+\tfrac{\delta}{9}+\tfrac{5-p_1}{9}\,\CC .
\end{align*}

\begin{proposition}[$|L'|=1$]\label{prop:L1-q}
If the left part has exactly two symbols, then
\[
\begin{aligned}
  R\;<\;\max\Bigl\{&F\bigl(p_1,\tfrac{\delta}{3}\bigr),\;
  F\bigl(p_1,\tfrac{1-p_1}{4}\bigr),\;\Theta(p_1),\\
  &N_K(p_1),\;N(p_1)\Bigr\}.
\end{aligned}
\]
\end{proposition}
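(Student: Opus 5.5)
Throughout, the left part is $L=\{p_1,p_2\}$, so $L'=\{p_2\}$, $x=p_2=p_k$, $xR_{L'}=0$ and $a=p_1+x$. By Lemma~\ref{lem:noburial-q}(ii),(iii), identity \eqref{eq:master-q} becomes
\[
  R \;=\; J(a)+(1-a)\,R_B ,
\]
and the right part $B$ has at least two symbols. Its leader fraction is $p_{k+1}/(1-a)\le Q(a)=x/(1-a)$. So by \eqref{eq:charge-q} and Lemma~\ref{lem:msup-q} we get $R<J(a)+(1-a)\,M(Q(a))$, and this inequality is strict.

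\textbf{Range of $x$.} The constraint $|1-2a|=|2x-\delta|\le p_k=x$ gives $x\in[\tfrac\delta3,\delta]$, and sorting gives $x\le p_1$. On $[\tfrac14,\tfrac13)$ we have $p_1\le\delta$, so $x\in[\tfrac\delta3,p_1]$. Since $Q$ is increasing in $x$, the interval splits at $Q=\tfrac13$ and $Q=\tfrac12$, that is at $x=\tfrac{1-p_1}4$ and $x=\tfrac{1-p_1}3$. These satisfy $\tfrac\delta3\le\tfrac{1-p_1}4\le\tfrac{1-p_1}3\le p_1$ exactly because $p_1\ge\tfrac14$. At the top end, $x=p_1$ gives $a=2p_1$, $1-a=\delta$ and $Q=p_1/\delta\ge\tfrac12$.

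\textbf{Convexity on each piece.} For each choice of $M$, the quantity $J(a)+(1-a)\,\mathrm{ch}(x/(1-a))$ is convex in $x$. The toll $J$ is affine minus entropy of affine arguments. The term $(1-a)\,g(x/(1-a))$ is the perspective of a convex $g$ along an affine path, with $g\in\{q+\CC,\ \tfrac13+\CC,\ F_\ell,\ K,\ V\}$, each convex. So on each $Q$-subinterval the maximum sits at an endpoint, and I identify these endpoints with the five listed curves.

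\begin{enumerate}
\item[(1)] $Q<\tfrac13$: the bound is $J(a)+x+(1-a)\CC=F(p_1,x)$. Its endpoints are $F(p_1,\tfrac\delta3)$ and $F(p_1,\tfrac{1-p_1}4)$.
\item[(2)] $Q\in[\tfrac13,\tfrac12)$: take the branches of $M$ separately. The constant $\tfrac13+\CC$ and $F_\ell(Q)$ are each dominated at $Q=\tfrac13$ by $F(p_1,\tfrac{1-p_1}4)$; for $F_\ell$ this uses $F_\ell(\tfrac13)<\tfrac13+\CC$, certificate (B1). At $Q=\tfrac12$ both are dominated by $J+(1-a)K$, since $\tfrac13+\CC<K$ by Lemma~\ref{lem:midcomp}(o) and $F_\ell(\tfrac12)=K$. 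With $a=\tfrac{1+2p_1}3$ this is $\Theta(p_1)$.
\item[(3)] $Q\ge\tfrac12$: the $K$-branch has endpoints $\Theta$ (at $Q=\tfrac12$) and $J(2p_1)+\delta K=N_K$ (at $x=p_1$). The $V$-branch has value $\le J+\tfrac{1-a}2<\Theta$ at $Q=\tfrac12$. At $x=p_1$, the grouping identity gives $J(2p_1)+\delta V(p_1/\delta)=3-3p_1-\mathrm H(p_1,p_1,p_1,1-3p_1)=N(p_1)$.
\end{enumerate}

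\textbf{Main obstacle.} The delicate point is convexity across the piecewise $M$. I avoid it by never convexifying the maximum itself. Instead I bound $R$ on each $Q$-subinterval by each single branch of $M$, each branch convex by the perspective argument. I then only need the endpoint identities, which are the exact-value computations of Lemma~\ref{lem:exact-q} and certificate (B1).
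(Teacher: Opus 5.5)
Your proof is correct and follows the paper's argument. You use the same reduction $R<J(a)+(1-a)M(Q(a))$, the same window $x\in[\tfrac\delta3,p_1]$ split at $Q=\tfrac13$ and $Q=\tfrac12$, and branch-by-branch endpoint bounds landing on $F(p_1,\tfrac\delta3)$, $F(p_1,\tfrac{1-p_1}4)$, $\Theta$, $N_K$, $N$ via (B1) and $F_\ell(\tfrac12)=K$. The differences are cosmetic: you get convexity from the perspective of a convex function instead of expanding into $\mathrm I$-terms, and you bound the flat branch $J+(1-a)(\tfrac13+\CC)$ on the middle piece by convexity plus $\tfrac13+\CC<K$, where the paper uses $J'\le0$.
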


\begin{proof}
Here $x=p_2=p_k\le p_1$ and $R_{L'}=0$, so by \eqref{eq:master-q} and
Lemma~\ref{lem:msup-q}, $R<h(a):=J(a)+(1-a)\,M(Q(a))$. The constraint
$s\le x$ reads $a\in[\tfrac{1+p_1}3,\,1-p_1]$, and $x\le p_1$ caps
$a\le2p_1$; since $p_1\ge\tfrac14>\tfrac15$ the window is
$[\tfrac{1+p_1}3,\,2p_1]$. Along it $Q$ increases from
$\delta/(2-p_1)<\tfrac13$ \textup{(}as $p_1>\tfrac15$\textup{)} through
$Q=\tfrac13$ at $a_{13}:=\tfrac{1+3p_1}4$ and $Q=\tfrac12$ at
$a_{12}:=\tfrac{1+2p_1}3$, to $Q(2p_1)=p_1/\delta\ge\tfrac12$
\textup{(}as $p_1\ge\tfrac14$; also $a_{13}\le a_{12}\le 2p_1$
there\textup{)}.

\textbf{On $[\tfrac{1+p_1}3,a_{13}]$} ($Q\le\tfrac13$):
$h(a)=J(a)+x+(1-a)\CC=F(p_1,x)$ with $x\in[\tfrac\delta3,\tfrac{1-p_1}4]$.
Since $\partial_xF=2-\CC+\lgg\tfrac{x}{1-p_1-x}$ is strictly increasing in
$x$ (Lemma~\ref{lem:fine}), $F$ is decreasing-then-increasing, so
$h\le\max\bigl(F(p_1,\tfrac\delta3),F(p_1,\tfrac{1-p_1}4)\bigr)$.

\textbf{On $[a_{13},a_{12}]$} ($\tfrac13\le Q\le\tfrac12$):
$h\le\max\bigl(\mathrm{fl}(a),\mathrm{co}(a)\bigr)$ pointwise, where
$\mathrm{fl}(a):=J(a)+(1-a)(\tfrac13+\CC)$ and
$\mathrm{co}(a):=J(a)+(1-a)F_\ell(Q(a))$. Since
$J'=1+\lgg Q\le1+\lgg\tfrac12=0$ here, $\mathrm{fl}$ is decreasing, and
$\mathrm{fl}(a_{13})=F(p_1,\tfrac{1-p_1}4)$ \textup{(}at $Q=\tfrac13$
the two bounds agree\textup{)}. Expanding $(1-a)F_\ell(Q)$ by the scaled
grouping identity $(1-a)\sum_i \mathrm I(u_i)=\sum_i \mathrm I\bigl(u_i(1-a)\bigr)
-\mathrm I(1-a)$ \textup{(}$\sum_i u_i=1$\textup{)},
\[
  \begin{aligned}
  \mathrm{co}(a)=&(1+a)+\frac{5(1-a)-x}{3}
  +\frac{2(1-a)-x}{3}\,\CC\\
  &-\mathrm I(p_1)-2\mathrm I(x)\\
  &-\mathrm I\Bigl(\frac{1+2p_1-3a}{3}\Bigr)
  -\mathrm I\Bigl(\frac{2+p_1-3a}{3}\Bigr),
  \end{aligned}
\]
an affine function of $a$ plus terms $-\mathrm I(\text{affine})$, which is
convex. Hence
$\mathrm{co}\le\max\bigl(\mathrm{co}(a_{13}),\mathrm{co}(a_{12})\bigr)$;
$\mathrm{co}(a_{13})\le\mathrm{fl}(a_{13})$ because
$F_\ell(\tfrac13)\le\tfrac13+\CC$ \textup{(}certificate
\textup{(B1)}\textup{)}, and
$\mathrm{co}(a_{12})=J(a_{12})+(1-a_{12})K=\Theta(p_1)$ by
$F_\ell(\tfrac12)=K$ \textup{(}Lemma~\ref{lem:exact-q}\textup{)}.

\textbf{On $[a_{12},2p_1]$} ($Q\ge\tfrac12$):
$h\le\max\bigl(\mathrm{fl}_K(a),f(a)\bigr)$ pointwise, where
$\mathrm{fl}_K:=J+(1-a)K$ and $f:=J+(1-a)V(Q)$. $\mathrm{fl}_K$ is convex
($J''>0$), so
$\mathrm{fl}_K\le\max\bigl(\mathrm{fl}_K(a_{12}),\mathrm{fl}_K(2p_1)\bigr)
=\max\bigl(\Theta(p_1),N_K(p_1)\bigr)$. Expanding
$(1-a)\Hh(Q)=\mathrm I(x)+\mathrm I(1+p_1-2a)-\mathrm I(1-a)$,
\[
  f(a)=3-2a+p_1-\mathrm I(p_1)-2\mathrm I(a-p_1)-\mathrm I(1+p_1-2a),
\]
again convex, so $f\le\max\bigl(f(a_{12}),f(2p_1)\bigr)$; here
$f(a_{12})=J(a_{12})+(1-a_{12})V(\tfrac12)\le\Theta(p_1)$ \textup{(}as
$V(\tfrac12)=\tfrac12<K$\textup{)} and $f(2p_1)=N(p_1)$.
\end{proof}

\begin{proposition}[$|L'|=2$]\label{prop:L2-q}
If the left part has exactly three symbols, then
\[
  R<\max\bigl(\Pi(p_1),E_1(p_1),E_2(p_1)\bigr).
\]
\end{proposition}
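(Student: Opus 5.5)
The plan is to run the same scheme as in Proposition~\ref{prop:L1-q}, now with two free parameters. Here $L'=\{p_2,p_3\}$ has mass $x=p_2+p_3$ and least symbol $p_k=p_3$. The chord bound \eqref{eq:children} gives $x\,R_{L'}\le p_2-p_3=x-2p_3$. The right part is charged through $p_{k+1}\le p_3$ via Lemma~\ref{lem:msup-q}. Starting from \eqref{eq:master-q} this yields
\[
  R\;<\;\Phi(a,p_3)\;:=\;J(a)+x-2p_3+(1-a)\,M\Bigl(\tfrac{p_3}{1-a}\Bigr),\qquad x=a-p_1 .
\]
The admissible set $\mathcal D$ in the $(a,p_3)$-plane is cut out by three linear constraints:
\begin{itemize}
\item the key-lemma constraint $|1-2a|\le p_3$ (Lemma~\ref{lem:noburial-q}(i));
\item the ordering $p_3\le p_2=x-p_3$, i.e.\ $p_3\le x/2$;
\item $p_2\le p_1$, i.e.\ $p_3\ge x-p_1$.
\end{itemize}
So $\mathcal D$ is a polygon. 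Using $p_1\ge\tfrac14$, I would check which of these constraints are active. I expect $x\in[\tfrac{2\delta}5,\tfrac{2\delta}3]$ exactly as in Lemma~\ref{lem:fine}, with $p_3$ ranging between $\max(|1-2a|,x-p_1)$ and $x/2$.

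Next I would split $\mathcal D$ according to the branch of $M$ in which $Q=p_3/(1-a)$ lies. Since $p_3\le x/2\le\delta/3$ and $1-a\ge(1+p_1)/3$, the value $Q$ should stay below $\tfrac13$ on $[\tfrac14,\tfrac13)$; I would verify this first. If it holds, then $M(Q)=Q+\CC$ throughout, and
\[
\Phi=J(a)+x-p_3+(1-a)\CC .
\]
This is strictly decreasing in $p_3$, so the maximum over each vertical segment of $\mathcal D$ is attained on the lower boundary $p_3=\max(|1-2a|,\,x-p_1)$. If instead $Q$ can exceed $\tfrac13$ near some corner, I would bound $\Phi$ by $\max(\mathrm{fl},\mathrm{co})$ on that piece, exactly as in Proposition~\ref{prop:L1-q}. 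There I would use the convexity of $\mathrm{co}$ obtained from the scaled grouping identity, together with certificate (B1).

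On the lower boundary, $\Phi$ becomes a one-variable function of $a$ of the form ``affine minus entropy of affine arguments''. It is therefore convex on each linear piece, by the standing convexity observation of this section, and is maximised at the breakpoints. The breakpoints should be three points:
\begin{itemize}
\item $a=\tfrac12$ (so $s=0$, $p_3\to0$, $x=\tfrac12-p_1$), where $\Phi$ equals $J(\tfrac12)+x+\tfrac{\CC}2=\Pi(p_1)$;
\item the corner $p_3=1-2a=x/2$, i.e.\ $x=\tfrac{2\delta}5$ and $1-a=\tfrac{3-p_1}5$, where $\Phi$ is $E_1(p_1)$;
\item the corner $p_3=2a-1=x/2$, i.e.\ $x=\tfrac{2\delta}3$ and $1-a=\tfrac{1+p_1}3$, where $\Phi$ is $E_2(p_1)$.
\end{itemize}
At both corners $p_2=p_3$, and the entropy terms collapse to the displayed forms of $E_1,E_2$. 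Boundary points coming from $p_3\ge x-p_1$ would be shown to be dominated, either because they coincide with one of these corners or by the monotonicity in $p_3$. Strictness is inherited from the strict inequality in Lemma~\ref{lem:msup-q}.

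I expect the main obstacle to be the bookkeeping of $\mathcal D$ and of the $M$-regimes. Concretely, this means certifying that $Q<\tfrac13$ (or handling the thin region where it fails) and that the constraint $p_2\le p_1$ does not create an extra vertex with a larger value. If such a vertex appears, I would compare it against $\Pi$ using the concave-difference, endpoint-certificate method of Lemma~\ref{lem:cert-q}.
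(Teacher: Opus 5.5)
Your argument is correct in outline and takes a slightly different route from the paper's. The paper keeps $xR_{L'}=x\bigl(1-\Hh(p_2/x)\bigr)$ exactly. Its bound $w(a,p_3)$ is then convex in $p_3$, so it must follow two edges of the constraint set: the key-tight edge $p_3=s$ and the balanced edge $p_3=x/2$, each convex in $a$ on either side of $\tfrac12$.

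You instead use the chord bound $xR_{L'}\le p_2-p_3$. This is pointwise weaker, but it loses nothing at the three maximising corners: for $p_2=p_3$ both sides vanish, and for $p_3\to0$ one has $\kp(1)=1$. It also reduces the $p_3$-dependence to the single term $-p_3$. The bound $J(a)+x-p_3+(1-a)\CC$ is therefore decreasing in $p_3$, and only the key-tight edge matters. The balanced edge enters only through its endpoints, which are the same $E_1,E_2$ corners. The paper's version gives a sharper intermediate bound. Yours is shorter and shows that the separate balanced-corner curve is unnecessary. The final maximum is the same.

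One step should be replaced rather than patched. $Q=p_3/(1-a)$ does \emph{not} stay below $\tfrac13$. Your own estimates give only $Q\le\frac{1-2p_1}{1+p_1}$, with equality at the $E_2$ corner $a=\frac{2-p_1}3$, $p_3=\frac{1-2p_1}3$. This exceeds $\tfrac13$ for $p_1<\tfrac27$; at $p_1=\tfrac14$ it equals $\tfrac25$.

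The $\mathrm{fl}/\mathrm{co}$ fallback modelled on Proposition~\ref{prop:L1-q}, with only (B1), would leave the value $J(a)+(1-a)F_\ell(Q)$ with $Q>\tfrac13$ at that corner. That value is not among $\Pi,E_1,E_2$.

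What you need is the last clause of Lemma~\ref{lem:msup-q}: $M(Q)\le Q+\CC$ for every $Q<\tfrac12$. It is proved by concavity of $Q+\CC-F_\ell(Q)$ between (B1) at $\tfrac13$ and the value $\tfrac{\CC}2$ at $\tfrac12$. Since $Q\le\tfrac25<\tfrac12$ throughout (the paper phrases this as $x<1-a$), you get $\Phi\le J(a)+x-p_3+(1-a)\CC$ on the whole admissible set, with no regime split.

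Likewise, the constraint $p_2\le p_1$ never produces a vertex. Indeed $p_2\le x-s\le\tfrac12-p_1\le p_1$ for $p_1\ge\tfrac14$, so the lower boundary is just $p_3=s$. With these two observations your proof goes through.
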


\begin{proof}
Here $L'=\{p_2,p_3\}$, $p_3=x-p_2=p_k$, and
$x\,R_{L'}=x\bigl(1-\Hh(p_2/x)\bigr)$ exactly. Using
$x\,\Hh(p_2/x)=\mathrm I(p_2)+\mathrm I(p_3)-\mathrm I(x)$, \eqref{eq:master-q} and
Lemma~\ref{lem:msup-q} give
\[
\begin{aligned}
  R\;<\;v(a,p_3):=&(1+a)+x-\mathrm I(p_1)-\mathrm I(1-a)\\
  &-\mathrm I(x-p_3)-\mathrm I(p_3)
  +(1-a)\,M\Bigl(\frac{p_3}{1-a}\Bigr).
\end{aligned}
\]
The constraints are $s\le p_3\le x/2$ \textup{(}Lemma~\ref{lem:key};
$p_3$ is the smaller of the pair\textup{)}, which force
$a\in[\tfrac{2+p_1}5,\tfrac{2-p_1}3]$; the sorting constraint
$p_2\le p_1$ is automatic: for $a\ge\tfrac12$,
$p_2\le x-s=1-p_1-a\le\tfrac12-p_1\le p_1$, and for $a<\tfrac12$,
$p_2\le x-s=3a-1-p_1<\tfrac12-p_1\le p_1$ \textup{(}using
$p_1\ge\tfrac14$\textup{)}. Since
$a\le\tfrac{2-p_1}3<\tfrac{1+p_1}2$ \textup{(}as $p_1>\tfrac15$\textup{)}
we have $x<1-a$, so $p_3/(1-a)\le x/(2(1-a))<\tfrac12$, and
Lemma~\ref{lem:msup-q} gives $M\bigl(p_3/(1-a)\bigr)\le p_3/(1-a)+\CC$.
Hence
\[
\begin{aligned}
  v\;\le\;w(a,p_3):=&(1+a)+x+p_3+(1-a)\,\CC\\
  &-\mathrm I(p_1)-\mathrm I(1-a)-\mathrm I(x-p_3)-\mathrm I(p_3).
\end{aligned}
\]
For fixed $a$, $w$ is convex in $p_3$, so
$w\le\max\bigl(w(a,s),\,w(a,x/2)\bigr)$.

\textbf{Balanced corner} $w(a,x/2)$. This is convex in $a$ \textup{(}affine plus
$-\mathrm I(1-a)-2\mathrm I(x/2)$\textup{)} on $[\tfrac{2+p_1}5,\tfrac{2-p_1}3]$,
hence at most $\max(E_1,E_2)$, its values at the two endpoints
\textup{(}at $a=\tfrac{2+p_1}5$: $x/2=\tfrac{1-2p_1}5=s$; at
$a=\tfrac{2-p_1}3$: $x/2=\tfrac{1-2p_1}3=s$\textup{)}.

\textbf{Key-tight corner} $w(a,s)$. On $a\le\tfrac12$, $s=1-2a$ and
$x-s=3a-1-p_1$, so $w(a,s)$ is affine plus
$-\mathrm I(1-a)-\mathrm I(3a-1-p_1)-\mathrm I(1-2a)$, hence convex; on
$a\ge\tfrac12$, $s=2a-1$ and $x-s=1-p_1-a$, so it is likewise convex. Hence it
is bounded by its
values at $a\in\{\tfrac{2+p_1}5,\tfrac12,\tfrac{2-p_1}3\}$. The extreme
endpoints coincide with those of the balanced corner \textup{(}there
$s=x/2$\textup{)}, and at $a=\tfrac12$, $s=0$:
$w(\tfrac12,0)=\tfrac32+x+\tfrac{\CC}2-\mathrm I(p_1)-\mathrm I(\tfrac12)-\mathrm I(x)
=\Pi(p_1)$ with $x=\tfrac12-p_1$.
\end{proof}

\begin{proposition}[$|L'|\ge3$]\label{prop:L3-q}
If the left part has at least four symbols, then
\[
  R<\max\bigl(\Pi,\Lambda_1,\Lambda_2,\Theta',\Theta''\bigr)(p_1).
\]
\end{proposition}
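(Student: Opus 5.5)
We start from the master identity \eqref{eq:master-q} of Lemma~\ref{lem:noburial-q}. Here $|L'|\ge3$, so $L'$ is a block of mass $x$ whose largest symbol is $p_2$ and whose least symbol is $p_k$. With at least three symbols, $p_k\le x/3$. The key constraint $s=|1-2a|\le p_k\le x/3$ then confines the left mass to
\[
  a\in\Bigl[\tfrac{3+p_1}7,\ \tfrac{4-p_1}7\Bigr],
\]
the two endpoints being the solutions of $1-2a=(a-p_1)/3$ and $2a-1=(a-p_1)/3$. At these endpoints $x=\tfrac{3\delta}7$ and $x=\tfrac{4\delta}7$, which explains the subscripts of $\Lambda_1$ and $\Theta'$.

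\textbf{Bounding the children.} For the right part we use only $p_{k+1}\le p_k\le x/3$. We have $x/3<(1-a)/2$ on this window, so Lemma~\ref{lem:msup-q} gives $(1-a)R_B<p_k+(1-a)\CC$. For the left remainder we charge $xR_{L'}<x\,\mathrm{ch}(p_2/x)$ with $q=p_2/x$, and split by the regime of $q$.

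\emph{Regime $q<\tfrac12$.} By Lemma~\ref{lem:msup-q}, $xR_{L'}<p_2+x\CC$. Since $p_2$ and $p_k$ are distinct symbols of $L'$, $p_2+p_k\le x$, so
\[
  R<J(a)+x+(1-p_1)\CC .
\]
This is convex in $a$ because $J''>0$, so it is bounded by its values at the ends of the admissible $a$-interval. Using the sharper constraints from $p_2\ge p_k$, $p_2\le p_1$ and $s\le p_k$, we expect those ends to be $a=\tfrac12$ (where $s=0$ and the toll reproduces $\Pi$, as in the key-tight corner of Proposition~\ref{prop:L2-q}), $a=\tfrac{3+p_1}7$ (giving $\Lambda_1$), and $a=\tfrac{3-p_1}5$ (giving $\Lambda_2$). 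The last value corresponds to $x=\tfrac{3\delta}5$, i.e.\ to $s=x/3$ with $p_2=x/3$ on the side $a\le\tfrac12$. To reach these values we should keep the $+p_k$ from the right part explicitly, writing $x-p_2+p_k\le x-p_2+s$, and so recover the $\tfrac{2\delta}7$, $\tfrac{2\delta}5$ terms.

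\emph{Regime $q\ge\tfrac12$.} By Lemma~\ref{lem:msup-q}, $\mathrm{ch}(q)\le\max(K,V(q))$. The $V$ branch means $L'$ isolates $p_2$, and the remaining mass $x-p_2\ge2p_k$ is bounded affinely. Since $L'$ has at least three symbols, $p_2\le x-2p_k$, and hence $q$ is capped. This bounds $xR_{L'}$ by $xK$ (plus the leftover affine terms). The resulting bound $J(a)+xK+p_k+(1-a)\CC$ is again convex in $a$, with endpoints $x=\tfrac{4\delta}7$ (giving $\Theta'$, with $p_k=s=\delta/7$) and $x=\tfrac{4\delta}9$ (giving $\Theta''$, with $a=\tfrac{4+p_1}9$). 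Finally, the $V(q)$ sub-branch has to be dominated by the $K$ branch or by $\Pi$, using the cap on $q$ and the convexity of $V$.

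\textbf{Main obstacle.} Every step is ``affine minus entropy of affine arguments'', so convexity reduces everything to corners. The real work is choosing the right two-variable reduction: in $(a,p_k)$, or in $(a,p_2)$, each piece must be convex and its corners must land exactly on the five named curves. This mirrors the balanced/key-tight corner structure of Proposition~\ref{prop:L2-q}. I would first verify that the corner evaluations reproduce $\Lambda_1,\Lambda_2,\Theta',\Theta''$ exactly, adjusting the case split, especially at $a=\tfrac12$ where $s$ changes formula, until they do.
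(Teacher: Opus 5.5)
Your architecture (master identity, affine charge on the right part, charge $x\,\mathrm{ch}(p_2/x)$ on $L'$, reduction to corners by convexity) is the paper's. The proposal does not close, however, because the charge you use in the regime $q'=p_2/x<\tfrac12$ is too weak.

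\textbf{The affine charge cannot reach the stated bound.} Using $\mathrm{ch}(q')\le q'+\CC$ on all of $q'<\tfrac12$ cannot reach the stated maximum, however carefully you then optimise $p_2+p_k$. Take $a=\tfrac{4-p_1}7$, where $s=2a-1=x/4$, and let $p_2\uparrow x/2$ with $p_k=\tfrac{x-p_2}2$; the Key-lemma and sorting constraints all hold. Your ingredients then give $J(a)+\tfrac34x+(1-p_1)\CC=\Theta'(p_1)+\tfrac{2\CC\delta}{7}$. At $p_1=\tfrac14$ this is about $0.3173$. The five curves there are $\Pi(\tfrac14)=\tfrac14+\tfrac{\CC}2\approx0.3151$, $\Lambda_1\approx0.246$, $\Lambda_2\approx0.308$, $\Theta'\approx0.299$ and $\Theta''\approx0.253$, so your bound exceeds all of them.

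\textbf{The missing ingredient} is the middle envelope of Theorem~\ref{thm:mid} for $q'\in[\tfrac13,\tfrac12)$. Substitute $p_k=\min(p_2,\tfrac{x-p_2}2)$ and write $D(a)=J(a)+(1-a)\CC$.
\begin{itemize}
\item For $p_2\le x/3$ the affine charge does suffice. It gives $\Lambda(a)=D(a)+\tfrac{2x}3+x\CC$, which is convex in $a$ with end values $\Lambda_1,\Lambda_2$.
\item For $p_2\ge x/3$ you must use $\mathrm{ch}(q')\le\max(\tfrac13+\CC,F_\ell(q'))$. The first form is decreasing in $p_2$ and falls back to $\Lambda(a)$. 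The second is convex in $p_2$, and $F_\ell(\tfrac12)=K=\tfrac12+\tfrac{\CC}2$ rather than $\tfrac12+\CC$. So its corner $p_2=x/2$ is $\kappa_0=D+xK+\tfrac x4$, which recovers exactly the $\tfrac{x\CC}2$ you lose.
\item $\kappa_0$ is convex on its window $s\le x/4$, i.e.\ $a\in[\tfrac{4+p_1}9,\tfrac{4-p_1}7]$, with end values $\Theta'',\Theta'$.
\item The remaining corner $p_2=x-2s$ (for $x/4\le s\le x/3$) is convex on each of its two $a$-intervals. Its ends are bounded by $\Lambda$ (via certificate (B1) of Lemma~\ref{lem:cert-q}) and equal $\kappa_0$.
\end{itemize}

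\textbf{Further errors.}
\begin{itemize}
\item The window is miscomputed. Solving $2a-1=(a-p_1)/3$ gives $a=\tfrac{3-p_1}5$, not $\tfrac{4-p_1}7$; the latter solves $2a-1=(a-p_1)/4$. The admissible window is therefore $[\tfrac{3+p_1}7,\tfrac{3-p_1}5]$, with $x\in[\tfrac{3\delta}7,\tfrac{3\delta}5]$. Your window drops $a\in(\tfrac{4-p_1}7,\tfrac{3-p_1}5]$. Also, $\tfrac{3-p_1}5$ lies on the side $a>\tfrac12$, not $a\le\tfrac12$.
\item Since $p_k\ge s$, your inequality $x-p_2+p_k\le x-p_2+s$ is reversed.
\item In the regime $q'\ge\tfrac12$, $xR_{L'}$ is not bounded by $xK$. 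As $a\to\tfrac12$ the cap $q'\le1-2s/x$ tends to $1$, and $V(q')\to1$.
\end{itemize}
This last regime is precisely where $\Pi$ comes from. On the Key-tight corner $p_2=x-2s$, the $V$-charged bound is convex on each side of $a=\tfrac12$. It equals $\Pi(p_1)$ at $a=\tfrac12$ (where $s=0$) and is at most $\kappa_0$ at $s=x/4$. In your $q'<\tfrac12$ paragraph, by contrast, $a=\tfrac12$ is an interior point of a function convex in $a$, so it cannot appear there as a maximising end.
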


\begin{proof}
Now $p_k\le\min\bigl(p_2,\tfrac{x-p_2}2\bigr)\le\tfrac x3$ \textup{(}the
rest $L'\setminus\{p_2\}$ has $\ge2$ symbols, each $\ge p_k$\textup{)}, so
the constraint $s\le p_k\le x/3$ confines
$a\in[\tfrac{3+p_1}7,\tfrac{3-p_1}5]$, whence $1-a\ge\tfrac{2+p_1}5$ and
$p_k/(1-a)\le\tfrac{x/3}{1-a}\le\tfrac{1-2p_1}{2+p_1}<\tfrac13$
\textup{(}as $p_1>\tfrac17$\textup{)}: the right part is always bounded
affinely, $(1-a)R_B<p_k+(1-a)\,\CC$. The redundancy of $L'$ is bounded by
$x\,\mathrm{ch}(q')$, $q':=p_2/x$. Since the total is increasing in
$p_k$, put $p_k=\min\bigl(p_2,\tfrac{x-p_2}2\bigr)$ and write
$D(a):=J(a)+(1-a)\,\CC$.

\textbf{Branch $p_2\le x/3$} \textup{(}$p_k=p_2$, $q'\le\tfrac13$;
Lemma~\ref{lem:msup-q} gives $\mathrm{ch}(q')\le q'+\CC$ there\textup{)}:
the total is at most $D+2p_2+x\CC$, increasing in $p_2$, hence at most
$\Lambda(a):=D(a)+\tfrac{2x}3+x\CC$, which is convex in $a$
\textup{(}$J''>0$\textup{)}, so $\Lambda\le\max(\Lambda_1,\Lambda_2)$,
its endpoint values at $x=\tfrac{3\delta}7$ and $x=\tfrac{3\delta}5$.

\textbf{Branch $p_2\ge x/3$} \textup{(}$p_k=\tfrac{x-p_2}2$\textup{)}: the
total is $g(p_2):=D(a)+x\,\mathrm{ch}(q')+\tfrac{x-p_2}2$ with
$p_2\in[\tfrac x3,\,x-2s]$ \textup{(}the ceiling from
$p_k\ge s$\textup{)}; note $q'\le q'_{\max}:=1-2s/x$, and $q'\ge\tfrac12$
occurs \textup{(}iff $p_2\ge x/2$\textup{)} only when $s\le x/4$. Bound
$\mathrm{ch}$ by its three regime forms; each resulting expression is
convex (or monotone) in $p_2$:
the form $D+x(\tfrac13+\CC)+\tfrac{x-p_2}2$ \textup{(}valid for
$q'\in[\tfrac13,\tfrac12)$ via $W_{\mathrm{iso}}\le\tfrac13+\CC$, as in
Lemma~\ref{lem:msup-q}\textup{)} is decreasing in $p_2$, hence at most its
value $\Lambda(a)$ at $p_2=\tfrac x3$; the form
$D+xF_\ell(p_2/x)+\tfrac{x-p_2}2$ is convex in $p_2$ \textup{(}the scaled
$xF_\ell(p_2/x)$ has $\mathrm I$-arguments $p_2,\tfrac{x-2p_2}3,\tfrac{2x-p_2}3$,
affine in $p_2$\textup{)}, with maximum at $p_2=\tfrac x3$ \textup{(}where
\textup{(B1)} gives $\le\Lambda(a)$\textup{)} or at
$p_2=\min(\tfrac x2,x-2s)$; the form with $V(q')$
\textup{(}$q'\ge\tfrac12$, so $s\le x/4$\textup{)} is convex in $p_2$ via
$xV(p_2/x)=2x-p_2-\mathrm I(p_2)-\mathrm I(x-p_2)+\mathrm I(x)$, with maximum at
$p_2=\tfrac x2$ \textup{(}value $\le D+xK+\tfrac x4$, as
$V(\tfrac12)=\tfrac12<K$\textup{)} or at $p_2=x-2s$. Three corner curves
in $a$ remain.

\textbf{(a) The half corner} $\kappa_0(a):=D(a)+xK+\tfrac x4$ \textup{(}from
$p_2=\tfrac x2$; by $F_\ell(\tfrac12)=K$ it also dominates the
$F_\ell$-bound there\textup{)}, defined for $s\le\tfrac x4$, i.e.\
$a\in[\tfrac{4+p_1}9,\tfrac{4-p_1}7]$. It is convex, so
$\kappa_0\le\max(\Theta'',\Theta')$, its endpoint values.

\textbf{(b) The Key-tight corner with $q'_{\max}<\tfrac12$}
\textup{(}$p_2=x-2s$, $\tfrac x4\le s\le\tfrac x3$\textup{)}: the
expression $D+xF_\ell(1-2s/x)+s$ has $\mathrm I$-arguments
$x-2s,\ \tfrac{4s-x}3,\ \tfrac{x+2s}3$, all affine in $a$ on either side
of $a=\tfrac12$, hence convex there; and its window endpoints are $s=\tfrac x3$
\textup{(}$q'_{\max}=\tfrac13$, value $\le\Lambda(a)$ by
\textup{(B1)}\textup{)} and $s=\tfrac x4$ \textup{(}$q'_{\max}=\tfrac12$,
value $=\kappa_0$ at the corresponding endpoint\textup{)}. Sorting is
inactive here: $p_2=x-2s\le\tfrac x2\le\tfrac{3\delta}{10}<\tfrac14\le p_1$.

\textbf{(c) The Key-tight corner with $q'_{\max}\ge\tfrac12$}
\textup{(}$p_2=x-2s$, $s\le\tfrac x4$, $V$-bound\textup{)}:
\[
\begin{aligned}
  m(a)\;=\;&(1+a)-\mathrm I(p_1)-\mathrm I(1-a)\\
  &+(1-a)\,\CC+x+3s-\mathrm I(x-2s)-\mathrm I(2s),
\end{aligned}
\]
convex on either side of $a=\tfrac12$; its corners are $a=\tfrac12$, where
$s=0$ and $m=\Pi(p_1)$ exactly, and $s=\tfrac x4$, where
$2\mathrm I(\tfrac x2)=x+\mathrm I(x)$ gives
$m=\kappa_0-x\bigl(K-\tfrac12\bigr)\le\kappa_0$. Sorting is again
inactive: over this branch $p_2=x-2s$ is maximised at $a=\tfrac12$, where
it equals $\tfrac12-p_1\le p_1$.
\end{proof}

\begin{lemma}[All corners under three curves]\label{lem:dom-q}
On $[\tfrac14,\tfrac13]$:
\begin{enumerate}
\item[\textup{(a)}] $F(\cdot,\tfrac\delta3)\le\Theta$,
$F(\cdot,\tfrac{1-\cdot}4)\le\Theta$, and $N_K\le\Theta$;
\item[\textup{(b)}] $\Pi$ is strictly decreasing, and
$E_1,E_2,\Lambda_1,\Theta''\le\Pi(\tfrac13)\le\Pi$;
\item[\textup{(c)}] $\Lambda_2\le\Pi$ and $\Theta'\le\Pi$.
\end{enumerate}
\end{lemma}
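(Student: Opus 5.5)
Every comparison in the lemma has the same form the paper uses throughout: two curves, each an affine function of $p_1$ minus a sum of terms $\mathrm I(u(p_1))$ with $u$ affine. I would therefore show that each relevant difference is concave, or that the one-sided curve is convex, and read off its sign from the endpoint values at $\tfrac14$ and $\tfrac13$. Those endpoint values are the exact evaluations of Lemma~\ref{lem:exact-q}, signed by the certificates of Lemma~\ref{lem:cert-q}. The concavity check repeats Lemma~\ref{lem:midcomp}: each term $\pm\mathrm I(u)$ with slope $c$ contributes $\mp c^2/(u\ln2)$ to the second derivative, and I would collect these terms and dominate the positive ones on $[\tfrac14,\tfrac13]$, where $\delta\in[\tfrac13,\tfrac12]$ and $p_1\ge\tfrac14$.

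For (a), I would treat the three differences $\Theta-F(\cdot,\tfrac\delta3)$, $\Theta-F(\cdot,\tfrac{1-\cdot}4)$ and $\Theta-N_K$ one at a time. For $\Theta-F(\cdot,\tfrac{1-\cdot}4)$, Lemma~\ref{lem:exact-q} already says the difference is affine, so the two values in (C2) settle it immediately. For $\Theta-N_K$, the value at $\tfrac14$ is $0$ by Lemma~\ref{lem:exact-q} and the value at $\tfrac13$ is positive by (C3). For $\Theta-F(\cdot,\tfrac\delta3)$, the value at $\tfrac14$ is (C1); at $\tfrac13$ I would use $F(\tfrac13,\tfrac19)=\Theta(\tfrac13)$ from Lemma~\ref{lem:exact-q}, which gives a zero endpoint. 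In both nonaffine cases the concavity of the difference, once verified, finishes the argument. The $\mathrm I$-arguments of $\Theta$ are $p_1,\tfrac{1-p_1}3,\tfrac{2(1-p_1)}3$ and those of $F(\cdot,\tfrac\delta3)$ are $p_1,\tfrac\delta3,\tfrac{2-p_1}3$, so the second derivative reduces to a short rational expression. If concavity fails for $N_K$, the fallback is to use that $N_K$ is convex with $N_K(\tfrac13)\le\Theta(\tfrac13)$ and $N_K(\tfrac14)=\Theta(\tfrac14)$, and to compare it against a chord of $\Theta$ instead.

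For (b), I would first show that $\Pi$ is strictly decreasing: $\Pi'(p_1)=-1+\lgg p_1-\lgg(\tfrac12-p_1)$ (the $\lgg e$ terms cancel), and this is negative because $p_1/(\tfrac12-p_1)\le2$ for $p_1\le\tfrac13$. That gives $\Pi(\tfrac13)\le\Pi$ on the interval. Next, each of $E_1,E_2,\Lambda_1,\Theta''$ is affine minus $\mathrm I$ of affine arguments, hence convex; for $\Lambda_1$ and $\Theta''$ this holds because $J$ at an affine left mass is such an expression. Each of these curves is therefore at most the larger of its two endpoint values, and (D1)–(D4) show that both endpoint values lie below $\Pi(\tfrac13)$.

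For (c), I cannot use the constant $\Pi(\tfrac13)$, so I would argue concavity of $\Pi-\Lambda_2$ and $\Pi-\Theta'$ directly and then apply (D5) and (D6) at both ends. The concavity step is where I expect the main obstacle, since the positive contributions come from the $\mathrm I$-terms of $\Pi$ and must be beaten by those of $J$. If concavity fails on part of the interval, the fallback is to split at an interior rational point such as $\tfrac{23}{72}$ or $\tfrac{2}{7}$. On each piece I would bound the convex curve by its chord and use the monotonicity of $\Pi$, adding one more exact evaluation that is certified by the integer-power method of Lemma~\ref{lem:certify}.
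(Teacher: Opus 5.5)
Your proposal is correct and follows the paper's proof essentially step for step. It shows concavity of the differences in (a) and (c) by collecting the $\mp c^2/(u\ln2)$ contributions, uses that $\Theta-F(\cdot,\tfrac{1-\cdot}4)$ is affine, and bounds the convex curves $E_1,E_2,\Lambda_1,\Theta''$ by the constant $\Pi(\tfrac13)$ in (b), with endpoint signs taken from Lemmas~\ref{lem:exact-q} and~\ref{lem:cert-q}. Your fallbacks are never needed, because every concavity check goes through on $[\tfrac14,\tfrac13]$, e.g.\ $(\Theta-N_K)''\ln2=\tfrac1{1-p_1}-\tfrac1{p_1}-\tfrac4\delta<0$, $(\Pi-\Lambda_2)''\ln2=-\tfrac2{5\delta}-\tfrac1{5(2+p_1)}<0$ and $(\Pi-\Theta')''\ln2=-\tfrac2{7\delta}-\tfrac1{7(3+p_1)}<0$.
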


\begin{proof}
Every difference is affine in $p_1$ plus terms $\pm \mathrm I(u(p_1))$ with $u$
affine; second derivatives are computed as in Lemma~\ref{lem:midcomp}, and
endpoint signs are Lemmas~\ref{lem:exact-q} and~\ref{lem:cert-q}.

\textup{(a)} $\Theta-F(\cdot,\tfrac\delta3)$: second derivative
$\bigl[\tfrac1{1-p_1}-\tfrac4{3\delta}-\tfrac1{3(2-p_1)}\bigr]/\ln2<0$
\textup{(}$\delta\le\tfrac12$ gives $\tfrac4{3\delta}\ge\tfrac83$ while
$\tfrac1{1-p_1}\le\tfrac32$\textup{)}, so concave, and $\ge0$ at both ends
\textup{(}\textup{(C1)} at $\tfrac14$; $0$ exactly at $\tfrac13$, the
contact of Lemma~\ref{lem:exact-q}\textup{)}, hence $\ge0$ between.
$\Theta-F(\cdot,\tfrac{1-\cdot}4)$ is affine
(Lemma~\ref{lem:exact-q}) and positive at both ends
\textup{(}\textup{(C2)}\textup{)}. $\Theta-N_K$: second derivative
$\bigl[\tfrac1{1-p_1}-\tfrac1{p_1}-\tfrac4\delta\bigr]/\ln2<0$, so concave;
it vanishes exactly at $\tfrac14$ \textup{(}both values
$\tfrac{1+\CC}4$\textup{)} and is positive at $\tfrac13$
\textup{(}\textup{(C3)}\textup{)}.

\textup{(b)} $\Pi'=-1+\lgg\tfrac{p_1}{1/2-p_1}\le0$ iff
$p_1\le2\bigl(\tfrac12-p_1\bigr)$, i.e.\ $p_1\le\tfrac13$. Each of
$E_1,E_2,\Lambda_1,\Theta''$ is convex, so bounded by the larger of its
values at $p_1\in\{\tfrac14,\tfrac13\}$, and all eight values are below
$\Pi(\tfrac13)$: \textup{(D1)}--\textup{(D4)}.

\textup{(c)} $\Pi-\Lambda_2$: the terms $-\mathrm I(p_1)$ cancel; the second
derivative is
$\bigl[\tfrac2\delta-\tfrac{12}{5\delta}-\tfrac1{5(2+p_1)}\bigr]/\ln2
=\bigl[-\tfrac2{5\delta}-\tfrac1{5(2+p_1)}\bigr]/\ln2<0$, so concave and
positive at both ends \textup{(}\textup{(D5)}\textup{)}. $\Pi-\Theta'$:
second derivative
$\bigl[\tfrac2\delta-\tfrac{16}{7\delta}-\tfrac1{7(3+p_1)}\bigr]/\ln2<0$,
so concave and positive at both ends \textup{(}\textup{(D6)}\textup{)}.
\end{proof}

\begin{lemma}[Single crossings]\label{lem:cross-q}
On $[\tfrac14,\tfrac13]$:
\begin{enumerate}
\item[\textup{(i)}] $\Pi-\Theta$ is convex, equals $\tfrac{\CC}4>0$ at
$\tfrac14$ and is negative at $\tfrac13$
\textup{(}\textup{(A1)}\textup{)}; hence it has a unique zero $b_1$, is
$\ge0$ before and $\le0$ after.
\item[\textup{(ii)}] $\Theta-N$ is concave, equals $\tfrac{\CC}4>0$ at
$\tfrac14$ and is negative at $\tfrac13$
\textup{(}\textup{(A2)}\textup{)}; hence it has a unique zero $b_2$, is
$\ge0$ before and $\le0$ after.
\item[\textup{(iii)}] $b_1<b_2$: at $\rho=\tfrac{23}{72}$ both
$\Theta-\Pi>0$ and $\Theta-N>0$ hold \textup{(}\textup{(S)}\textup{)},
so $b_1<\rho<b_2$.
\end{enumerate}
Numerically $b_1=0.3190251\ldots$, $b_2=0.3196323\ldots$
\end{lemma}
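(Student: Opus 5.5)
For each of the three items I will follow the pattern already used in Lemmas~\ref{lem:midcomp}--\ref{lem:dom-q}. First I establish the convexity or concavity of the difference by collecting second derivatives, then I read the signs off the endpoint values from Lemmas~\ref{lem:exact-q} and~\ref{lem:cert-q}, and finally I use the shape of the function to get a unique sign change.

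\textbf{Item (i).} Write $\Pi-\Theta$ as an affine function plus terms $\pm\mathrm I(u(p_1))$. The terms $-\mathrm I(p_1)$ occur in both $\Pi$ and $\Theta$ and cancel. What remains is $-\mathrm I(\tfrac12-p_1)$ from $\Pi$, and $+\mathrm I(\tfrac{1-p_1}3)+\mathrm I(\tfrac{2(1-p_1)}3)$ from $-\Theta$. Multiplied by $\ln2$, the second derivative is
\[
\frac{1}{\tfrac12-p_1}-\frac{1}{3(1-p_1)}-\frac{2}{3(1-p_1)}
=\frac{2}{\delta}-\frac{1}{1-p_1}.
\]
This is positive because $\delta\le\tfrac12$ gives $2/\delta\ge4$, while $\tfrac1{1-p_1}\le\tfrac32$. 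So the difference is convex.

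For the endpoints I take the values from Lemma~\ref{lem:exact-q}: $\Pi(\tfrac14)-\Theta(\tfrac14)=\tfrac14+\tfrac{\CC}2-\tfrac{1+\CC}4=\tfrac{\CC}4$. At $\tfrac13$ the value is negative by (A1). A convex function that is positive at the left end and negative at the right end has exactly one zero. This is because the zero set of a convex function is an interval, and once the function becomes negative it cannot return to zero before the right endpoint without becoming nonnegative there. So the function changes sign exactly once and is $\ge0$ before the zero and $\le0$ after it.

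\textbf{Item (ii).} Here the difference is $\Theta-N$. $\Theta$ contributes $-\mathrm I(p_1)-\mathrm I(\tfrac{1-p_1}3)-\mathrm I(\tfrac{2(1-p_1)}3)$, and $-N$ contributes $+3\mathrm I(p_1)+\mathrm I(1-3p_1)$. Times $\ln2$, the second derivative is
\[
\frac{1}{p_1}+\frac{1}{1-p_1}\Bigl(\frac13+\frac23\Bigr)-\frac{3}{p_1}-\frac{9}{1-3p_1}
=-\frac{2}{p_1}+\frac{1}{1-p_1}-\frac{9}{1-3p_1}<0.
\]
The sign holds because $2/p_1\ge6>\tfrac32\ge\tfrac1{1-p_1}$. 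So the difference is concave.

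At $\tfrac14$, Lemma~\ref{lem:exact-q} gives $\tfrac{1+\CC}4-\tfrac14=\tfrac{\CC}4$. At $\tfrac13$ the value is negative by (A2). For a concave function the set where it is nonnegative is an interval. That interval contains $\tfrac14$ but not $\tfrac13$, so it has the form $[\tfrac14,b_2]$. This gives a unique zero with the claimed signs on either side.

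\textbf{Item (iii).} By (S), $\Theta-\Pi>0$ at $\rho=\tfrac{23}{72}$, i.e.\ $\Pi-\Theta<0$ there. By (i) this forces $b_1<\rho$. Also by (S), $\Theta-N>0$ at $\rho$, and by (ii) this forces $\rho<b_2$. Together these give $b_1<\rho<b_2$.

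I would also record that the substitutions stated in Section~\ref{sec:main} turn these zeros into the defining equations for $r_1$ and $q_2$. This justifies the uniqueness claims made there.

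\textbf{Main obstacle.} The only delicate part is the set of near-degenerate certificates (A1), (A2) and (S). Their margins are tiny, since $b_1$ and $b_2$ differ by about $6\cdot10^{-4}$. These are supplied by Lemma~\ref{lem:cert-q}, using high convergents of $\lgg3$, $\lgg5$ and $\lgg7$. Everything else in the argument is routine.
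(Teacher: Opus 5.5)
Your proposal is correct and follows the paper's proof step for step. You use the same second derivatives, $\bigl(\tfrac2\delta-\tfrac1{1-p_1}\bigr)/\ln2>0$ and $\bigl(\tfrac1{1-p_1}-\tfrac2{p_1}-\tfrac9{1-3p_1}\bigr)/\ln2<0$; the same endpoint values $\tfrac{\CC}4$ from Lemma~\ref{lem:exact-q} together with (A1), (A2), (S) from Lemma~\ref{lem:cert-q}; and the same placement of $\rho=\tfrac{23}{72}$ between the zeros. One wording fix: the set that is an interval for a convex function is the sublevel set $\{\Pi-\Theta\le0\}$, not the zero set, and uniqueness of the zero then holds because two zeros $b_1<c<\tfrac13$ together with $(\Pi-\Theta)(\tfrac13)<0$ would violate convexity at $c$.
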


\begin{proof}
$(\Pi-\Theta)''\ln2=\tfrac1{1/2-p_1}-\tfrac1{1-p_1}>0$ \textup{(}the
$-\mathrm I(p_1)$ terms cancel; the $\Theta$-terms contribute
$-\tfrac1{3(1-p_1)}-\tfrac2{3(1-p_1)}$\textup{)}, and
$(\Theta-N)''\ln2=\tfrac1{1-p_1}-\tfrac2{p_1}-\tfrac9{1-3p_1}<0$. The
values at $\tfrac14$ are Lemma~\ref{lem:exact-q}. A convex function
positive at $\tfrac14$ and negative at $\tfrac13$ has exactly one zero
there with the stated sign pattern (its derivative changes sign at most
once); likewise the concave difference in (ii). For (iii), the two
certified signs at $\rho$ place $\rho$ strictly between the zeros.
\end{proof}

\begin{theorem}[Quarter-window ceiling]\label{thm:quarter}
Let $\tfrac14\le p_1<\tfrac13$. Then, with $b_1<b_2$ as in
Lemma~\textup{\ref{lem:cross-q}},
\[
  R\;<\;W_3(p_1)\;:=\;
  \begin{cases}
    \ \Pi(p_1), & \tfrac14\le p_1\le b_1,\\[3pt]
    \ \Theta(p_1), & b_1\le p_1\le b_2,\\[3pt]
    \ N(p_1), & b_2\le p_1<\tfrac13 .
  \end{cases}
\]
Equivalently $R<\max(\Pi,\Theta,N)(p_1)$; the piecewise identification
holds because $\Pi\ge\Theta$ iff $p_1\le b_1$ and $\Theta\ge N$ iff
$p_1\le b_2$.
\end{theorem}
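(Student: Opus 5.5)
The theorem is an assembly of the structural results of this appendix, so I would prove it by collecting them. Fix a source with $\tfrac14\le p_1<\tfrac13$. By Lemma~\ref{lem:noburial-q}, the root is a group split, the leader is a depth-two leaf, and the right part has at least two symbols. So \eqref{eq:master-q} holds, and the only remaining case distinction is the size $k$ of the left part. This gives three exhaustive cases: $|L'|=1$, $|L'|=2$, and $|L'|\ge3$. In each case one of Propositions~\ref{prop:L1-q}, \ref{prop:L2-q}, \ref{prop:L3-q} bounds $R$ strictly by the maximum of a finite list of corner curves evaluated at $p_1$. These are:
\[
\begin{aligned}
&F(\cdot,\tfrac\delta3),\ F(\cdot,\tfrac{1-\cdot}4),\ \Theta,\ N_K,\ N &&(|L'|=1),\\
&\Pi,\ E_1,\ E_2 &&(|L'|=2),\\
&\Pi,\ \Lambda_1,\ \Lambda_2,\ \Theta',\ \Theta'' &&(|L'|\ge3).
\end{aligned}
\]
The strictness comes from those propositions, which inherit it from the strict child bounds of Corollary~\ref{cor:aff}, Theorem~\ref{thm:mid} and Corollary~\ref{cor:isob} through Lemma~\ref{lem:msup-q}.

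The second step is to collapse all these curves to $\max(\Pi,\Theta,N)$ using Lemma~\ref{lem:dom-q}, all on $[\tfrac14,\tfrac13)$:
\begin{itemize}
\item[] Part (a) puts $F(\cdot,\tfrac\delta3)$, $F(\cdot,\tfrac{1-\cdot}4)$ and $N_K$ below $\Theta$.
\item[] Part (b) puts $E_1,E_2,\Lambda_1,\Theta''$ below $\Pi(\tfrac13)\le\Pi(p_1)$; this uses that $\Pi$ is decreasing.
\item[] Part (c) puts $\Lambda_2$ and $\Theta'$ below $\Pi$.
\end{itemize}
$\Theta$, $N$ and $\Pi$ themselves appear directly. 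Hence in every case $R<\max(\Pi,\Theta,N)(p_1)$, which is the second, equivalent form of the statement.

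The third step identifies this maximum piecewise using Lemma~\ref{lem:cross-q}:
\begin{itemize}
\item[] $\Pi\ge\Theta$ exactly on $[\tfrac14,b_1]$, and $\Theta\ge N$ exactly on $[\tfrac14,b_2]$, with $b_1<b_2$.
\item[] On $[\tfrac14,b_1]$ we have $\Pi\ge\Theta\ge N$, so the maximum is $\Pi$.
\item[] On $[b_1,b_2]$ we have $\Theta\ge\Pi$ and $\Theta\ge N$, so the maximum is $\Theta$.
\item[] On $[b_2,\tfrac13)$ we have $N\ge\Theta\ge\Pi$, so the maximum is $N$.
\end{itemize}
This is exactly $W_3$, and the boundary values agree at $b_1$ and $b_2$, so $W_3$ is well defined.

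The main obstacle is not the assembly but confirming that the case split is exhaustive and that each proposition's hypotheses hold throughout $[\tfrac14,\tfrac13)$. The propositions were proved under window constraints such as $p_1>\tfrac15$ and $p_1\ge\tfrac14$, and I would recheck those against the theorem's range. I would also verify that the $p_2\le p_1$ sorting requirements are automatic, as asserted in Propositions~\ref{prop:L2-q} and~\ref{prop:L3-q}. The deeper difficulty lies in the certificates of Lemma~\ref{lem:cert-q}, especially the near-degenerate items (A1), (A2) and (S), which fix the ordering $b_1<b_2$. I would take these as given.
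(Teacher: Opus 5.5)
Your proposal is correct and follows the paper's proof step for step. Lemma~\ref{lem:noburial-q} gives \eqref{eq:master-q}, and Propositions~\ref{prop:L1-q}--\ref{prop:L3-q} split on the size of $L'$ to bound $R$ by the twelve corner curves; Lemma~\ref{lem:dom-q} collapses these to $\max(\Pi,\Theta,N)$, and Lemma~\ref{lem:cross-q} (using $b_1<b_2$ for the orderings on the three subintervals, which the paper leaves implicit) identifies that maximum piecewise as $W_3$.
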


\begin{proof}
By Lemma~\ref{lem:noburial-q} every source in the regime satisfies
\eqref{eq:master-q}, and by
Propositions~\ref{prop:L1-q}--\ref{prop:L3-q} (the three possible sizes
of $L'$),
\[
\begin{aligned}
  R\;<\;\max\Bigl\{&F\bigl(\cdot,\tfrac\delta3\bigr),
  F\bigl(\cdot,\tfrac{1-\cdot}4\bigr),\Theta,N_K,N,\\
  &\ \Pi,E_1,E_2,\Lambda_1,\Lambda_2,\Theta',\Theta''\Bigr\}(p_1).
\end{aligned}
\]
By Lemma~\ref{lem:dom-q} every listed curve except $\Pi,\Theta,N$ is
dominated by $\Theta$ or by $\Pi$ on all of $[\tfrac14,\tfrac13]$, so
$R<\max(\Pi,\Theta,N)$; Lemma~\ref{lem:cross-q} identifies the maximum
piecewise.
\end{proof}

\begin{proposition}[Domination]\label{prop:dom-q}
On $[\tfrac14,\tfrac13)$, $W_3(p_1)<p_1+\CC$, with margin at least
$\tfrac13+\CC-(2-\lgg3)=0.04841\ldots$ \textup{(}\textup{(A3)}\textup{)}
and as much as $0.176$ near $p_1\approx0.3196$. Moreover
$W_3(\tfrac14)=\tfrac14+\tfrac{\CC}2$ \textup{(}half the affine constant
is removed at the left endpoint\textup{)}, and
$\sup_{[\frac14,\frac13)}W_3=(2-\lgg3)^-$.
\end{proposition}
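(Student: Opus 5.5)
The plan is to treat the three pieces of $W_3$ separately. On each piece I use monotonicity or convexity of the curve, plus the exact values of Lemma~\ref{lem:exact-q} and the certificate (A3) of Lemma~\ref{lem:cert-q}. The left-endpoint identity is immediate: since $\tfrac14<b_1$ (Lemma~\ref{lem:cross-q}), $W_3(\tfrac14)=\Pi(\tfrac14)=\tfrac14+\tfrac{\CC}2$ by Lemma~\ref{lem:exact-q}.

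\emph{$\Pi$ piece, $[\tfrac14,b_1]$.} By Lemma~\ref{lem:dom-q}(b), $\Pi$ is strictly decreasing, so $p_1+\CC-\Pi(p_1)$ is strictly increasing. Its minimum is therefore at $\tfrac14$, where it equals $\tfrac{\CC}2=0.065\ldots$. This exceeds the claimed margin $\tfrac13+\CC-(2-\lgg3)$, because $\tfrac{\CC}2-(\tfrac13+\CC-2+\lgg3)=\tfrac53-\lgg3-\tfrac{\CC}2>0$, a power comparison of the type already used. Also $\Pi\le\Pi(\tfrac14)<2-\lgg3$ by (A3).

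\emph{$\Theta$ piece, $[b_1,b_2]$.} $\Theta$ is affine minus $\mathrm I$ of affine arguments, hence convex. So on $[\tfrac14,\tfrac13]$ it is at most $\max(\Theta(\tfrac14),\Theta(\tfrac13))$. By Lemma~\ref{lem:exact-q} these endpoint values are $\tfrac{1+\CC}4$ and $\tfrac{34}9-\tfrac53\lgg3-\tfrac{10}{27}\lgg5$, and I would certify that the first is the larger. Hence $\Theta\le\tfrac{1+\CC}4=0.2825\ldots$. Since $p_1\ge b_1>\tfrac{23}{72}$ (Lemma~\ref{lem:cross-q}(iii)), the margin $p_1+\CC-\Theta>\tfrac{23}{72}+\CC-\tfrac{1+\CC}4$ is comfortably larger than $0.0484$. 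The figure $0.176$ near $p_1\approx0.3196$ is just this difference evaluated numerically at $b_2$, where $p_1+\CC\approx0.4497$ and $\Theta(b_2)=N(b_2)\approx0.2735$.

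\emph{$N$ piece, $[b_2,\tfrac13)$.} $N$ is convex, so $p_1+\CC-N$ is concave and attains its infimum over $[b_2,\tfrac13]$ at an endpoint.
\begin{itemize}
\item At $b_2$: $N(b_2)=\Theta(b_2)\le\tfrac{1+\CC}4$, giving a large margin.
\item At $\tfrac13$: the continuous extension has $N(\tfrac13)=3-1-3\mathrm I(\tfrac13)=2-\lgg3$, so the value is exactly $\tfrac13+\CC-(2-\lgg3)$, which is positive by (A3).
\end{itemize}
This gives the stated margin. The same convexity gives $N\le\max(N(b_2),2-\lgg3)=2-\lgg3$. Equality holds only in the limit $p_1\to\tfrac13^-$, which is excluded, so the values approach $2-\lgg3$ from below.

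\emph{The supremum.} Combining the three pieces, $W_3<2-\lgg3$ on $[\tfrac14,\tfrac13)$, and $W_3\to2-\lgg3$ as $p_1\to\tfrac13^-$ by continuity of $N$. Hence $\sup W_3=(2-\lgg3)^-$.

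The only delicate point I expect is the certified comparison $\Theta(\tfrac13)<\Theta(\tfrac14)$. The values are $0.276$ against $0.2825$, so the margin is small but not tiny. I would settle it with the continued-fraction bounds of Lemma~\ref{lem:certify}; if that proves awkward, I would instead compare directly with the weaker bound $\Theta\le2-\lgg3$, which is enough for the supremum claim.
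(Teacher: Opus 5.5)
Your proof is correct and follows the paper's piecewise argument. The differences are local:

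\begin{itemize}
\item On the $\Theta$ piece you bound $\Theta\le\max(\Theta(\tfrac14),\Theta(\tfrac13))=\tfrac{1+\CC}4$ by convexity. This needs an extra certificate, which does hold with the coarse bounds $\lgg3>\tfrac{19}{12}$ and $\lgg5<\tfrac73$: $\Theta(\tfrac14)-\Theta(\tfrac13)=-\tfrac{91}{36}+\tfrac53\lgg3-\tfrac5{108}\lgg5>\tfrac1{324}$. So your fallback, which would lose the $0.0484$ margin on the last two pieces, is never needed. The paper instead uses concavity of $p_1+\CC-\Theta$, with (A2) at $\tfrac13$.
\item On the $N$ piece you use concavity of the margin and convexity of $N$. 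The paper instead computes the signs of $1-N'$ and $N'$ directly.
\end{itemize}

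There is one slip. Lemma~\ref{lem:cross-q}(iii) gives $b_1<\tfrac{23}{72}<b_2$, not $b_1>\tfrac{23}{72}$. Nothing depends on this, because $p_1\ge\tfrac14$ already gives a margin of at least $\tfrac14+\CC-\tfrac{1+\CC}4=\tfrac34\CC$ on the $\Theta$ piece.
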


\begin{proof}
On $[\tfrac14,b_1]$: $p_1+\CC-\Pi$ has derivative
$2-\lgg\tfrac{p_1}{1/2-p_1}\ge2-\lgg2>0$, so the margin is minimised at
$\tfrac14$, value $\tfrac{\CC}2$. On $[b_1,b_2]$: $p_1+\CC-\Theta$ is
concave \textup{(}$\Theta$ is convex\textup{)}, positive at $\tfrac14$
\textup{(}$\tfrac34\CC$\textup{)} and at $\tfrac13$ \textup{(}since
$\Theta(\tfrac13)<N(\tfrac13^-)=2-\lgg3$ by \textup{(A2)}, it exceeds
$\tfrac13+\CC-(2-\lgg3)>0$\textup{)}. On $[b_2,\tfrac13)$: $p_1+\CC-N$ has
derivative $1-N'=4+3\lgg\tfrac{1-3p_1}{p_1}$, and on
$[b_2,\tfrac13)\subset[\rho,\tfrac13)$ one has
$\tfrac{1-3p_1}{p_1}\le\tfrac{1-3\rho}{\rho}=\tfrac3{23}<2^{-4/3}$, so the margin
decreases to the limit $\tfrac13+\CC-(2-\lgg3)>0$ at $\tfrac13^-$
\textup{(}\textup{(A3)}\textup{)}. Finally $W_3$ increases to
$N(\tfrac13^-)=2-\lgg3$ along the third piece \textup{(}there
$N'=-3+3\lgg\tfrac{p_1}{1-3p_1}>0$, since $p_1>2(1-3p_1)$ for
$p_1>\tfrac27$\textup{)}, while on the first two pieces
$W_3\le\Pi(\tfrac14)<2-\lgg3$ \textup{(}$\Pi$ decreasing; on $[b_1,b_2]$,
$\Theta\le\max\bigl(\Theta(b_1),\Theta(b_2)\bigr)
=\max\bigl(\Pi(b_1),N(b_2)\bigr)$; and \textup{(A3)}\textup{)}.
\end{proof}

\begin{proposition}[The top piece is exact]\label{prop:tight-q}
For every $p_1\in[b_2,\tfrac13)$,
\[
  \sup\{R(p): p\text{ a source with largest symbol }p_1\}=N(p_1);
\]
the supremum is approached but not attained. For every
$p_1\in[\tfrac14,\tfrac13)$ the family in the proof shows the supremum is at
least $N(p_1)$.
\end{proposition}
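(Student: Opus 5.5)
The proof has two halves. The upper bound comes directly from Theorem~\ref{thm:quarter}: on $[b_2,\tfrac13)$ we have $W_3=N$, so $R<N(p_1)$ for every source. This also shows the supremum is not attained. It therefore remains to construct, for every $p_1\in[\tfrac14,\tfrac13)$, a family of sources with largest symbol $p_1$ whose redundancy tends to $N(p_1)=3-3p_1-\mathrm H(p_1,p_1,p_1,1-3p_1)$.

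\textbf{Guiding the construction.} The form of $N$ indicates the extremal configuration. In Proposition~\ref{prop:L1-q}, $N$ arose as $f(2p_1)$: the left part is $\{p_1,p_2\}$ with $p_2=p_1$, so $a=2p_1$. The right part has mass $\delta=1-2p_1$ and leader fraction $Q=p_1/\delta\ge\tfrac12$, and its redundancy is charged the isolation ceiling $V(Q)$. That ceiling is approached by a three-symbol block whose third symbol vanishes (Proposition~\ref{prop:tight}(a)). I would therefore take the four-symbol source
\[
  p_\varepsilon=\bigl(p_1,\ p_1,\ p_1,\ 1-3p_1\bigr)
\]
with the last symbol split into two: $1-3p_1-\varepsilon$ and $\varepsilon$, so that $p_\varepsilon=(p_1,p_1,p_1,1-3p_1-\varepsilon,\varepsilon)$. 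Sortedness needs $1-3p_1-\varepsilon\le p_1$, which holds because $p_1\ge\tfrac14$.

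\textbf{Computing the Fano tree.} First I would check that the root cut falls after position $2$. The prefix masses are $p_1$, $2p_1$, $3p_1,\dots$. The imbalance after position $2$ is $|4p_1-1|=4p_1-1<\tfrac13$. After position $1$ it is $1-2p_1\ge\tfrac13$, and after position $3$ it is $6p_1-1>4p_1-1$. So cut $2$ is the unique minimiser, and no tie rule is needed, except at $p_1=\tfrac14$, where the cuts must be compared directly. The left block $\{p_1,p_1\}$ is then balanced and contributes $0$. The right block $(p_1,1-3p_1-\varepsilon,\varepsilon)$ has mass $\delta$ and leader fraction $p_1/\delta\ge\tfrac12$, so by Lemma~\ref{lem:iso} it isolates its leader. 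This holds for every minimising cut. The remaining binary block has leader fraction $q_\varepsilon=(1-3p_1-\varepsilon)/(1-3p_1)\to1$.

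By Lemma~\ref{lem:rec},
\[
  R(p_\varepsilon)=\kp(4p_1-1)+\delta\bigl(1-\Hh(p_1/\delta)\bigr)+(1-3p_1)\bigl(1-\Hh(q_\varepsilon)\bigr).
\]
As $\varepsilon\to0$ this tends to $1-\Hh(2p_1)+\delta+(1-3p_1)-\delta\Hh(p_1/\delta)$. Applying the grouping identity, this limit equals $1+\delta+(1-3p_1)+2p_1-\mathrm H(p_1,p_1,p_1,1-3p_1)$, which I would check simplifies to $N(p_1)$. This is the same algebra as in the evaluation $f(2p_1)=N(p_1)$ in Proposition~\ref{prop:L1-q}.

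\textbf{Main obstacle.} The main obstacle is verifying the root cut uniformly in $\varepsilon$ and at the endpoint $p_1=\tfrac14$. At $p_1=\tfrac14$ the source degenerates toward a uniform source, $4p_1-1=0$, and the sortedness condition $1-3p_1-\varepsilon\le p_1$ becomes tight. The limit there, $N(\tfrac14)=\tfrac14$, should come out directly once the cut is confirmed. I would also double-check the grouping simplification to $N$, since the precise bookkeeping of the $2-q$ term in $V$ is where an error would most likely enter.
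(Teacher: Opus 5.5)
Your proposal is correct and follows the paper's proof essentially verbatim: the same five-symbol family $(p_1,p_1,p_1,\cdot,\cdot)$ with a vanishing last symbol (the paper splits $1-3p_1$ as $(1-3p_1)(1-\varepsilon),(1-3p_1)\varepsilon$, you as $1-3p_1-\varepsilon,\varepsilon$), the same root cut after position $2$ with the right child isolating its leader, two applications of Lemma~\ref{lem:rec} with grouping to reach $N(p_1)$, and the strict upper bound from Theorem~\ref{thm:quarter}. The endpoint $p_1=\tfrac14$ you flag is not actually an obstacle, since there the cut after position $2$ has imbalance $0$ and is the unique minimiser. Cuts after position $3$ or later are excluded because the signed imbalance increases along the list, which covers the fourth cut that the paper checks explicitly.
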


\begin{proof}
Fix $p_1\in[\tfrac14,\tfrac13)$ and $0<\varepsilon<\tfrac12$, and consider
\[
  p^{(\varepsilon)}
  =\bigl(p_1,p_1,p_1,(1-3p_1)(1-\varepsilon),(1-3p_1)\varepsilon\bigr).
\]
It is sorted because $1-3p_1\le p_1$. The root prefix imbalances at the first
three cuts are
\[
  1-2p_1,\qquad |4p_1-1|,\qquad 6p_1-1,
\]
and the second is minimal for $\tfrac14\le p_1<\tfrac13$; the fourth cut has
imbalance $1-2(1-3p_1)\varepsilon\ge\tfrac12$, while
$|4p_1-1|\le\tfrac13$. Thus the root split is
$(p_1,p_1)\,|\,(p_1,(1-3p_1)(1-\varepsilon),(1-3p_1)\varepsilon)$. The left
child has zero redundancy. In the right child the leader fraction is
$p_1/(1-2p_1)\ge\tfrac12$, so it isolates; the remaining pair has redundancy
$1-\Hh(\varepsilon)$. Applying Lemma~\ref{lem:rec} twice gives
\[
\begin{aligned}
  R(p^{(\varepsilon)})
  =\ &1-\Hh(2p_1)\\
  &+(1-2p_1)\Bigl(1-\Hh\Bigl(\frac{p_1}{1-2p_1}\Bigr)\Bigr)\\
  &+(1-3p_1)\bigl(1-\Hh(\varepsilon)\bigr).
\end{aligned}
\]
Letting $\varepsilon\downarrow0$ and grouping entropy terms,
\[
  R(p^{(\varepsilon)})\longrightarrow
  3-3p_1-\mathrm H(p_1,p_1,p_1,1-3p_1)=N(p_1).
\]
For $p_1\in[b_2,\tfrac13)$, Theorem~\ref{thm:quarter} gives the matching
strict upper bound $R<N(p_1)$ for every finite source, so the supremum is
exactly $N(p_1)$ and is not attained.
\end{proof}

\section{Certification of Scalar Inequalities}\label{app:cert}

Every inequality between explicit constants in this paper is of one of two
kinds. The \emph{point} inequalities (the values of Lemma~\ref{lem:exact-q} and
the certificates of Lemma~\ref{lem:cert-q}) are signs of constants. The
\emph{interval} inequalities (the domination steps of
Appendices~\ref{sec:affine}, \ref{sec:midcap}, and~\ref{sec:quarter}) compare
two functions on an interval and are reduced to endpoint values, but only after
the sign of a second derivative is computed.

\begin{lemma}[Certification method]\label{lem:certify}
\textup{(i)} A constant $c_0+\sum_i c_i\lgg m_i$ with $c_i\in\mathbb Q$ and
$m_i\in\mathbb Z_{>0}$ has a sign decided by finitely many comparisons of integer
powers, since $\lgg m\gtrless\tfrac ab\iff m^{\,b}\gtrless 2^{\,a}$; and
$\mathrm I(a/b)=\tfrac ab(\lgg b-\lgg a)$, so any of the curves evaluated at a
rational point is of this form.

\textup{(ii)} Let $g(x)=\ell(x)+\sum_j c_j\,\mathrm I\bigl(u_j(x)\bigr)$ on
$[\alpha,\beta]$, with $\ell$ and the $u_j$ affine, $c_j\in\mathbb Q$, and
$u_j>0$. Then
\[
  g''(x)\;=\;-\frac1{\ln2}\sum_j\frac{c_j\,s_j^{\,2}}{u_j(x)},
  \qquad s_j:=u_j'.
\]
If $g''\le0$ throughout $[\alpha,\beta]$ then $g$ is concave and
$g\ge\min\{g(\alpha),g(\beta)\}$; if $g''\ge0$ throughout then $g$ is convex and
$g\le\max\{g(\alpha),g(\beta)\}$. The endpoint values are constants of the form
in \textup{(i)}.
\end{lemma}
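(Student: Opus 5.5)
Part (i) reduces everything to integer comparisons. Write $\lgg m=\ln m/\ln2$. If $c_0+\sum_i c_i\lgg m_i$ involves a single logarithm, say $c_0+c_1\lgg m$ with $c_1>0$, then its sign is the sign of $\lgg m-(-c_0/c_1)$. Writing $-c_0/c_1=a/b$ with $b>0$, monotonicity of $t\mapsto 2^t$ gives $\lgg m\gtrless a/b$ iff $m^b\gtrless 2^a$, which is a comparison of integers (clearing a negative $a$ by moving the power across). With several logarithms, clear denominators to get integer coefficients. Then $\sum_i c_i\lgg m_i\gtrless -c_0$ is equivalent to $\prod_i m_i^{c_i}\gtrless 2^{-c_0}$, after raising to a common denominator so that $c_0$ is an integer. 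Moving the negative exponents to the other side makes this again a comparison of two products of integer powers, decided in finitely many steps. The identity $\mathrm I(a/b)=\tfrac ab(\lgg b-\lgg a)$ is immediate from $\lgg(a/b)=\lgg a-\lgg b$. Hence each curve of the paper, being affine in $p_1$ plus rational multiples of $\mathrm I$ at affine (rational-coefficient) arguments, takes at a rational point a value of the stated form. I would remark that in practice one substitutes convergent bounds (as in Lemma~\ref{lem:cert-q}) so each step is a single power comparison; exactness is not needed for the claim.

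Part (ii) is a direct differentiation. On $[\alpha,\beta]$ each $u_j>0$, so $\mathrm I(u)=-u\ln u/\ln2$ is smooth there, with $\mathrm I'(u)=-(\ln u+1)/\ln2$ and $\mathrm I''(u)=-1/(u\ln2)$. By the chain rule with $u_j''=0$ and $\ell''=0$,
\[
g''(x)=\sum_j c_j s_j^{2}\,\mathrm I''(u_j(x))=-\frac1{\ln2}\sum_j\frac{c_js_j^{2}}{u_j(x)}.
\]
If $g''\le0$ on the interval, $g$ is concave. Writing $x=(1-t)\alpha+t\beta$ gives $g(x)\ge(1-t)g(\alpha)+tg(\beta)\ge\min\{g(\alpha),g(\beta)\}$. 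The convex case is symmetric, giving $g(x)\le\max\{g(\alpha),g(\beta)\}$. The endpoint values are of form (i) as soon as $\alpha,\beta$ are rational, which holds in every use in the paper, and $\ell,u_j$ have rational coefficients.

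Nothing here is a genuine obstacle. The only care point is the boundary: some $u_j$ vanish at an endpoint (for example $\delta=0$ at $p_1=\tfrac12$). There I would argue on the open interval and extend by continuity, using $\mathrm I(0)=0$ and the continuity of $\mathrm I$ at $0$, so the concavity or convexity conclusion and the endpoint bound persist on the closed interval.
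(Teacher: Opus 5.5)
Your proof is correct. Part (ii) is the paper's argument: the same computation $g''=\sum_j c_js_j^2\,\mathrm I''(u_j)$ with $\mathrm I''(u)=-1/(u\ln2)$, followed by the endpoint property of concave and convex functions. You also spell out the convex-combination step that the paper leaves implicit.

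In part (i) you take a somewhat different route. The paper decides signs one logarithm at a time. It replaces each $\lgg m_i$ by a rational bound $a/b$, certified by the single power comparison $m^b\gtrless 2^a$, and refines with continued-fraction convergents until the interval for the constant excludes $0$. You instead clear all denominators and compare $\prod_i m_i^{e_i}$ with $2^{-e_0}$ exactly, with integer exponents, in one comparison.

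Your version is exact and always terminates. It even decides whether the constant is $0$, which refining bounds on individual logarithms can never certify. The paper's version keeps the integers small ($3^{12}>2^{19}$, $5^3<2^7$), and it is what Lemma~\ref{lem:cert-q} actually executes. Since only strict signs are claimed there, the convergent approach terminates in every case the paper needs.

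There are two small points. First, the affine parts $\ell$ of the paper's curves do not have rational coefficients: they contain $\CC=4-\tfrac53\lgg5$ and $K=\tfrac52-\tfrac56\lgg5$. What you need is only that these coefficients lie in $\mathbb Q+\mathbb Q\lgg5$, which still makes values at rational points of the form (i). Second, your continuity remark is a genuine improvement over the paper's proof. The lemma assumes $u_j>0$ on the closed interval, yet the paper applies it on $[\tfrac13,\tfrac12]$ with $u_j=\delta$ vanishing at $\tfrac12$, checking $g''$ only on $[\tfrac13,\tfrac12)$. It likewise applies it with $1-3p_1$ vanishing at $\tfrac13$ in Lemma~\ref{lem:cross-q}. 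Extending concavity or convexity from the open interval by continuity of $\mathrm I$ at $0$ is exactly what those uses require.
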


\begin{proof}
\textup{(i)} is the two displayed identities. \textup{(ii)}:
$\mathrm I(u)=-u\lgg u$ has $\mathrm I''(u)=-1/(u\ln2)$, so
$g''=\sum_j c_j s_j^{\,2}\,\mathrm I''(u_j)$ equals the stated sum; a concave
function on an interval attains its minimum, and a convex function its maximum,
at an endpoint.
\end{proof}

\begin{remark}[The direction is essential]
Part \textup{(ii)} bounds $g$ only in the direction its concavity allows: a
\emph{lower} bound from the endpoints requires $g$ concave, an \emph{upper} bound
requires $g$ convex. Endpoint positivity of a \emph{convex} $g$ does not imply
positivity inside; for example $\varepsilon-\Hh(x)=\varepsilon-\mathrm I(x)
-\mathrm I(1-x)$ is convex and equals $\varepsilon>0$ at $x=0,1$ but
$\varepsilon-1<0$ at $x=\tfrac12$. Accordingly each interval comparison in
Appendices~\ref{sec:affine}--\ref{sec:quarter} first exhibits the sign of $g''$
from the display above and then uses only the matching endpoint bound: the
concave differences give the lower bounds
(Lemmas~\ref{lem:midcomp},~\ref{lem:envdom},~\ref{lem:dom-q}), the convex ones
the upper bounds (the induction of Appendix~\ref{sec:affine} and
Lemma~\ref{lem:B2}). Where the $c_j s_j^{\,2}$ share a sign the sign of $g''$ is
immediate; otherwise it is a rational function, whose sign on $[\alpha,\beta]$ is
checked directly.
\end{remark}

For a point inequality, certificate \textup{(A3)} asserts
$\tfrac13+\CC-(2-\lgg3)=\tfrac73+\lgg3-\tfrac53\lgg5>0$; by \textup{(i)}, with
$\lgg3>\tfrac{19}{12}$ \textup{(}$3^{12}>2^{19}$\textup{)} and
$\lgg5<\tfrac73$ \textup{(}$5^{3}<2^{7}$\textup{)},
\[
  \tfrac73+\lgg3-\tfrac53\lgg5\ >\ \tfrac73+\tfrac{19}{12}-\tfrac53\cdot\tfrac73
  \ =\ \tfrac1{36}\ >\ 0 .
\]
Each of the other point inequalities of Lemmas~\ref{lem:exact-q}
and~\ref{lem:cert-q} reduces in the same way to a comparison of integer powers,
the near-degenerate cases using higher convergents of the logarithms; every
reduction is a finite computation. Each interval comparison of
Appendices~\ref{sec:affine}--\ref{sec:quarter} is certified individually by the
second-derivative display in its own proof, which exhibits $\mathrm{sign}\,g''$
on the relevant interval, and its endpoint values are point inequalities of the
same kind.

\end{document}